\newcommand{\st}[1]{\ensuremath{^{\scriptstyle \textrm{#1}}}}
\newcommand\bigcheck[1]{#1 \raise1ex\hbox{$\hspace{-1ex}{}^\vee$}}
\newcommand\sucheck[1]{#1 \raise0.5ex\hbox{$\hspace{-1ex}{}^\vee$}}
\newcommand{\alphaparenlist}{
  \renewcommand{\theenumi}{\alph{enumi}}%
  \renewcommand{\labelenumi}{(\theenumi)}%
}
\newcommand{\arabiclist}{
  \renewcommand{\theenumi}{\arabic{enumi}}%
  \renewcommand{\labelenumi}{\theenumi.}%
}
\newcommand{\romanparenlist}{
  \renewcommand{\theenumi}{\roman{enumi}}%
  \renewcommand{\labelenumi}{(\theenumi)}%
}
\newcommand{\romanlistii}{
  \renewcommand{\theenumii}{\roman{enumii}}%
  \renewcommand{\labelenumii}{(\theenumii)}%
}
\newcommand{\ad}{\mathop{\rm ad}\,}
\newcommand{\Aut}{{\rm Aut}}
\newcommand{\ch}{{\rm ch}}
\newcommand{\End}{\mathop{\rm End }}
\newcommand{\Gr}{{\rm Gr}}
\newcommand{\im}{\mathop{\rm im  \, }}
\newcommand{\mult}{{\rm mult}}
\renewcommand{\ne}{\mathop{\rm ne}\,}
\newcommand{\new}{\mathop{\rm new}\,}
\newcommand{\re}{\mathop{\rm re  \, }}
\newcommand{\Res}{\mathop{\rm Res  \, }}
\newcommand{\Sh}{{\rm Sh}}
\renewcommand{\sl}{s\ell}
\newcommand{\str}{{\rm str}}
\newcommand{\tr}{\rm tr \, }
\newcommand{\tw}{\rm tw \, }
\newcommand{\vac}{|0\rangle}
\newcommand{\C}{\mathcal{C}}
\renewcommand{\O}{\mathcal{O}}
\newcommand{\CC}{\mathbb{C}}
\newcommand{\NN}{\mathbb{N}}
\newcommand{\QQ}{\mathbb{Q}}
\newcommand{\RR}{\mathbb{R}}
\newcommand{\ZZ}{\mathbb{Z}}
\newcommand{\fg}{\mathfrak{g}}
\newcommand{\fh}{\mathfrak{h}}
\newcommand{\fn}{\mathfrak{n}}
\renewcommand{\hat}{\widehat}
\renewcommand\section{\@startsection {section}{1}{\z@}%
                                   {-3.5ex \@plus -1ex \@minus -.2ex}%
                                   {2.3ex \@plus.2ex}%
                                   {\normalfont\large\bfseries}}
\renewcommand\subsection{\@startsection{subsection}{2}{\z@}%
                                     {-3.25ex\@plus -1ex \@minus -.2ex}%
                                     {0ex \@plus .0ex}%
                                     {\normalfont\normalsize\bfseries}}
\newtheorem{theorem}{Theorem}[section]
\newtheorem{definition}{Definition}[section]
\newtheorem{lemma}{Lemma}[section]
\newtheorem{corollary}{Corollary}[section]
\newtheorem{proposition}{Proposition}[section]
\newtheorem*{lemma*}{Lemma}
\theoremstyle{remark}
\newtheorem{remark}{Remark}[section]
\newtheorem{examples}{Examples}[section]
\def\@maketitle{\newpage
 \null
 \vskip 2em
 \begin{center}%
  \vskip 3em
  {\Large\bf \@title \par}%
  \vskip 1.5em
  {\normalsize
   \lineskip .5em
   \begin{tabular}[t]{c}\@author
   \end{tabular}\par}%
  \vskip 2em

 \end{center}%
 \par
 \vskip 2.5em}
\begin{document}

\title{On rationality of $W$-algebras}

\author{Victor G. Kac\thanks{Department of Mathematics, M.I.T.,
    Cambridge, MA 02139,
    USA.~~kac@math.mit.edu}~~\thanks{Supported in part by NSF
    grant   DMS-0501395.}~~and Minoru Wakimoto\thanks{~~wakimoto@r6.dion.ne.jp}
~~\thanks{Supported in part by Department of Mathematics MIT,
Clay Mathematical Institute, grant-in-aid for scientific research
A-14204003, and the 21st century COE program at RIMS.}\\[4ex]
Dedicated to Bertram Kostant on his 80\st{th} birthday.
}

\maketitle

\noindent{\textbf{Absract:}}  We study the problem of
classification of triples ($\fg ,f, k$), where $\fg$ is a simple Lie algebra, 
$f$ its nilpotent element and $k \in \CC$, for which the simple $W$-algebra 
$W_k (\fg ,f)$ is rational.

\vspace{2ex}

\section{Introduction}
\label{sec:intro}

A vertex algebra $V$, used to construct a rational conformal field theory,
must satisfy at least the following three conditions:

(a) V has only finitely many irreducible representations
$\{M_j \}_ {j \in J}$,

(b) the normalized characters  
$\chi_j(\tau) = \tr_{M_j}e^{2\pi i \tau (L_0 - c/24)}$ 
converge to holomorphic 
functions on the complex upper half-plane $\CC^+$,

(c) the functions $\{\chi_j(\tau)\}_ {j \in J}$ span an $SL_2 (\ZZ)$-invariant
space.

A vertex algebra $V$ is called rational if it satisfies these three 
properties. Recall that any semisimple vertex operator algebra,
i.e. a vertex operator algebra for which any representation is completely
reducible, is rational (see \cite{Z}, \cite{DLM}).
However, it is unclear how to verify this condition for vertex algebras,
considered in the present paper.    

It is well known that lattice vertex algebras, associated to even
positive definite lattices, are rational (and semisimple as well).  
A simple Virasoro
vertex algebra is rational (and semisimple as well) iff its central 
charge is of the
form $c=1-\frac{6(p-p')^2}{pp'}$, where $p,p'$ are relatively
prime integers, greater than~$1$ (which are central charges of
the so called minimal models \cite{BPZ}). A simple affine vertex
algebra 
$V_k (\fg)$, 
attached to a simple Lie algebra $\fg$
is rational (and semisimple as well) iff its level~$k$ is a non-negative
integer.  (Simplicity of a vertex operator algebra is a necessary, but by
far not sufficient, condition of semisimplicity.)

It follows from 
\cite{KW1}, Theorems 3.6 and 3.7, and \cite{KW2}, Remark 4.3(a), 
that for a rational $k$ of the form
\begin{equation}  
\label{eq:0.1}  
k =- h^{\vee} + \frac{p}{u}, \hbox{  where  }  
(p,u) =1 \, , \, u \ge 1 \, , \, (u,\ell)=1 (\hbox{resp.}=\ell)\, , 
p \ge h^\vee (\hbox{resp.} \ge h)\, , \, 
\end{equation}
where $h$ is the Coxeter number, $h^{\vee}$ is the dual Coxeter number and 
$\ell (=1,2 \hbox{ or } 3)$ 
is the ``lacety'' of $\fg$, the normalized character of the vertex algebra
$V_k(\fg)$ is a modular function (conjecturally, these are all $k$ with
this property). It was shown in \cite{KW1}, Theorem 3.6, that for these $k$
with $(u,\ell)=1$ the affine Lie algebra $\hat{\fg}$ 
has a finite set of irreducible highest weight modules
$\{ M_j \}_{j \in J}$ (called admissible), whose regularized normalized 
characters 
$\chi_j(\tau,z) = \tr_{M_j}e^{2\pi i \tau (L_0 - c/24)+z}$,
$z \in \fg$, span an $SL_2 (\ZZ)$-invariant space (if $ (u,\ell)=\ell$,
then one has only $\Gamma_0 (\ell)$-invariance, see \cite{KW2}, Remark 4.3(a)). 
Conjecturally, these $\hat{\fg}$-modules extend to 
$V_k (\fg)$ and are all of its irreducible
modules (this conjecture was proved in \cite{AM} for $\fg = \sl_2$).   
Thus, $V_k (\fg)$ for $k$ of the form (\ref{eq:0.1}) with $(u,\ell)=1$
satisfy the properties 
(a) and (c) of rationality, but property (b) fails for some $j \in J$ since
$\chi_j(\tau,z)$ may have a pole at $z=0$, unless $k$ is a non-negative
integer.
  
In the present paper we study the problem of rationality of
simple $W$-algebras $W_k (\fg , f)$, which is a family of vertex
algebras, depending on $k \in \CC$, attached to a simple Lie
algebra~$\fg$ and a nilpotent element $f$ of $\fg$ (rather its
conjugacy class) \cite{KRW}, \cite{KW3}.  
More precisely, we need to analyze for which
triples $(\fg , f,k)$ the $W_k (\fg ,f)$-modules, obtained by the
quantum Hamiltonian reduction from admissible modules of
level~$k$ over the affine Lie algebra~$\hat{\fg}$, have convergent
characters, as the modular invariance property is preserved by this reduction. 
We call such $f$ an exceptional nilpotent and such $k$ an
exceptional level.  


The most well studied case of $W$-algebras is that corresponding
to the principal nilpotent element $f$ (a special case of which
for $\fg = \sl_2$ is the Virasoro vertex algebra).  It follows
from \cite{KW1} and \cite{FKW} that for principal $f$ the
exceptional levels $k$ are given by
\begin{equation}  
\label{eq:0.2}  
k =- h^{\vee} + \frac{p}{u}, \hbox{  where  }  (p,u) =1 \, , 
\, p \ge h^\vee \, , \, u \ge h \, , \, (u,\ell)=1 \, .
\end{equation}
%
We expect that for the principal nilpotent $f$,
(\ref{eq:0.2}) 
are precisely the values of $k$, for which $W_k
(\fg ,f) $ is a semisimple vertex algebra.

Surprisingly, beyond the principal nilpotent, there are very few
exceptional nilpotents.  We conjecture that there exists an order
preserving map of the set of non-principal
exceptional nilpotent orbits of $\fg$ to the set of positive integers,
relatively prime to~$\ell$ and smaller than $h$, such that 
the corresponding integer~$u$ is the only denominator of 
an exceptional level~$k=-h^\vee +p/u$, where $p \ge h^\vee$ and $(u,p)=1$.

Note that $f=0$ is an exceptional nilpotent, corresponding to $u
=1$, since in this case $W_k (\fg ,0)$ is the simple affine
vertex algebra of level $k \in \ZZ_+$.

We prove the above conjecture for $\fg \simeq \sl_n$.  In this
case the above map is bijective to the set
$\{ 1,2,\ldots , h-1 \}$,  and the exceptional
nilpotent, corresponding to the positive integer $u \leq h=n$, is
given by the partition $n=u+\cdots + u+s$, where $0 \le s<u$.

For an arbitrary simple $\fg$ we give a geometric description of the 
exceptional pairs $(k,f)$ in terms of $\fg$ and its adjoint group.

We are grateful to K. Bauer, A. Elashvili and A. Premet
for very useful discussions on nilpotent orbits, and to ESI and IHES
for their hospitality. The results of the paper were reported at the Weizmann 
Institute in January 2007 and at a conference in Varna in June 2007.


\section{Admissible modules over affine Lie algebras.}
\label{sec:1}

\subsection{Description of the vacuum admissible weights.}~~
\label{sec:1.1}
Let $\hat{\fg}$ be a (non-twisted) affine Lie algebra, associated
to a simple finite-dimensional Lie algebra $\fg$ \cite{K1}.  Recall
that
\begin{displaymath}
  \hat{\fg} = \fg [t,t^{-1}]\oplus \CC K \oplus \CC D 
\end{displaymath}
with commutation relations:
\begin{eqnarray*}
  [at^m,bt^n] &=& [a,b]t^{m+n} + m \,\delta_{m,-n} (a|b)K\, , \\
 {} [D, at^m] &=& mat^m \, , \, [K,\hat{\fg}] =0 \, , 
\end{eqnarray*}
where $(\, . \, | \, . \, )$ denotes the symmetric invariant
bilinear form on $\fg$, normalized by the condition $(\alpha |
\alpha) =2$ for long roots $\alpha$.

Recall that the bilinear form $(\, . \, | \, . \, )$ extends from
$\fg$ to a symmetric invariant bilinear form on $\hat{\fg}$ by
letting 
\begin{eqnarray*}
  (t^m a|t^n b) = \delta_{m,-n} (a|b) \,, \, 
     (\fg [t,t^{-1}] | \CC K + \CC D) =0 \, , \\
     (K|K) = (D|D) =0 \, , \quad (K|D)=1 \, .
\end{eqnarray*}

Choose a Cartan subalgebra $\fh$ of $\fg$, then
\begin{displaymath}
  \hat{\fh} = \fh + \CC K + \CC D
\end{displaymath}
is an ad-diagonalizable subalgebra of $\hat{\fg}$, called its
Cartan subalgebra.  Since the restriction of the bilinear form
$(\, . \, | \, . \, )$ to $\hat{\fh}$ is non-degenerate, we can
(and often will) identify $\hat{\fh}^*$ with $\hat{\fh}$, using
this form.  Given $\alpha \in \hat{\fh}$ such that $(\alpha |
\alpha) \neq 0$, we denote $\alpha^\vee = 2 \alpha /(\alpha
|\alpha)$, unless otherwise specified.

Let $\Delta \subset \fh^*$ be the set of roots of $\fg$, choose a
subset of positive roots $\Delta_+$, and let $\prod = \{ \alpha_1
, \ldots ,\alpha_r \}$ be the set of simple roots, where $r$ is the
rank of $\fg$.  Let
$\Delta_+^\vee = \{ \alpha^\vee |\alpha \in \Delta_+ \}$ be the set
of positive coroots,  $\prod^\vee = \{ \alpha^\vee_1,\ldots ,
\alpha^\vee_r \}$ the set of simple coroots of $\fg$.  Define, as
usual, $\rho$ and $\rho^\vee$ by:
\begin{displaymath}
(\rho | \alpha^\vee_i) = 1 \, , \quad (\rho^\vee | \alpha_i) =1 
   \, , \quad i=1,\ldots , r \, .
\end{displaymath}

Recall that the sets of roots $\hat{\Delta}$ and coroots
$\hat{\Delta}^\vee$ of $\hat{\fg}$ are
\begin{displaymath}
  \hat{\Delta} = \hat{\Delta}^{\re} \cup \hat{\Delta}^{\im} \, , \,
     \hat{\Delta}^\vee = \hat{\Delta}^{\vee,\re} \cup
        \hat{\Delta}^{\im}\, , \, 
\end{displaymath}
where the sets of real roots and coroots are
\begin{displaymath}
  \hat{\Delta}^{\re} = \{ \alpha +nK | \alpha \in \Delta ,n\in\ZZ\}
    \, , \, \hat{\Delta}^{\vee,\re}=
      \{ \alpha^\vee | \alpha \in 
         \hat{\Delta}^{\re} \}\, ,
\end{displaymath}
and the set of imaginary roots (resp. coroots) is
\begin{displaymath}
  \hat{\Delta}^{\im} = \{ nK  | n \in \ZZ \, , \, n\neq 0 \}\, ,
\end{displaymath}
the subsets of positive roots and coroots being:
\begin{eqnarray*}
  \hat{\Delta}^{\re}_+ &=& \Delta_+ \cup \{ \alpha + nK | 
      \alpha \in \Delta ,n>0 \} \, , \, \hat{\Delta}^{\vee ,\re}_+
       =\{\alpha^{\vee}|
        \alpha \in \hat{\Delta}^{\re}_+ \} \, , \\
   \hat{\Delta}^{\im}_+ &=& \{ nK |n>0 \} \, , \, 
    \hat{\Delta}_+ = \hat{\Delta}^{\re}_+ \cup 
       \hat{\Delta}^{\im}_+ \, , \, \hat{\Delta}^\vee_+ =
       \hat{\Delta}^{\vee ,\re}_+ \cup \hat{\Delta}^{\im}_+ \, .
\end{eqnarray*}
Recall that the sets of simple roots (resp. coroots) in
$\hat{\Delta}_+$ (resp. $\hat{\Delta}^\vee_+$) then are:
\begin{displaymath}
  \hat{\prod} = \{ \alpha_0 = K-\theta \, , \, 
  \alpha_1 ,\ldots ,\alpha_r \} \, , \, 
    \hat{\prod}^\vee = \{ \alpha^\vee_0 = K-\theta^\vee \, , \, 
      \alpha^\vee_1 ,\ldots , \alpha^\vee_r \} \, , 
\end{displaymath}
where $\theta$ is the highest root in
$\Delta_+$.  The positive integers 
\begin{displaymath}
h=(\rho^\vee | \theta ) +1
\hbox{   and   } h^\vee = (\rho |\theta^\vee) +1
\end{displaymath}
are called the Coxeter number and the dual Coxeter number of
$\fg$, respectively.

Defining $\hat{\rho} = h^\vee D + \rho$, $\hat{\rho}^\vee =h D +
\rho^\vee \in \hat{\fh}$, we  have the usual formulas:
\begin{displaymath}
  (\hat{\rho} |\alpha_i^\vee) =1 \, , \, 
     (\hat{\rho}^\vee |\alpha_i) =1 \, , \, i= 0,1,\ldots ,r \, .
\end{displaymath}

Given $\lambda \in \hat{\fh}^*$, let
\begin{displaymath}
  \hat{\Delta}^\vee_{\lambda} = \{ \alpha^\vee \in
  \hat{\Delta}^\vee | (\lambda | \alpha^\vee) \in \ZZ \} \, .
\end{displaymath}
Let $\hat{\Delta}^\vee_{\lambda +} = \hat{\Delta}^\vee_\lambda
\cap \hat{\Delta}^\vee_+$ and let $\hat{\prod}^\vee_\lambda$ be the set
of simple roots in $\hat{\Delta}^\vee_{\lambda +}$.

\begin{definition}[\cite{KW1}]

A weight $\lambda \in \hat{\fh}^*$ is called \emph{admissible} if
the following two conditions hold:
\begin{subequations}
\begin{equation}
  \label{eq:1.1a}
 (\lambda + \hat{\rho} |\alpha^\vee ) \notin 
 \{ 0,-1,-2,\ldots \}
    \hbox{  for all  } \alpha^\vee \in \hat{\Delta}^\vee_+ \, ,
    \end{equation}
\begin{equation}
  \label{eq:1.1b}
  \hbox{the  } \QQ\hbox{-span of  } \hat{\Delta}^\vee_{\lambda}
  \hbox{  contains  } \hat{\Delta}^\vee \, .
\end{equation}
\end{subequations}
\end{definition}

We proved in \cite{KW1} that the (normalized) character of an
irreducible highest weight $\hat{\fg}$-module with highest weight
$\lambda$ satisfies a modular invariance property if
$\lambda$ is an admissible weight (and conjectured that for no
other $\lambda$ this property holds).  Admissible weights were
completely classified in \cite{KW1}, and this classification will be
used in the paper.  The following proposition describes the
``vacuum'' admissible weights.

\begin{proposition}
  \label{prop:1.1}

%
%
%
For $k \in \CC$ the weight $\lambda=kD$ is admissible if and only if
$k$ satisfies the following properties (a) and (b):

\alphaparenlist
\begin{enumerate}
\item 
$k+h^\vee = \frac{p}{u}$, where $p,u \in \NN$ , $(p,u)=1$;

\item 
one of the following possibilities holds:
\begin{list}{}{}
\item (i)~~$(u,\ell) =1$ and $p \ge h^\vee$ (in this case
  $\hat{\prod}^\vee_\lambda =\{ uK-\theta^\vee\, , \, \alpha^\vee_1
  ,\ldots ,\alpha^\vee_r\}$),

\item (ii)~$u \in \ell\,  \NN$ and $p \ge h$ (in this case
  $\hat{\prod}^\vee_{\lambda} =\{ uK -\theta^\vee_s \, , \,
    \alpha^\vee_1 ,\ldots , \alpha^\vee_r \}$).
\end{list}

\end{enumerate}
Here $\ell$ is the lacety of $\fg $ (i.e.,~$\ell = 1$ for $A-D-E$
types,  $\ell =2$ for $B-C-F$ types, and $\ell =3$ for $G_2$),
and $\theta_s$ is the highest among short roots in $\Delta_+$.

\end{proposition}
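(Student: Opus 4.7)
The approach is to unpack conditions (1.1a) and (1.1b) directly for $\lambda = kD$, exploiting that $\lambda + \hat{\rho} = (k+h^\vee)D + \rho$ is completely explicit. As a first step I would tabulate the relevant pairings. For a real positive root $\alpha = \beta + mK$ (with $\beta \in \Delta$, $m \geq 0$, and $\beta \in \Delta_+$ when $m = 0$), its coroot is $\alpha^\vee = \beta^\vee + m c_\beta K$, where $c_\beta = 1$ if $\beta$ is long and $c_\beta = \ell$ if $\beta$ is short. A direct computation using the invariant form yields
\begin{displaymath}
(\lambda + \hat{\rho} \,|\, \alpha^\vee) = (\rho \,|\, \beta^\vee) + m c_\beta (k+h^\vee),
\end{displaymath}
while for an imaginary positive coroot $mK$ the pairing is simply $m(k+h^\vee)$.

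Next I would extract condition (a). The requirement $m(k+h^\vee) \notin \{0,-1,-2,\ldots\}$ for all $m \in \NN$ forces $k+h^\vee > 0$, and condition (1.1b) forces $k+h^\vee$ to be rational: otherwise $\hat{\Delta}^\vee_\lambda$ contains no nonzero multiple of $K$, so its $\QQ$-span lies in $\fh$ and misses $K$. Writing $k+h^\vee = p/u$ in lowest terms then gives (a). Turning to (1.1a) on the real coroots, the pairing $(\rho \,|\, \beta^\vee) + m c_\beta p/u$ is either non-integer (hence unconstrained) or an integer that must be at least $1$. The tightest inequalities come from $\beta = -\theta$ (long, with $(\rho \,|\, \theta^\vee) = h^\vee - 1$) and $\beta = -\theta_s$ (short, with $(\rho \,|\, \theta_s^\vee) = h - 1$); the minimal $m$ making the pairing integer is $u/\gcd(u,c_\beta)$. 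Since $\ell \in \{1,2,3\}$, only the two possibilities $(u,\ell)=1$ and $\ell \mid u$ can occur, yielding the case split in (b). In case (i) the minimal integral $m$ equals $u$ for both long and short roots, producing $p \geq h^\vee$ and $\ell p \geq h$, the latter implied by the former since $h \leq \ell h^\vee$ in non-simply-laced types. In case (ii), writing $u=\ell u'$, the minimal integral $m$ for short roots drops to $u'$, sharpening the condition to $p \geq h$, which then subsumes $p \geq h^\vee$.

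To finish, I would verify (1.1b) directly: the finite coroots $\beta^\vee$ always lie in $\hat{\Delta}^\vee_\lambda$, and in both cases $uK$ lies there as well, so the $\QQ$-span equals $\fh + \QQ K$, which contains $\hat{\Delta}^\vee$. The simple coroots $\hat{\prod}^\vee_\lambda$ would then be identified by locating the unique new simple coroot beyond $\alpha_1^\vee,\ldots,\alpha_r^\vee$: it should be the coroot of the smallest positive real root with nonzero $K$-component, namely $uK - \theta$ in case (i) (coroot $uK - \theta^\vee$) and $u'K - \theta_s$ in case (ii) (coroot $u'\ell K - \theta_s^\vee = uK - \theta_s^\vee$). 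The main obstacle will be this last identification---confirming that no smaller positive coroot in $\hat{\Delta}^\vee_{\lambda+}$ could split the candidate---which reduces to the same divisibility analysis by $\gcd(u,\ell)$ as used for (b).
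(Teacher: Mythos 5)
Your proposal is correct and follows essentially the same route as the paper: extract (a) from the imaginary coroots together with the rationality forced by (1.1b), describe $\hat{\Delta}^\vee_\lambda$ via the divisibility condition $u \mid m c_\beta$, split into the cases $(u,\ell)=1$ and $\ell \mid u$, and reduce the positivity condition to the single binding inequality coming from $-\theta$ (resp.\ $-\theta_s$), using $(\rho|\theta^\vee)=h^\vee-1$ and $(\rho|\theta_s^\vee)=h-1$. The only cosmetic difference is that the paper first identifies $\hat{\prod}^\vee_\lambda$ and tests (1.1a) on the new simple coroot, whereas you minimize the pairing directly over all positive real coroots; both yield the same constraints.
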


\begin{proof}

Condition (\ref{eq:1.1b}) of admissibility implies that $(\lambda
+\hat{\rho} |\alpha_i)\in\QQ$ for all $i=0,1,\ldots ,r$.  This,
together with (\ref{eq:1.1a}) for $\alpha^\vee = nK$ implies  (a).

Next, note that $(\lambda + \hat{\rho} | \alpha^\vee) = (\rho
|\alpha^\vee) \in \ZZ$ if $\alpha \in \Delta^{\vee}$, hence
$\Delta^\vee \subset \Delta^\vee_{\lambda}$.  Since $(\lambda +
\hat{\rho}| nK +\alpha^\vee ) =n(k+h^\vee) + (\rho |
\alpha^{\vee}) \in n \frac{p}{u} +\ZZ$, we see that $(\lambda +\hat{\rho}
|nK +\alpha^{\vee}) \in \ZZ$ iff $n \in u\ZZ$.  Therefore,
$\hat{\Delta}^\vee_\lambda = \hat{\Delta}^{\vee ,\re}_\lambda \cup
\hat{\Delta}^{\vee ,\im}_{\lambda}$, where
\begin{displaymath}
  \hat{\Delta}^{\vee ,\re}_\lambda = \{ nuK +\alpha^{\vee}|n\in \ZZ\, ,
  \,  \alpha^\vee \in \Delta^\vee \} \cap \hat{\Delta}^{\vee ,\re} \, ,
  \,  \hat{\Delta}^{\vee ,\im}_\lambda = \{ nu K | n \in \ZZ \backslash
  \{ 0 \} \} \, .
\end{displaymath}
Thus, the $\lambda$ in question is an admissible weight iff
(\ref{eq:1.1a}) holds, i.e.,
\begin{equation}
  \label{eq:1.2}
  (\lambda + \hat{\rho} |\alpha^{\vee}) \in \NN \hbox{  for all  }
  \alpha \in \hat{\Delta}^\vee_{\lambda, +} \, .
\end{equation}

\noindent{Case (i):}~~$(u,\ell) =1$.  In this case 
  \begin{displaymath}
   \hat{\Delta}_{\lambda}^{\vee ,\re} = \{ nuK +\alpha^\vee | n\in \ZZ\, , \, 
       \alpha \in \Delta_{\hbox{long}}\} \cup \{ n\ell uK 
       +\alpha^\vee |n\in \ZZ \, , \, 
          \alpha \in \Delta_{\hbox{short}}\} \, , 
  \end{displaymath}
and the set of simple roots is $\hat{\prod}^\vee_{\lambda} 
= \{ uK -\theta^\vee \, , \, 
     \alpha^\vee_1 ,\ldots ,\alpha^\vee_r \}$.

Hence condition (\ref{eq:1.1a}) is equivalent to
%
  $(\lambda + \hat{\rho} |uK-\theta^\vee) =u (k+h^\vee)-
     (h^\vee -1)\in \NN$, 
%
i.e., $p-h^\vee \in \ZZ_+$.

\vspace{2ex}
\noindent{Case (ii):}~~$(u,\ell)=\ell$.  In this case
  \begin{displaymath}
    \hat{\Delta}^{\vee ,\re}_\lambda = \{ nuK +\alpha^\vee |
       n \in \ZZ \, , \, \alpha^\vee \in \Delta^\vee \}\, , 
\end{displaymath}
hence $\hat{\prod}^\vee_\lambda = \{ uK-\theta^\vee_s$, $\alpha^\vee_1
,\ldots ,\alpha^\vee_r \}$.  Therefore, using that $(\rho
|\theta^\vee_s) = (\rho^\vee |\theta) =h-1$,
condition~(\ref{eq:1.2}) is equivalent to $(\lambda +\hat{\rho}
|uK-\theta^\vee_s) =u (k+h^\vee) - (h-1) \in \NN$, i.e., $p-h \in
\ZZ_+$.

\end{proof}

\subsection{Principal admissible weights.}~~
\label{sec:1.2}
Recall the definition of the affine and extended affine Weyl
groups.  Let $W \subset \End \fh$ be the Weyl group of $\fg$ and
extend it to $\hat{\fh}$ by letting $w(K) = K$, $w (D) =D$ for
all $w \in W$.  Let $Q^\vee = \sum^r_{i=1} \ZZ \alpha^\vee_i$ be
the coroot lattice of $\fg$, then $P = \{ \lambda \in \fh^* |
(\lambda |\alpha) \in \ZZ$ for all $\alpha \in Q^\vee \}$ is the
weight lattice; note that $Q^\vee \subset P \cap Q^*$, where
$Q^*= \{ \lambda \in \fh |(\lambda |\alpha) \in \ZZ$ for all $\alpha \in 
Q\}$.  Given
$\alpha \in \fh$, define the translation \cite{K1}
\begin{displaymath}
  t_{\alpha} (v) = v+ (v|K) \alpha - (\frac12 |\alpha |^2
     (v|K)+(v|\alpha))K \, , \quad v \in \hat{\fh}\, ,
\end{displaymath}
and for a subset $L \subset \fh$, let $t_L=\{ t_{\alpha} | \alpha \in
L\}$.  The affine Weyl group $\hat{W}$ and the extended affine
Weyl group $\tilde{W}$ are defined as follows:
\begin{displaymath}
  \hat{W} = W \ltimes t_{Q^\vee}\, , \, \tilde{W} 
     = W \ltimes  t_{Q^*} \, , 
\end{displaymath}
so that $\hat{W} \subset \tilde{W}$.  Recall that the group
$\tilde{W}_+ = \{ w \in \tilde{W} | w (\hat{\prod}^\vee ) =
\hat{\prod}^\vee \}$ acts transitively on orbits of $\Aut 
\hat{\prod}^\vee$ and simply transitively on the orbit of $\alpha^\vee_0$,
and that $\tilde{W}=\tilde{W}_+\ltimes \hat{W}$.

An admissible weight $\lambda$ is called \emph{principal} if
$\hat{\Delta}^{\vee}_{\lambda}$ is isomorphic to
$\hat{\Delta}^{\vee}$.  We describe below the set of all
principal admissible weights.

For $u \in \NN$ let
\begin{displaymath}
  \hat{S}_{(u)} = \{ uK -\theta^\vee \, , \quad \alpha^\vee_1 ,
     \ldots , \alpha^\vee_r \} \, .
\end{displaymath}
Given $y \in \tilde{W}$, denote by $\hat{P}_{u,y}$ the set of all
admissible weights $\lambda$, such that
$\hat{\prod}^{\vee}_{\lambda} = 
y (\hat{S}_{(u)})$.

Let $\hat{P}$ (resp. $\hat{P}_+$) $= \{ \lambda \in \hat{\fh} |
(\lambda |\alpha^{\vee}_i) \in \ZZ$ (resp. $\in \ZZ_+$) for all
$i=0,\ldots ,r\}$
denote the sets of all integral (resp. dominant
integral) weights.  Given $k \in \CC$, denote by $\hat{P}^k_+$,
$\hat{P}^k_{u,y}$, etc. the subsets of $\hat{P}_+$,
$\hat{P}_{u,y}$, etc., consisting of all elements of level $k$
(recall that the level of $\lambda$ is $(\lambda |K)$).

\begin{theorem}[\cite{KW1}]
  \label{th:1.1}
  \begin{enumerate}\begin{enumerate}
  \item 
$\hat{P}^k_{u,y} \neq \emptyset$ iff the triple $k,u,y$ satisfies
the following three properties:
%
\begin{list}{}{}
\item (i)~~$ (u,\ell)=1$ (recall that  $\ell =1,2$ or $3$ is the
    lacety of $\fg$),
%
\item (ii)~~$ k+h^{\vee} = \frac{p}{u} $,   where   $ p,u \in \NN \, , \, 
   p \ge h^\vee $   and    $ (p,u) =1 $ ,
%
\item  (iii)~~$ y (\hat{S}_{(u)}) \subset \hat{\Delta}^\vee_+ $ .
%
\end{list}

\item 
If properties (i)---(iii) hold, then
\begin{displaymath}
  \hat{P}^k_{u,y} = \{ y . (\Lambda + (k+h^\vee -p)D) |
     \Lambda \in \hat{P}^{p-h^\vee}_+ \}\, ,
\end{displaymath}
where $y . \lambda =y (\lambda + \hat{\rho})-\hat{\rho}$ is the usual
shifted action.

\item 
The set of all principal admissible weights is $\cup_{k,u,y}
\hat{P}^k_{u,y}$, where $(k,u,y)$ runs over all triples
satisfying (i)---(iii).

\item 
Two non-empty sets $\hat{P}^k_{u,y}$ and
$\hat{P}^{k^{\prime}}_{u',y'}$ have a non-empty intersection iff
they coincide, which happens iff $k=k'$, $u=u'$ and $y (\hat{S}_{(u)})
=y' (\hat{S}_{(u)})$.

\end{enumerate}\end{enumerate}

\end{theorem}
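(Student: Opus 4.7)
The strategy is to reduce each principal admissible weight to a normalized form via the shifted action of the extended affine Weyl group $\tilde{W}$, where Proposition~\ref{prop:1.1} applies directly.

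\emph{Reduction and the ``vacuum'' analysis.} Every $y \in \tilde{W}$ fixes $K$, so the shifted action $y.\lambda = y(\lambda + \hat{\rho}) - \hat{\rho}$ preserves levels, and a direct computation gives $\hat{\Delta}^\vee_{y.\lambda} = y(\hat{\Delta}^\vee_\lambda)$, hence $\hat{\prod}^\vee_{y.\lambda} = y(\hat{\prod}^\vee_\lambda)$ whenever $y(\hat{\prod}^\vee_\lambda) \subset \hat{\Delta}^\vee_+$. Given $\lambda \in \hat{P}^k_{u,y}$, set $\mu := y^{-1}.\lambda$: this is a principal admissible weight of level $k$ with $\hat{\prod}^\vee_\mu = \hat{S}_{(u)}$, and the consistency requirement $y(\hat{S}_{(u)}) \subset \hat{\Delta}^\vee_+$ is precisely condition~(iii). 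Since $\hat{S}_{(u)}$ is the set of simple coroots of Case~(i) in Proposition~\ref{prop:1.1}, the appearance of $\theta^\vee$ (rather than $\theta^\vee_s$) together with $uK$ (rather than $\ell u K$) forces $(u,\ell)=1$ and $k + h^\vee = p/u$ with $(p,u)=1$, yielding (i) and the arithmetic part of (ii).

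\emph{Parameterization, yielding (b).} Decompose $\mu = \Lambda + (k + h^\vee - p) D$ with $\Lambda$ of level $p - h^\vee$. Then condition~(\ref{eq:1.1a}) at $\alpha^\vee_i$ $(i = 1, \ldots, r)$ becomes $(\Lambda | \alpha^\vee_i) \in \ZZ_+$, while at $uK - \theta^\vee$ a short computation gives $(\Lambda | K - \theta^\vee) = p - h^\vee - (\Lambda | \theta^\vee) \in \ZZ_+$; together these are exactly the statement $\Lambda \in \hat{P}^{p - h^\vee}_+$, which in particular forces $p \geq h^\vee$ (the remaining part of (ii), visible already by taking $\Lambda = 0$). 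Condition~(\ref{eq:1.1b}) is automatic since $\hat{S}_{(u)}$ already spans $\hat{\fh}^*$. Reversing the construction produces the inverse map $\Lambda \mapsto y.(\Lambda + (k+h^\vee-p)D)$, proving (b).

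\emph{Parts (c), (d), and the main obstacle.} Part (c) is the content of the reduction: any principal admissible $\lambda$ carries a unique base $\hat{\prod}^\vee_\lambda \subset \hat{\Delta}^\vee_+$ of $\hat{\Delta}^\vee_\lambda$, which by the above must equal $y(\hat{S}_{(u)})$ for suitable $y, u$. For (d), if $\hat{P}^k_{u,y}$ and $\hat{P}^{k'}_{u',y'}$ share a weight $\lambda$, taking levels gives $k = k'$; the integer $u$ is recovered as the smallest positive $n$ with $nK \in \hat{\Delta}^\vee_\lambda$, so $u = u'$; and uniqueness of $\hat{\prod}^\vee_\lambda$ yields $y(\hat{S}_{(u)}) = y'(\hat{S}_{(u)})$. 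The main obstacle is justifying that every valid base $\hat{\prod}^\vee_\lambda$ really lies in the $\tilde{W}$-orbit of $\hat{S}_{(u)}$; this rests on the simple transitivity of $\tilde{W} = \tilde{W}_+ \ltimes \hat{W}$ on the relevant configurations, as recalled in the setup preceding this theorem.
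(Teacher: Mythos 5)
First, a point of reference: the paper does not prove Theorem~\ref{th:1.1} at all --- it is quoted from \cite{KW1} --- so there is no in-paper argument to compare against, and your proposal has to stand on its own.

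Your reduction via the shifted action is sound, and the easy half is handled correctly: if $\lambda\in\hat{P}^k_{u,y}$ then $\mu=y^{-1}.\lambda$ is admissible with $\hat{\prod}^\vee_\mu=\hat{S}_{(u)}$ (condition~(iii) being exactly what is needed for admissibility to transfer back and forth), and evaluating condition~(\ref{eq:1.1a}) on the elements of $\hat{S}_{(u)}$ gives the parametrization in~(b) and the constraints $k+h^\vee=p/u$, $p\ge h^\vee$. Part~(d) also goes through as you describe. The genuine gap is in the two places where you need the converse classification. (1)~Your derivation of $(u,\ell)=1$ is asserted, not argued: the actual reason is that when $\ell\mid u$ the subsystem of $\hat{\Delta}^\vee$ generated by $\hat{S}_{(u)}$ contains $uK-\theta^\vee_s$, and since $\theta^\vee_s$ is the highest coroot, $uK-\theta^\vee$ becomes decomposable in that subsystem, so $\hat{S}_{(u)}$ cannot be the set of indecomposable elements of any $\hat{\Delta}^\vee_{\lambda,+}$; some such argument must be supplied. (2)~More seriously, part~(c) --- that \emph{every} principal admissible $\lambda$ has $\hat{\prod}^\vee_\lambda=y(\hat{S}_{(u)})$ for some $y\in\tilde{W}$ and some $u$ --- is exactly the hard classification theorem of \cite{KW1}, and your appeal to ``simple transitivity of $\tilde{W}$ on the relevant configurations, as recalled in the setup'' does not deliver it: the setup only records that $\tilde{W}_+$ acts simply transitively on the orbit of $\alpha^\vee_0$ inside the \emph{standard} simple system $\hat{\prod}^\vee$, which says nothing about bases of the subsystems $\hat{\Delta}^\vee_\lambda$. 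Your first paragraph assumes $\lambda\in\hat{P}^k_{u,y}$, i.e.\ that the base already has the required form, so invoking ``the above'' for~(c) is circular. To close the gap one must analyze $\hat{\Delta}^\vee_\lambda=\{\alpha^\vee:(\lambda+\hat{\rho}|\alpha^\vee)\in\ZZ\}$ directly and show that, for it to be isomorphic to $\hat{\Delta}^\vee$, its finite part must be all of $\Delta^\vee$ and its base must be a $\tilde{W}$-translate of $\hat{S}_{(u)}$ with $u$ the minimal positive integer such that $uK\in\hat{\Delta}^\vee_\lambda$; this is the content of the classification in \cite{KW1} and is not a formal consequence of anything stated in Section~\ref{sec:1.2}.
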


Denote by $Pr^k$ the set of all principal admissible weights of
level $k$.  Note that $Pr^k=\hat{P}^k_+ + \CC K$ if $k \in \ZZ_+$.

\begin{proposition}[\cite{FKW}]
  \label{prop:1.2}
Let  $\Lambda \in \hat{P}^k_{u,y}$, where $y=t_{\beta}\bar{y}$,
be a principal admissible weight.  Then the following conditions
are equivalent:

\begin{list}{}{}
\item (i)~~$(\lambda |\alpha) \notin \ZZ$ for all $\alpha \in \Delta^\vee$,

\item (ii)~~$y (\hat{S}_{(u)}) \subset \hat{\Delta}^\vee_+
  \backslash \Delta^\vee_+ $,

\item (iii)~~$0<-(\bar{y}^{-1} (\beta ) |\alpha) < u$ for all
  $\alpha \in \Delta_+$,

\item (iv)~~$(\beta |\alpha) \notin u\ZZ$ for all $\alpha \in \Delta$.

\end{list}

\end{proposition}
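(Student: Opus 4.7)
The plan is to first unpack the decomposition $y = t_\beta \bar{y}$ on $\hat{S}_{(u)}$, since the explicit images drive all four equivalences. Both $\alpha_i^\vee$ and $uK - \theta^\vee$ have level zero (as $(K|K) = 0$), so the translation formula collapses to
\[
y(\alpha_i^\vee) = \bar{y}(\alpha_i^\vee) - (\bar{y}(\alpha_i^\vee) | \beta) K, \qquad y(uK - \theta^\vee) = -\bar{y}(\theta^\vee) + \bigl(u + (\bar{y}(\theta^\vee) | \beta)\bigr) K.
\]

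For (ii) $\Leftrightarrow$ (iii), I would use that an affine real coroot $\gamma^\vee + mK$ lies in $\hat{\Delta}^\vee_+ \setminus \Delta^\vee_+$ precisely when $m > 0$. Applying this to the two images above, (ii) is equivalent to the strict inequalities $-(\bar{y}^{-1}\beta | \alpha_i) > 0$ for each simple $\alpha_i$ together with $-(\bar{y}^{-1}\beta | \theta) < u$ (using $\theta^\vee = \theta$, since $\theta$ is long). Because $\theta$ is the highest root, $\theta - \alpha \in \sum_i \NN\alpha_i$ for each $\alpha \in \Delta_+$, so the resulting dominance of $-\bar{y}^{-1}\beta$ propagates both bounds to all of $\Delta_+$, yielding (iii); the converse is specialization to simple roots and to $\theta$.

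For (iii) $\Leftrightarrow$ (iv), I would rewrite $(\beta|\alpha) = (\bar{y}^{-1}\beta | \bar{y}^{-1}\alpha)$ and use that $\bar{y}$ permutes $\Delta$. Then (iii) places $(\beta|\alpha)$ in $(-u, u) \setminus \{0\}$ for every $\alpha \in \Delta$; combined with the integrality $(\beta|\alpha) \in \ZZ$ (from $\beta \in Q^*$ and $\alpha \in Q$), this yields (iv). Conversely, the inclusion $y(\hat{S}_{(u)}) \subset \hat{\Delta}^\vee_+$ supplied by Theorem~\ref{th:1.1} provides the weak inequalities $m_i \geq 0$ and $u + (\bar{y}(\theta)|\beta) \geq 0$, and (iv) rules out the boundary values $0$ and $-u$ (both in $u\ZZ$), promoting these to strict. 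This gives (ii), and hence (iii) by the previous step.

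Finally, for (i) $\Leftrightarrow$ (iv), I would use Theorem~\ref{th:1.1} to write $\Lambda = y.(\Lambda_0 + (k + h^\vee - p)D)$ for some $\Lambda_0 \in \hat{P}^{p - h^\vee}_+$, and a direct application of $t_\beta \bar{y}$ shows that the $\fh$-component of $\Lambda$ equals $\bar{y}(\bar\Lambda_0 + \rho) + (p/u)\beta - \rho$. For $\alpha^\vee \in \Delta^\vee$ the terms $\bar{y}(\bar\Lambda_0 + \rho)$ and $\rho$ give integer pairings with $\alpha^\vee$ (by $W$-invariance and $\bar\Lambda_0 \in P_+$), so $(\Lambda|\alpha^\vee) \in \ZZ$ if and only if $(p/u)(\beta|\alpha^\vee) \in \ZZ$. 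Splitting into $\alpha^\vee = \alpha$ for long $\alpha$ and $\alpha^\vee = \ell\alpha$ for short $\alpha$, and clearing denominators via the coprimalities $(p, u) = 1$ and $(u, \ell) = 1$ supplied by Theorem~\ref{th:1.1}, both cases collapse to $(\beta|\alpha) \in u\ZZ$, identifying (i) with (iv). I expect this last step to be the main technical point: the short-root case depends essentially on both coprimalities, and one must keep straight the distinction between the coroot pairings appearing in (i) and the root pairings appearing in (iv).
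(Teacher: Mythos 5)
The paper states Proposition~\ref{prop:1.2} without proof, citing \cite{FKW}, so there is no in-paper argument to compare against; judged on its own, your proof is correct and is essentially the standard argument from \cite{FKW}. The computations of $y(\alpha_i^\vee)$ and $y(uK-\theta^\vee)$ under $t_\beta\bar y$, the positivity criterion for the $K$-coefficient of a real coroot, the use of the highest-root dominance to pass from simple roots to all of $\Delta_+$, the upgrade of the weak inequalities from Theorem~\ref{th:1.1}(a)(iii) via (iv), and the reduction of (i) to $\frac{p}{u}(\beta|\alpha^\vee)\in\ZZ$ using $(p,u)=(u,\ell)=1$ are all sound; the only quibble is notational ($\theta-\alpha$ lies in $\sum_i\ZZ_+\alpha_i$, not $\sum_i\NN\alpha_i$, and $\lambda$ in condition (i) is the $\Lambda$ of the hypothesis).
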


A principal admissible weight, satisfying one of the equivalent
properties (i)---(iv) of Proposition~\ref{prop:1.2}, is called
\emph{non-degenerate}.  Given $\bar{y} \in W$, denote by
$\hat{P}^k_{\bar{y}}$ the union of all $\hat{P}^k_{u,y}$ with
$y=t_\beta \bar{y}$, where $\beta$ satisfies the inequalities
(iii) of Proposition~\ref{prop:1.2}.  It follows from this
proposition that the set of all non-degenerate principal admissible
weights of level~$k$, denoted by $Prn^k$, is the union of all
$\hat{P}^k_{\bar{y}}$:
\begin{displaymath}
  Prn^k = \bigcup_{\bar{y} \in W} \hat{P}^k_{\bar{y}} \, .
\end{displaymath}

\begin{proposition}[\cite{KW2},\cite{FKW}]
  \label{prop:1.3}
  \begin{enumerate}\begin{enumerate}
  
    \item 
      $\hat{P}^k_{\bar{y}} \neq \emptyset$ iff $k$ satisfies
      conditions (\ref{eq:0.2}).
%

\item 
  Provided that (\ref{eq:0.2}) holds, two sets
  $\hat{P}^k_{\bar{y}}$ and   $\hat{P}^k_{\bar{y}'}$ have a
  non-empty intersection iff they coincide, which happens iff
  $\bar{y}^{-1} \bar{y}' \in W_+$, where $W_+$ is the image of $
 \tilde{W}_+$ under the canonical map $\tilde{W} \to W$.

\item 
  For $\Lambda \in \hat{P}^k_{\bar{y}}$ there exists a unique
  $\beta$, such that $\Lambda = (t_\beta \bar{y})$.  $(\Lambda^0 +
  (k+h^\vee -p) D)$, where $\Lambda^0 \in
  \hat{P}^{p-h^{\vee}}_+$, $p=u(k+h^\vee)$.  We let
  \begin{displaymath}
    \varphi_{\bar{y}} (\Lambda ) = (\Lambda^0 \, , \, uD -
    \bar{y}^{-1} (\beta) - \hat{\rho}^\vee)\, .
  \end{displaymath}

This is a bijective map
\begin{displaymath}
  \varphi_{\bar{y}} : \hat{P}^k_{\bar{y}} \to \hat{P}^{p-h^{\vee}}_+
     \times \hat{P}^{\vee u-h}_+ \, , 
\end{displaymath}
the inverse map being
\begin{displaymath}
  \psi_{\bar{y}} (\lambda ,\mu) = \bar{y} . \lambda + (k+h^\vee)
     (D-\bar{y} (\mu + \hat{\rho}^\vee))\, .
\end{displaymath}

\item 
Given $p,u \in \NN$, such that $(p,u)=1$, $(\ell ,u)=1$, $p \geq
h^\vee$, $u \geq h$, define the set
\begin{displaymath}
  I_{p,u} = (\hat{P}^{p-h^{\vee}}_+ \times \hat{P}^{\vee u-h}_+)
     / \tilde{W}_+
\end{displaymath}
(where $w (\lambda ,\mu) = (w(\lambda) , w (\mu)), w \in
\tilde{W}_+$).  Then, letting
\begin{displaymath}
  \varphi (\Lambda) = \varphi_{\bar{y}} (\Lambda) \hbox{   if   }
     \Lambda \in \hat{P}^k_{\bar{y}}\, ,
\end{displaymath}
gives a well-defined map $\varphi : Prn^k \to I_{p,u}$.

\item 
Elements $\Lambda ,\Lambda' \in Prn^k$ have the same image under
the map $\varphi$ iff $\Lambda' = \bar{w} . \Lambda$ for some
$\bar{w} \in W$.  In particular, all fibers of the map $\varphi$
have cardinality $|W|$.

  \end{enumerate} \end{enumerate}

\end{proposition}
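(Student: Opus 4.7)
My plan is to deduce all five parts of Proposition~\ref{prop:1.3} from Theorem~\ref{th:1.1} and Proposition~\ref{prop:1.2} by factoring each $y \in \tilde{W}$ as $y = t_\beta \bar{y}$ with $\bar{y} \in W$, $\beta \in Q^*$, and tracking how $\tilde{W}_+$ interacts with this decomposition. For (a), since $\hat{P}^k_{\bar{y}} = \bigcup_{\beta} \hat{P}^k_{u,t_\beta \bar{y}}$ with $\beta \in Q^*$ subject to the non-degeneracy bounds $0 < -(\bar{y}^{-1}\beta | \alpha) < u$ from Proposition~\ref{prop:1.2}(iii), non-emptiness of each summand already forces $(u,\ell) = 1$, $(p,u) = 1$, $p \geq h^\vee$ by Theorem~\ref{th:1.1}(a). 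Existence of such a $\beta$ is equivalent, via $\gamma := -\bar{y}^{-1}(\beta)$, to finding $\gamma \in Q^*$ with $(\gamma|\alpha_i) \geq 1$ for $i = 1,\ldots,r$ and $(\gamma|\theta) \leq u-1$; writing $\theta = \sum c_i \alpha_i$, the minimum of $(\gamma|\theta)$ over such $\gamma$ is $\sum c_i = h-1$, attained at $\gamma = \rho^\vee$, yielding $u \geq h$ as the necessary and sufficient condition and thus (\ref{eq:0.2}).

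For (b) and (c), Theorem~\ref{th:1.1}(d) characterizes equality of the $\hat{P}^k_{u,y}$ by $y(\hat{S}_{(u)}) = y'(\hat{S}_{(u)})$. Writing $y = t_\beta \bar{y}$, $y' = t_{\beta'}\bar{y}'$, this decomposes into $\bar{y}^{-1}\bar{y}' \in W_+$ plus a matching of translation parts; taking unions over admissible $\beta, \beta'$ eliminates the latter constraint and gives (b). For (c), Theorem~\ref{th:1.1}(b) provides a unique pair $(y, \Lambda^0)$, hence a unique $(\beta, \Lambda^0)$ once $\bar{y}$ is fixed; the claim $\mu := uD - \bar{y}^{-1}(\beta) - \hat{\rho}^\vee \in \hat{P}^{\vee u-h}_+$ is a direct rewriting of Proposition~\ref{prop:1.2}(iii), since $(\mu|\alpha_i) = -(\bar{y}^{-1}\beta | \alpha_i) - 1 \geq 0$ for $i = 1,\ldots,r$ and the analogous bound on $(\mu|\alpha_0)$ match the strict bounds in (iii) after shifting by $\hat{\rho}^\vee$. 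The inverse map $\psi_{\bar{y}}$ is then obtained by algebraically solving the defining equation for $\Lambda$.

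For (d) and (e), part (b) decomposes $Prn^k$ as a disjoint union over cosets $[\bar{y}] \in W/W_+$ of pieces each in bijection with $\hat{P}^{p-h^\vee}_+ \times \hat{P}^{\vee u-h}_+$ via $\varphi_{\bar{y}}$. Well-definedness of $\varphi$ on $I_{p,u}$ amounts to showing that replacing $\bar{y}$ by $\bar{y}\bar{w}$ with $\bar{w} \in W_+$ modifies $(\Lambda^0, \mu)$ exactly by the diagonal action of the lift $\tilde{w} \in \tilde{W}_+$; this is a direct computation with the translation and shifted-action formulas. For (e), a fiber of $\varphi$ over $[(\lambda,\mu)]$ collects one $\tilde{W}_+$-orbit from each of the $|W/W_+|$ cosets. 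Using $\tilde{W}_+ \cap t_{Q^*} = \{e\}$ (so $|W_+| = |\tilde{W}_+|$) and that $W$ acts freely on $Prn^k$ by the shifted action (a consequence of strict regularity of $\Lambda + \hat{\rho}$, itself coming from non-degeneracy), the fiber has $|W|$ elements forming a single $W$-orbit.

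Main obstacle: the compatibility check in (d)—that the right $W_+$-action on $\bar{y}$ matches the diagonal $\tilde{W}_+$-action on $(\Lambda^0, \mu)$—is the most delicate step, requiring careful use of the shifted action and the translation formula on $\hat{\fh}$; the alcove-geometry argument for $u \geq h$ in (a), while elementary, is also essential.
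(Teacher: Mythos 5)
The paper does not prove Proposition~\ref{prop:1.3}: it is quoted from \cite{KW2} and \cite{FKW}, so there is no internal proof to compare yours against. Judged on its own, your outline is correct and follows the route one would extract from those references and from the ingredients already stated (Theorem~\ref{th:1.1} and Proposition~\ref{prop:1.2}). The pieces you do carry out are right: the alcove computation $\min_{\gamma}(\gamma|\theta)=(\rho^\vee|\theta)=h-1$ over $\gamma\in Q^*$ with $(\gamma|\alpha_i)\geq 1$, attained at $\gamma=\rho^\vee$ (which also satisfies $0<(\rho^\vee|\alpha)<u$ for \emph{all} $\alpha\in\Delta_+$, giving sufficiency in (a)); and the translation of the strict inequalities of Proposition~\ref{prop:1.2}(iii) into $(\mu|\alpha_i^\vee)\in\ZZ_+$ for $i=0,\dots,r$ at level $u-h$ in (c). Two places are thinner than you acknowledge. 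In (b), "a matching of translation parts" conceals the real content: the stabilizer of $\hat{S}_{(u)}$ in $\tilde{W}$ is the $u$-dilated copy $\{t_{u\mu}\bar{\sigma}\mid t_\mu\bar{\sigma}\in\tilde{W}_+\}$ of $\tilde{W}_+$, whose image in $W$ is again $W_+$, and one must check that composing with $t_{u\bar{y}(\mu)}$ carries an admissible $\beta$ to an admissible $\beta'=\beta+u\bar{y}(\mu)$ (this uses that $\tilde{W}_+$ preserves the rescaled fundamental alcove); the same identification is what makes the compatibility check in (d) work, so your "most delicate step" in (d) and the elision in (b) are really the same lemma. In (e), your counting closes correctly provided you first note $\varphi(\bar{w}.\Lambda)=\varphi(\Lambda)$ (immediate from $\bar{w}t_\beta\bar{y}=t_{\bar{w}\beta}(\bar{w}\bar{y})$ and $(\bar{w}\bar{y})^{-1}(\bar{w}\beta)=\bar{y}^{-1}\beta$), so each fiber contains a free $W$-orbit of size $|W|$, while the upper bound $|W/W_+|\cdot|\tilde{W}_+|=|W|$ forces the fiber to equal that orbit; this sandwich also yields, as a byproduct, the freeness of the diagonal $\tilde{W}_+$-action that you would otherwise have to extract from $(p,u)=1$. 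These are fillable gaps, not errors.
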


\subsection{Characters of principal admissible
  $\hat{\fg}$-modules, their modular transformations and asymptotics.}~~
\label{sec:1.3}

Introduce the following domain $Y$ and coordinates $(\tau ,z,t)$
on $Y$:
\begin{displaymath}
  Y= \{ \lambda \in \hat{\fh} |\, {\rm Re} (\lambda |K)>0\}
 = \{ (\tau ,z,t):= 2\pi i (-\tau D + z+ tK)|\,
  \tau ,t \in \CC \, , \, \rm Im \tau >0 \, , \, z \in \fh \} \, .
\end{displaymath}
The product
\begin{displaymath}
  R_{\hat{\fg}} (v) = e^{(\hat{\rho}|v)} 
      \prod_{\alpha \in \hat{\Delta}^{\re}_+} (1-e^{-(\alpha |v)})
      \prod_{j \in \NN} (1-e^{-j(K|v)})^r
\end{displaymath}
converges to a holomorphic function on $Y$.  Given $\lambda \in
Y-\hat{\rho}$, consider the series
\begin{displaymath}
  N_{\lambda} (v) =\sum_{w \in \hat{W}_\lambda} \epsilon (w)
     e^{(w(\lambda +\hat{\rho})|v)}\, , 
\end{displaymath}
where $\hat{W}_\lambda$ is the subgroup of $\hat{W}$, generated
by reflections $r_{\alpha^\vee}$ (defined by $r_{\alpha^\vee} (v)
=v - (\alpha |v) \alpha^\vee$) in all real roots from
$\hat{\Delta}^\vee_{\lambda}$.  This series converges to a
holomorphic function on $Y$.

Let $L(\lambda)$ denote the irreducible highest weight
$\hat{\fg}$-module with the highest weight $\lambda \in
\hat{\fh}^*$, and let
\begin{displaymath}
  \ch_{L(\lambda)} (v) = \tr_{L(\lambda)} e^{\it v} \, , {\it v} \in Y \,, 
\end{displaymath}
be its character. For a function $F$ on the domain $Y$ we shall often write
$F(\tau,z,t)$ in place of $F(2\pi i(-\tau D+z+tK))$.

\begin{theorem}[\cite{KW1}]
  \label{th:1.2}
  \begin{enumerate}\begin{enumerate}
  \item 
If $\lambda$ is an admissible weight, then
$R_{\hat{\fg}}\ch_{L(\lambda)}$ converges in the domain $Y$ 
to the holomorphic function $N_{\lambda}$. 
%

\item 
Let $\lambda = y . (\lambda^0 + (k+h^\vee -p)D) \in \hat{P}^k_{u,y}$,
where $\lambda^0 \in \hat{P}^{p-h^\vee}_+$.  Define the following
isometry $\phi_{u,y}$ of $\hat{\fh}$:
\begin{eqnarray*}
  \phi_{u,y} (\alpha^\vee_i) = y^{-1} (\alpha^\vee_i) , 
     \quad i=1,\ldots ,r \, , \,
  \phi_{u,y} (D) = uy^{-1} (D) \, , \, 
     \phi_{u,y} (K) = u^{-1}K \, .
\end{eqnarray*}
\end{enumerate}  \end{enumerate}

\end{theorem}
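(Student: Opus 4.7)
The plan is to treat part (a) by recognizing that admissibility makes $\hat{\Delta}^\vee_\lambda$ the real-coroot system of an affine Kac--Moody subalgebra and then appealing to a BGG-type character formula relative to $\hat{W}_\lambda$, and to treat part (b) by a direct verification on the basis of $\hat{\fh}$.

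For part (a), the key observation is that condition $(\ref{eq:1.1a})$ says $\lambda+\hat{\rho}$ is strictly $\hat{\Delta}^\vee_{\lambda,+}$-dominant, so $\hat{W}_\lambda$ acts freely on $\lambda+\hat{\rho}$, while condition $(\ref{eq:1.1b})$ ensures that $\hat{\Delta}^\vee_\lambda$ has the same $\QQ$-span as $\hat{\Delta}^\vee$, which is exactly what makes $N_\lambda$ have the shape of an affine Weyl--Kac numerator. I would then construct a BGG-type resolution of $L(\lambda)$ by Verma modules indexed by $w \in \hat{W}_\lambda$, yielding
\[
\ch L(\lambda) \;=\; \sum_{w \in \hat{W}_\lambda} \epsilon(w)\,\ch M(w.\lambda).
\]
Multiplying through by $R_{\hat{\fg}}$ clears the Verma denominator and produces $N_\lambda$ on the nose. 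Convergence on $Y$ is then automatic: since, by admissibility, $\hat{W}_\lambda$ is the Weyl group of an affine root system, $N_\lambda$ converges absolutely and uniformly on compacta in $Y$ with the usual Kac--Peterson-type estimates, and $R_{\hat{\fg}}$ is a convergent infinite product on $Y$.

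The main obstacle is the existence of the above resolution, or equivalently the triviality at $q=1$ of the Kazhdan--Lusztig polynomials for $\hat{W}_\lambda$ attached to $\lambda$. This is where $(\ref{eq:1.1a})$ does the real work: regularity of $\lambda+\hat{\rho}$ with respect to $\hat{W}_\lambda$ places $\lambda$ in the principal open cell of the localized category, and the standard Kac--Moody BGG machinery of Deodhar--Gabber--Kac / Kashiwara--Tanisaki applies to the integral subsystem. This step requires genuine input; everything else in (a) is formal bookkeeping.

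For part (b), the claim is that $\phi_{u,y}$ preserves $(\cdot|\cdot)$ on $\hat{\fh}$, and I would verify it on the basis $\{\alpha_1^\vee,\ldots,\alpha_r^\vee,K,D\}$. First, $\tilde{W}$ acts by isometries: $W$ manifestly does, and a short calculation from $t_\alpha(v) = v+(v|K)\alpha-(\frac12|\alpha|^2(v|K)+(v|\alpha))K$ shows each $t_\alpha$ preserves the form. Hence $y^{-1}$ is an isometry, so $(\phi_{u,y}(\alpha_i^\vee)|\phi_{u,y}(\alpha_j^\vee)) = (\alpha_i^\vee|\alpha_j^\vee)$. Using further that $\tilde{W}$ fixes $K$ and moves $D$ only by an element of $\fh+\CC K$, the remaining pairings reduce to $(\phi_{u,y}(K)|\phi_{u,y}(D)) = u \cdot u^{-1}(y^{-1}(D)|K) = (D|K) = 1$, $(\phi_{u,y}(K)|\phi_{u,y}(K)) = u^{-2}(K|K) = 0$, and $(\phi_{u,y}(D)|\phi_{u,y}(D)) = u^2(D|D) = 0$, while the mixed pairings of $\phi_{u,y}(\alpha_i^\vee)$ with $\phi_{u,y}(K)$ or $\phi_{u,y}(D)$ vanish because they already vanish between $\alpha_i^\vee$ and $K,D$ in $\hat{\fh}$, and $y^{-1}$ preserves all of this. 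This establishes the isometry.
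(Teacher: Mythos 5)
The paper does not prove Theorem~\ref{th:1.2}: it is imported from \cite{KW1}, and the only argument offered in the text is the Remark that follows it, namely that (\ref{eq:1.4}) is derived from part (a) by means of the conjugation $\hat{W}_\lambda=\phi_{u,y}^{-1}\hat{W}\phi_{u,y}$. So there is no internal proof to compare against. Your treatment of (b) as literally quoted --- the verification on the basis $\{\alpha_1^\vee,\ldots,\alpha_r^\vee,K,D\}$ that $\phi_{u,y}$ preserves the form, using that $\tilde{W}$ acts by isometries fixing $K$ --- is correct and complete. Be aware, though, that the substantive content of (b) is formula (\ref{eq:1.4}), which sits immediately after the theorem environment; you do not address it, and the paper's Remark indicates it is the place where the isometry and the identity $\hat{W}_\lambda=\phi_{u,y}^{-1}\hat{W}\phi_{u,y}$ are actually used, as a formal consequence of (a).

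The genuine gap is in (a), which is the Kac--Wakimoto character formula for admissible weights. Your argument defers its entire content to the existence of a BGG-type resolution of $L(\lambda)$ indexed by $\hat{W}_\lambda$, and then asserts that condition (\ref{eq:1.1a}) ``places $\lambda$ in the principal open cell'' so that standard machinery applies. That assertion is the theorem, not a reduction of it. Concretely: (i) existence of a resolution is strictly stronger than the character identity, and neither Deodhar--Gabber--Kac (which gives linkage, i.e.\ that $R_{\hat{\fg}}\ch_{L(\lambda)}$ is supported on $\hat{W}_\lambda(\lambda+\hat{\rho})$, but nothing about the coefficients) nor Kashiwara--Tanisaki hands it to you; (ii) dominance and regularity of $\lambda+\hat{\rho}$ with respect to $\hat{\Delta}^\vee_{\lambda+}$ do not reduce the problem to the integrable case, because for a simple coroot $\alpha^\vee$ of $\hat{\Delta}^\vee_{\lambda+}$ one only knows $(\lambda+\hat{\rho}\,|\,\alpha^\vee)\in\NN$, and since $(\hat{\rho}\,|\,\alpha^\vee)$ can exceed $1$ (e.g.\ $\alpha^\vee=uK-\theta^\vee$ for the vacuum weight $\lambda=kD$ with $u\geq 2$), the number $(\lambda\,|\,\alpha^\vee)$ can be negative. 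Hence $L(\lambda)$ need not be integrable with respect to the corresponding $\sl_2$'s, so neither a Garland--Lepowsky-type resolution for the subsystem nor the naive $\hat{W}_\lambda$-anti-invariance of $R_{\hat{\fg}}\ch_{L(\lambda)}$ is available. To extract (a) from affine Kazhdan--Lusztig theory one must identify the block of $\lambda$ with an integral block via translation functors and then establish the specific collapse of the relevant KL polynomials at $q=1$; this is the real work, and it is exactly the step your proposal leaves unproved.
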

Then
\begin{equation}
  \label{eq:1.4}
  \ch_{L(\lambda)} (v) 
     = N_{\lambda^0}  (\phi_{u,y}(v))/R_{\hat{\fg}}(v)\, .
\end{equation}

\begin{remark}
  $\hat{W}_\lambda = \phi^{-1}_{u,y} \hat{W} \phi_{u,y}$.  Using
  this, one derives (\ref{eq:1.4}) from (a).
\end{remark}

Define the normalized character
\begin{equation}
\label{eq:1.5}
  \chi_{L(\lambda)}(\tau, z, t) = e^{2\pi i \tau s_\lambda} \ch_{L(\lambda)}
(\tau, z, t)\, ,
\end{equation}
where 
\begin{displaymath}
s_{\lambda} = 
\frac{|\lambda +\hat{\rho}|^2}{2(k+h^\vee)} -
\frac{|\hat{\rho}|^2}{2h^\vee},
\end{displaymath}
and $k$ is
the level of $\lambda$.
Let $D_{\hat{\fg}}(\tau
,z,t)=q^{|\hat{\rho} |^2/2h^\vee} R_{\hat{\fg}}(\tau ,z,t)$ denote the 
denominator
of $\chi_{L(\lambda)} (\tau ,z,t)$.  It satisfies the following
well-known transformation formula \cite{K1}:
\begin{equation}
  \label{eq:1.6}
  D_{\hat{\fg}} \left( -\frac{1}{\tau} \, , \, \frac{z}{\tau}
    \, , \, 
      t- \frac{(z|z)}{2\tau} \right) = (-i)^{|\Delta_+ |}
       (-i\tau)^{r/2} D_{\hat{\fg}} (\tau ,z, t) \, .
\end{equation}

Note that $\chi_{L(\lambda)}=\chi_{L(\lambda +aK)}$ for any $a \in \CC$.
For that reason, from now on we shall consider elements of $Pr^k$
$\mod \, \CC K$. There is a unique representative $\lambda$ in each coset,
such that $\lambda (D)=0$. When writing $(\lambda|\mu)$
for $\lambda, \mu \in Pr^k$, we shall mean the scalar product of these 
representatives, in other words, 
$(\lambda | \mu)=(\bar{\lambda} |\bar{\mu})$, where the bar means
the orthogonal projection of $\hat{\fh}$ to $\fh$. Note that, by the ``strange 
formula''\cite{K1}, we have:
\begin{displaymath}
 s_{\lambda}=\frac{(\bar{\lambda} +2\rho |\bar{\lambda})}
{2(k+h^\vee)}- \frac{1}{24} \frac{k\dim \fg}{k+h^\vee}. 
\end{displaymath}

>From (\ref{eq:1.4}) and
the transformation formula and asymptotics for characters of
integrable $\hat{\fg}$-modules we derived those for principal
admissible modules.

\begin{theorem}[\cite{KW1}]
  \label{th:1.3}
Let $\lambda =y. (\lambda^0 + (k+h^\vee -p)D) \in Pr^k$, where
$k+h^\vee =p/u$, $y=t_{\beta}\bar{y}$, $\beta \in Q^*$,
$\bar{y} \in W$, $\lambda^0  \in \hat{P}^{p-h^\vee}_+$.  Then

\begin{enumerate} \item[] \begin{enumerate}
\item 
\hspace*{1in}$ \chi_{L(\lambda)} \left( -\frac{1}{\tau} \, , \, \frac{z}{\tau}    \, , \, t-\frac{(z|z)}{2\tau}\right)   = \sum_{\lambda' \in
    Pr^k} a(\lambda ,\lambda') \chi_{L(\lambda')} (\tau ,z,t)$,\\

where
\begin{eqnarray*}
  a (\lambda ,\lambda') &=& i^{|\Delta_+|} u^{-r} (k+h^\vee)^{-r/2}
     |P/Q^\vee | \epsilon (\bar{y} \bar{y}')
     e^{-2\pi i ((\lambda^0+\rho |\beta')
    +(\lambda^{\prime 0}+\rho |\beta) 
    +  (k+h^\vee)(\beta|\beta'))}\\
     && \times \sum_{w \in W} \epsilon (w) 
        e^{-\frac{2\pi i}{k+h^\vee}(w(\lambda^0 +\hat{\rho})|
          \lambda^{\prime 0}+\hat{\rho})} \, .
\end{eqnarray*}

\item 
As $\tau \downarrow 0$, we have for each $z \in \fh$, such that
$(\alpha |z) \notin \ZZ$ for all $\alpha \in \Delta$:
\begin{displaymath}
  \chi_{L(\lambda)}(\tau,-\tau z,0)\, \sim b(\lambda,z)e^{\frac{\pi ig^{(k)}}
{12\tau}}\, ,
\end{displaymath}
where
\begin{eqnarray*}
  b (\lambda ,z) &=& \epsilon (\bar{y}) u^{-r/2}
     a(\lambda^0)\prod_{\alpha \in \Delta_+} \sin
     \frac{\pi (z-\beta |\alpha)}{u}/\sin \pi (z|\alpha)\, , \\
     a(\lambda^0) &=& p^{-r/2} 
        |P/Q^\vee|^{-1/2} \prod_{\alpha \in \Delta_+}
        2\sin \frac{\pi (\lambda^0 +\rho |\alpha)}{p}\, , \\
        g^{(k)} &=& \left( 1-\frac{h^\vee}{pu}\right)\dim \fg \, .
\end{eqnarray*}

    \end{enumerate} \end{enumerate}

\end{theorem}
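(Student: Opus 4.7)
The plan is to reduce both statements to standard modular facts about integrable characters at level $p - h^\vee$, via the master formula (\ref{eq:1.4}), $\ch_{L(\lambda)}(v) = N_{\lambda^0}(\phi_{u,y}(v))/R_{\hat{\fg}}(v)$. Writing $v = 2\pi i(-\tau D + z + tK)$ and $y = t_\beta \bar{y}$, the explicit action of $t_\beta$ on $\hat{\fh}$ gives $\phi_{u,y}(v) = 2\pi i(-\tau'' D + z'' + t'' K)$ with $\tau'' = u\tau$, $z'' = \bar{y}^{-1}(z - u\tau\beta)$, and $t''$ a quadratic correction in $(z,\beta,t)$; in these primed coordinates $N_{\lambda^0}$ is precisely the numerator of the integrable $\hat{\fg}$-character of level $p-h^\vee$.

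For part (a), I would apply the $S$-transformation $\sigma:(\tau,z,t)\mapsto(-1/\tau, z/\tau, t-(z|z)/(2\tau))$ to $\chi_{L(\lambda)} = e^{2\pi i\tau s_\lambda}\ch_{L(\lambda)}$. The denominator $R_{\hat{\fg}}$ transforms by (\ref{eq:1.6}); the numerator, viewed as $e^{-2\pi i u\tau s_{\lambda^0}}$ times the normalized integrable level-$(p-h^\vee)$ character of $L(\lambda^0)$ times $R_{\hat{\fg}}$ in the primed coordinates, transforms by the classical Kac--Peterson $S$-matrix, producing a sum over $\lambda^{\prime 0} \in \hat{P}^{p-h^\vee}_+$ multiplied by a $\beta$-dependent exponential coming from pushing $\sigma$ through $\phi_{u,y}$. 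I would then invoke the bijection $\psi_{\bar{y}'}$ of Proposition~\ref{prop:1.3}(c) to identify each pair $(\lambda^{\prime 0},\mu')$ with a principal admissible weight $\lambda' \in Pr^k$, so that summation over $\mu' \in \hat{P}^{\vee u-h}_+$ (equivalently, over $\beta' = \bar{y}'(\mu'+\hat{\rho}^\vee)-\hat{\rho}^\vee$) together with summation over $\bar{y}' \in W$ covers $Pr^k$ in accordance with $\tilde{W} = \tilde{W}_+ \ltimes \hat{W}$. The cross term $(k+h^\vee)(\beta|\beta')$ in $a(\lambda,\lambda')$ arises from the quadratic form of $\sigma$ applied to the composed translation $t_\beta t_{\beta'}$, and the sign $\epsilon(\bar{y}\bar{y}')$ from the Weyl parts.

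For part (b), I would substitute $(\tau, -\tau z, 0)$ directly into (\ref{eq:1.4}). The transformed arguments become $(u\tau, \bar{y}^{-1}(-\tau z + u\tau \beta), \ldots)$, so the Kac--Peterson asymptotic for the integrable level-$(p-h^\vee)$ character of $L(\lambda^0)$ applies as $u\tau \downarrow 0$, yielding the factor $a(\lambda^0) \prod_{\alpha \in \Delta_+} \sin\bigl(\pi(\bar{y}^{-1}(z-u\beta)|\alpha)/u\bigr)$ up to an exponential of effective-central-charge type; since $\bar{y}$ permutes $\Delta_+$ up to signs, the sine product rewrites over $\Delta_+$ with argument $(z-\beta|\alpha)/u$ at the cost of the factor $\epsilon(\bar{y})$. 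The asymptotics of $R_{\hat{\fg}}(\tau,-\tau z,0)$ deduced from (\ref{eq:1.6}) then supply the $\sin\pi(z|\alpha)$ denominator of $b(\lambda,z)$. All exponential prefactors --- from $s_\lambda$, from the integrable effective central charge at level $p-h^\vee$, from $|\hat{\rho}|^2/(2h^\vee)$ inside $D_{\hat{\fg}}$, and from the quadratic correction $t''$ --- collapse into the single factor $e^{\pi i g^{(k)}/(12\tau)}$ with $g^{(k)} = (1 - h^\vee/(pu))\dim\fg$.

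The principal obstacle in both parts is the careful bookkeeping of these scalar prefactors: the translation-induced exponentials in $\beta,\beta'$, the Weyl signs $\epsilon(\bar{y})$ and $\epsilon(\bar{y}\bar{y}')$ from reorderings of sine products and root sums, and the combination of multiple contributions involving $|\hat{\rho}|^2$, $|\beta|^2$, and effective central charges that must reorganize exactly into the closed-form expressions stated in the theorem.
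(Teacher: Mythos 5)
Your proposal follows exactly the route the paper indicates: the paper states Theorem~\ref{th:1.3} as a result of \cite{KW1} and derives it, in one sentence, "from (\ref{eq:1.4}) and the transformation formula and asymptotics for characters of integrable $\hat{\fg}$-modules," which is precisely your plan of pushing the $S$-transformation (resp.\ the substitution $(\tau,-\tau z,0)$) through $\phi_{u,y}$ and invoking the Kac--Peterson $S$-matrix and asymptotics at level $p-h^\vee$, then reorganizing the sum over $\hat{P}^{p-h^\vee}_+\times\hat{P}^{\vee\, u-h}_+$ into a sum over $Pr^k$ via Proposition~\ref{prop:1.3}. The only caveat is that the scalar bookkeeping you defer (e.g.\ the exact sign and scaling of $\beta$ in the transformed $\fh$-coordinate) is where all the content of the explicit constants $a(\lambda,\lambda')$ and $b(\lambda,z)$ resides, but your outline is the same as the source's.
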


\section{Characters of $W$-algebras $W_k (\fg ,f)$.}  
\label{sec:2}

\subsection{Construction of simple vertex algebras  $W_k (\fg ,f)$.}
\label{sec:2.1}~~~First, recall the  definition of the vertex algebra  
$W_k (\fg
,f)$ \cite{KRW}, \cite{KW3}.  Let $\fg$ be a finite-dimensional
simple Lie algebra with the normalized invariant bilinear form as
in Sec.~\ref{sec:1.1}, and let $f$ be a nilpotent element of $\fg$.
Include $f$ in an $\sl_2$-triple $f,x,e$, so that $[x,e] =e$,
$[x,f]=-f$ and $[e,f]=x$.  We have the eigenspace decomposition
of $\fg$ with respect to $\ad x$:
\begin{displaymath}
  \fg = \bigoplus_{j \in \frac12 \ZZ} \fg_j \, .
\end{displaymath}
Let $\fg_{\pm} = \oplus_{j>0} \fg_{\pm j}$, and let $A_{\ch} =
\fg_+ + \fg_-$ with odd parity.  The restriction of the bilinear
form $(\, . \, | \, . \, )$ to $A_{\ch}$ is non-degenerate and
skew-supersymmetric (since it is symmetric before the change of
parity).  Denote by $A_{\ne}$ the (even) vector space $\fg_{1/2}$
with the non-degenerate skew-symmetric bilinear form
\begin{displaymath}
  \langle a,b \rangle = (f| [a,b]) \,, \quad a,b \in \fg_{1/2}\, .
\end{displaymath}

Introduce the following differential complex $(\C^k (\fg ,f), \,
d_0)$, where
\begin{displaymath}
  \C^k (\fg ,f) = V^k (\fg ) \otimes F (\fg ,f)
\end{displaymath}
is the tensor product of the universal affine vertex algebra of
level~$k$ and $F (\fg ,f) = F (A_{\ch}) \otimes F (A_{\ne})$ is
the tensor product of vertex algebras of free superfermions,
based on $A_{\ch}$ and $A_{\ne}$, and $d_0$ is an odd derivation
of the vertex algebra $\C^k (\fg ,f)$, such that $d^2_0 =0$.

In order to define the differential $d_0$, choose a basis $\{
u_{\alpha} \}_{\alpha \in S_j}$ of each subspace $\fg_j$ of
$\fg$, and let $S_+ = \cup_{j>0} S_j \, , \quad S=\cup_j S_j$.
Then $ \{ u_{\alpha} \}_{\alpha \in S_+}$ is a basis of $\fg_+$,
and we denote by $\{ u^{\alpha} \}_{\alpha \in S_+}$ the dual
basis of $\fg_-$.  Let $\{ \varphi_{\alpha }, \varphi^{\alpha}
\}_{\alpha \in S_+}$ be the corresponding basis of $A_{\ch}$ and
$\{ \phi_{\alpha} \}_{\alpha \in S_{1/2}}$ the corresponding
basis of $A_{\ne}$.  Consider the following odd field of the
vertex algebra $\C^k (\fg ,f)$:
\begin{eqnarray*}
  d(z) &=& d^{\rm st} (z) + \sum_{\alpha \in S_+} (f|u_{\alpha})
     \varphi^{\alpha} (z) + \sum_{\alpha \in S_{1/2}} :
        \varphi^{\alpha}(z) \phi_{\alpha}(z): \, ,\\
\noalign{\hbox{where}}\\
d^{\rm st}(z) &=& \sum_{\alpha \in S_+}:u_{\alpha} (z)
   \varphi^{\alpha} (z):
  - \frac12 \sum_{\alpha ,\beta ,\gamma \in S_+}
    c^{\gamma}_{\alpha\beta}:\varphi_{\gamma}(z) \varphi^{\alpha}
      (z) \varphi^{\beta}(z):
\end{eqnarray*}
and $c^{\gamma}_{\alpha\beta}$ are the structure constants of
$\fg: [u_{\alpha} , u_{\beta}] = \sum_{\gamma}
c^{\gamma}_{\alpha\beta} u_{\gamma} \,(\alpha ,\beta, \gamma \in
S)$.  Then $d_0 : = \Res_z \, d (z)$ is an odd derivation of all
products of the vertex algebra $\C^k (\fg ,f)$ and $d^2_0 =0$.

The homology of the complex $(\C^k (\fg ,f), \, d_0)$ is a vertex
algebra, denoted by $W^k (\fg ,f)$.

We shall assume that $k+h^\vee \neq 0$.  Then one can construct
the following Virasoro field $L (z)$ of $\C^k (\fg ,f)$, making
the latter a conformal vertex algebra:

\begin{displaymath}
  L(z) = L^{\fg} (z) + \partial_z x(z) + L^{\ch} (z)+L^{\ne}(z)\, .
\end{displaymath}
Here $L^{\fg} (z) = \frac{1}{2(k+h^\vee)} \sum_{\alpha \in S}
:u_{\alpha} (z) u^{\alpha} (z):$ is the Sugawara construction,
and
\begin{eqnarray*}
  L^{\ch} (z) &=& \sum_{\alpha \in S_+} (\alpha(x) :\partial_z
    \varphi_{\alpha} (z) \varphi^{\alpha}(z): + (1-\alpha(x)):\partial_z 
      \varphi^{\alpha} (z) \varphi_{\alpha}(z) :) \, , \\
  L^{\ne} (z) &=& \frac12 \sum_{\alpha \in S_{1/2}} :
     \partial_z \phi^{\alpha}(z) \phi_{\alpha} (z) :\, ,
\end{eqnarray*}
where 
$\{ \phi^{\alpha}
\}$ is a basis of $A_{\ne}$, dual to $\{ \phi_{\alpha}\}$ with
respect to the bilinear form $\langle \, . \, , \, . \,
\rangle$.  

With respect to $L(z)$ the field $d(z)$ is primary of conformal weight~$1$.
In the language of $\lambda$-brackets, this means that $[d_{\lambda}L]=\lambda
d$, hence $d_0 (L) = [d_{\lambda} L] |_{\lambda =0}=0$.  Thus,
the homology class of $L$ defines the Virasoro field
$L(z)=\sum_{n \in \ZZ} L_n z^{-n-2}$ of the vertex algebra $W^k
(\fg ,f)$, making it a conformal vertex algebra.  
Its central charge is given by the following formula
\cite{KW1}:
\begin{equation}
  \label{eq:2.1}
  c(\fg ,f,k) = \dim \fg_0 -\frac12 \dim \fg_{1/2}
     - \frac{12}{k+h^\vee} |\rho - (k+h^\vee) x |^2 \, .
\end{equation}

Above and further on we identify the fields of a vertex algebra,
like $d(z)\, ,\,  L(z)\, , \ldots ,$ with the corresponding states
$d\, , \, L\, , \ldots,$ via the field-state
correspondence: $a=a(z) \vac |_{z=0}$.

Let $\fg^f$ be the centralizer of $f$ in $\fg$.  Then the
$\frac12 \ZZ$-grading, induced on $\fg^f$ from $\fg$ is of the form
\begin{displaymath}
  \fg^f = \sum_{j\in \frac12 \ZZ_+} \fg^f_{-j} \, .
\end{displaymath}
It was proved in \cite{KW3} that the vertex algebra $W^k (\fg,f)$
is strongly generated by $\dim \fg^f$ fields $J^{\{ a\}} (z)$,
where $a$ runs over a basis of $\fg^f$, compatible with the above
$\frac12 \ZZ$-grading, and $J^{\{ a \}} (z)$ has conformal weight
$1+j$ if $a \in \fg_{-j}$.

Therefore the operator $L_0$ is diagonalizable on $W^k (\fg ,f)$
with spectrum in $\frac12 \ZZ_+$ and finite-dimensional
eigenspaces, the $0$\st{th} eigenspace being $\CC\vac$.  Hence
the vertex algebra $W^k (\fg ,f)$ has a unique maximal ideal $I$,
and $W_k (\fg ,f) = W^k (\fg ,f)/I$ is a simple vertex algebra.


Given $v \in \fg_0$, the subspaces $\fg_+$ (resp. $\fg_{1/2}$)
are $\ad v$-invariant; let $c^{\alpha}_{\beta} (v)$, $\alpha
,\beta \in S_{+ (\hbox{  resp.  }1/2)}$ be the matrix of $\ad v$ on
$\fg_{+ (\hbox{  resp.  }1/2)}$  in the basis $\{ u_{\alpha}
\}_{\alpha \in S_{+ (\hbox{  resp.  }1/2)}}$, i.e., $[v,u_\beta]
=\sum_\alpha c^\alpha_\beta (v) u_\alpha$.  Define the following field:
\begin{eqnarray*}
  J^{\{ v \}} (z) = v(z) + v^{\ch}(z)+v^{\ne}(z)\, ,
\end{eqnarray*}
where
\begin{eqnarray*}
v^{\ch}(z)=\sum_{\alpha ,\beta \in S_+}
  c^\alpha_\beta (v) : \varphi_{\alpha} (z) \varphi^\beta (z):\, , \,\,
    v^{\ne}(z)= -\frac12 \sum_{\alpha ,\beta \in S_{1/2}} c^\alpha_\beta
      (v) :\phi_\alpha (z) \phi^\beta (z): \, .
\end{eqnarray*}

\begin{lemma}
  \label{lem:2.1}
  \begin{enumerate}\begin{enumerate}
  
\item 
For any $v \in \fg_0$, we have
\begin{displaymath}
  d^{\rm st}_0 (J^{\{ v \}}) =0 \, .
\end{displaymath}

\item 
Provided that $v \in \fg^f_0$, we have:
\begin{displaymath}
  d_0 (J^{\{ v \}}) =0 \, .
\end{displaymath}

  \end{enumerate} \end{enumerate}
\end{lemma}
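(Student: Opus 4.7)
The plan is to verify both parts by direct $\lambda$-bracket (OPE) computation, exploiting the tensor decomposition $\C^k(\fg,f) = V^k(\fg)\otimes F(A_{\ch})\otimes F(A_{\ne})$. Split $d(z) = d^{\rm st}(z) + d_+(z) + d_{1/2}(z)$, where $d_+(z) = \sum_{\alpha \in S_+}(f|u_\alpha)\varphi^\alpha(z)$ and $d_{1/2}(z) = \sum_{\alpha \in S_{1/2}}:\varphi^\alpha(z)\phi_\alpha(z):$\,; correspondingly, $J^{\{v\}} = v + v^{\ch} + v^{\ne}$ has its three summands in the three tensor factors.

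For part (a), observe that $d^{\rm st}$ involves only the fields $u_\alpha$, $\varphi_\alpha$, $\varphi^\alpha$ and in particular commutes with everything in $F(A_{\ne})$; hence $d^{\rm st}_0(v^{\ne})=0$ for free. I then compute $d^{\rm st}_0(v)$ using $v(z)u_\alpha(w) \sim [v,u_\alpha](w)/(z-w) + k(v|u_\alpha)/(z-w)^2$: the double-pole term drops because $v \in \fg_0$ is orthogonal to $\fg_+$, and the single-pole term contributes $\sum_{\alpha,\beta \in S_+} c^\beta_\alpha(v):u_\beta\varphi^\alpha:$\,. Next I compute $d^{\rm st}_0(v^{\ch})$: the single $\varphi\varphi$-contractions with the quadratic ghost piece of $d^{\rm st}$ produce exactly the negative of the previous term, while the double contractions with the cubic ghost piece give a triple-structure-constant expression that vanishes by the Jacobi identity in $\fg_+$. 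This is the standard statement that the ``hatted'' current $v + v^{\ch}$ represents a BRST-invariant lift of $v$ in the classical semi-infinite complex associated to $\fg_+$.

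For part (b), I use (a) to reduce to showing $(d_+)_0(J^{\{v\}}) + (d_{1/2})_0(J^{\{v\}}) = 0$ when $v \in \fg^f_0$; I expect both summands to vanish individually. The field $d_+$ contains neither $u_\alpha$ nor $\phi_\alpha$, so $(d_+)_0$ annihilates $v$ and $v^{\ne}$; on $v^{\ch}$, a single $\varphi$-$\varphi$ contraction yields
\begin{displaymath}
(d_+)_0(v^{\ch}) = \sum_{\beta \in S_+}(f|[v,u_\beta])\varphi^\beta = \sum_{\beta \in S_+}([f,v]|u_\beta)\varphi^\beta,
\end{displaymath}
which is zero since $[f,v]=0$. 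For $d_{1/2}$, whose zero mode kills $v$ but not $v^{\ch}$ or $v^{\ne}$, one uses $\varphi^\alpha\varphi_\gamma \sim \delta^\alpha_\gamma/(z-w)$ together with the $\phi\phi$ OPE dictated by the skew form $\langle\cdot,\cdot\rangle$ on $\fg_{1/2}$; invoking the identity $\langle [v,a], b\rangle + \langle a, [v,b]\rangle = ([f,v]|[a,b])$ for $a,b \in \fg_{1/2}$, the contributions from $v^{\ch}$ and $v^{\ne}$ cancel once $[f,v]=0$ is applied.

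The main obstacle is the bookkeeping in part (a): keeping track of fermionic signs and the double contraction of the cubic ghost term $:\varphi_\gamma\varphi^\alpha\varphi^\beta:$ against $v^{\ch}$, where the Jacobi identity in $\fg_+$ supplies the cancellation. Part (b) is conceptually simpler, reducing to the Lie-algebraic input $[f,v]=0$ together with the formula $\langle a,b\rangle = (f|[a,b])$ linking the skew form on $\fg_{1/2}$ with the invariant form on $\fg$.
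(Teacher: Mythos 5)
Your proposal is correct and is essentially the paper's own argument: the paper simply refers to the $\lambda$-bracket computation in [KRW] for part (b) and notes that the same computation gives (a), and your plan carries out exactly that computation, with the right cancellation mechanisms (orthogonality of $\fg_0$ and $\fg_+$, Jacobi identity for the ghost terms, invariance of the form together with $[f,v]=0$, and the symplectic identity on $\fg_{1/2}$).
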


\begin{proof}
(b) was proved in \cite{KRW}, by making use of the
$\lambda$-bracket calculus.  The same proof gives~(a).

\end{proof}

\subsection{Computation of the Euler-Poincar\'e characters in the Ramond 
sector.}
\label{sec:2.2}

We studied in \cite{KW4} the most general $\sigma$-twisted
modules over $W^k (\fg ,f)$.  From the point of view of modular
invariance the most important case is the Ramond twisted one,
when $\sigma = \sigma_R$ is an automorphism of $\fg$, defined by
$\sigma_R = e^{2\pi i \ad x}$.

Since $\sigma_R$ is an inner automorphism of $\fg$, we can choose
a Cartan subalgebra $\fh$ of  $\fg$, contained in $\fg_0$.  Let
$\Delta \subset \fh^*$ be the set of roots, and let $\Delta^j
\subset \Delta$ be the set of roots of $\fh$ in $\fg_j$.  

A natural choice of a subset of positive roots in $\Delta$
is $\Delta_+ = \Delta^0_+\cup(\cup_{j>0}\Delta^j)$, where
$\Delta^0_+$ is a subset of positive roots of the root system $\Delta^0$.

However, we shall need another choice of a subset of positive roots,
which will be denoted by $\Delta^{\new}_+$.
First, we choose the element  $f$ 
to be a sum of root vectors, attached to roots $\gamma_1 ,\ldots
,\gamma_s \in \Delta^{-1}$, such that $\fh^f := \{ h \in \fh
|\, \gamma_i (h) =0\, , \quad i=1,\ldots ,s \}$ is a Cartan
subalgebra of $\fg^f$.  
Next, choose an element $h_0 \in \fh^f$ such that
$\alpha (h_0)= 0$ iff $\alpha |_{\fh^f}=0$ for $\alpha \in \Delta$.
Since the restriction of any root from $\Delta^{1/2}$ to $\fh^f$ is not 
identically zero \cite{EK}, it follows that $\ad h_0 |_{\fg_{1/2}}$ has no 
zero eigenvalues. Finally, let 
$\Delta^0_0 
= \{\alpha \in \Delta^0|\,\, \alpha|_{\fh^f} = 0\}$ and let
$\Delta^0_{0,+}$ be a subset of positive roots in the root system
$\Delta^0_0 $.    
The new set of positive roots in $\Delta$ is as follows:
\begin{displaymath}
  \Delta^{\new}_+ = 
\Delta^0_{0,+} \cup 
\{ \alpha \in \Delta | \hbox{
    either  } \alpha (h_0)>0\, , \hbox{  or  }
    \alpha (h_0) =0 \hbox{  and  }\alpha (x) <0 \}\, .
\end{displaymath}

Since $h_0 \in \fh^f$, \,$\gamma_i(h_0)=0$ for all $i$, hence all 
$\gamma_i$ lie in $\Delta^{\new}_+$.
Also, since all elements from $\fg_0^f$ define symplectic endomorphisms
of $\fg_{1/2}$, \, 
$\Delta_+^{1/2}:=\Delta^{\new}_+ \cap \Delta^{1/2}$ 
(resp. $\Delta_-^{1/2}:=(-\Delta^{\new}_+) \cap \Delta^{1/2})$ 
contains exactly
half of the roots from $\Delta^{1/2}$. Thus $\Delta^{\new}_+$
satisfies the properties, required in \cite{KW3}.

Let $\displaystyle{\hat{\fg}^R = \oplus_{m \in \frac12 \ZZ} (\fg_{\bar{m}}
\otimes t^m) \oplus \CC K \oplus \CC D}$ be the $\sigma_R$-twisted
affine Lie algebra \cite{K1}, where $\displaystyle{\fg_{\bar{m}} =\oplus_{j \in
\bar{m}} \fg_j}$, $\bar{m} \in \frac12 \ZZ / \ZZ$ being the coset of $m
\in \frac12 \ZZ$, and the commutation relations are given by the
same formulas as for $\hat{\fg}$.  Note that both $\hat{\fg}^R$
and $\hat{\fg}$ lie in the affine Lie algebra $\fg [t^{1/2},
t^{-1/2}] \oplus \CC K \oplus \CC D$, and, in fact, they are
isomorphic, $\hat{\fg}$ being mapped to $\hat{\fg}^R$ by the
element $t^x$ of the group $G(\CC [t^{1/2},t^{-1/2}])$.  
Note also that they share the same Cartan
subalgebra~$\hat{\fh}$, and that $\hat{\Delta}^R = t_x (\hat{\Delta})$
is the set of roots of $\hat{\fg}^R$.

For each $\alpha \in \Delta$ define the number
\begin{displaymath}
  s_\alpha = - \alpha (x) \hbox{  if  }\alpha \in \Delta^{\new}_+
      \, , \, = 1-\alpha (x) \hbox{  if  } 
         \alpha \in -\Delta^{\new}_+ \, ,
\end{displaymath}
and introduce the following set of positive roots in the set
$\hat{\Delta}^R$ of roots of $\hat{\fg}^R$:
\begin{displaymath}
  \hat{\Delta}^R_+ = \{ (n +s_{\alpha}) K+\alpha |\, \alpha \in \Delta
     \, , \, n \in \ZZ_+ \} \cup \{ nK |\,n \in \NN \} \, .
\end{displaymath}

Let $\Delta^R$ (resp. $\Delta^R_+$) $= \{ -\alpha (x) K+\alpha |\, 
\alpha \in \Delta$ (resp. $\Delta_+^{\new}$)$\} \subset 
\hat{\Delta}^R$ (resp. $ \hat{\Delta}_+^R$). Let $\Delta^{R,j}_+
= \{\alpha \in \Delta^R_+| (\alpha|x)=j \}$.   
Let $\Delta^{R,f}$ (resp. $\Delta^{R,f}_+)=\{\alpha \in \Delta^R$ (resp. 
$\Delta^R_+)|\,\,\alpha|_{\fh^f}=0\}$.

\begin{lemma}
  \label{lem:2.2}
  \begin{enumerate}\begin{enumerate}

  \item 
There exists $\bar{w} \in W$, such that $\hat{\Delta}^R_+ = t_x \bar{w}
(\hat{\Delta}_+)$.

\item 
$\Delta^{\new}_+ = \bar{w} (\Delta_+)$.

\item 
If $g \in \fg^f$, then $(x |g) =0$.  Consequently, $\fh^f \subset t_x
\bar{w} (\fh )$.

\item 
Let $\prod^R$ be the set of simple roots for $\Delta^{R}_+$. Then
$\Delta^{R,f}_+ \cap \prod^R$ is the set of simple roots for
$\Delta^{R,f}_+$.
  \end{enumerate}\end{enumerate}

\end{lemma}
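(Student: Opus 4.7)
My plan is to prove the four parts in the order (c)-first sentence, (b), (a), (c)-second sentence, (d), since (b) produces the Weyl element $\bar{w}$ on which the others depend. The opening claim of (c) is immediate from invariance of the bilinear form: since $x=[e,f]$ and $g\in\fg^f$ gives $[f,g]=0$, one has $(x|g)=([e,f]|g)=(e|[f,g])=0$.

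For (b), the key observation is that $\Delta^{\new}_+$ is precisely the positive system defined by the lexicographic order on $\fh^*$ with primary criterion evaluation at $h_0\in\fh^f$, secondary criterion $-\alpha(x)$, and tertiary criterion the chosen $\Delta^0_{0,+}$. I would check this is a total order: the only roots $\alpha$ vanishing at the first two levels have $\alpha(h_0)=0$ (hence $\alpha|_{\fh^f}=0$ by the defining property of $h_0$) and $\alpha(x)=0$, placing them in $\Delta^0_0$ where $\Delta^0_{0,+}$ assigns a sign. Simple transitivity of $W$ on positive systems of $\Delta$ then produces a unique $\bar{w}\in W$ with $\bar{w}(\Delta_+)=\Delta^{\new}_+$.

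For (a), I would compute $t_x\bar{w}(\hat{\Delta}_+)$ directly using $\bar{w}(K)=K$, $\bar{w}(D)=D$, and the translation formula $t_x(\alpha+nK)=\alpha+(n-\alpha(x))K$ (which follows from $(\alpha+nK|K)=0$ and $(\alpha+nK|x)=\alpha(x)$). Splitting $\Delta=\Delta^{\new}_+\sqcup(-\Delta^{\new}_+)$, the contributions $\alpha\in\Delta^{\new}_+$ with $n\ge 0$ and $\alpha\in-\Delta^{\new}_+$ with $n\ge 1$ recombine to match $\hat{\Delta}^R_+$ exactly, with the two values $-\alpha(x)$ and $1-\alpha(x)$ of $s_\alpha$ emerging from the two cases. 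The second assertion of (c) then follows: since $\bar{w}$ preserves $\fh$, and for $h\in\fh^f$ the first assertion of (c) gives $(h|x)=0$, we have $t_x(h)=h$, whence $\fh^f\subset t_x\bar{w}(\fh)$.

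Part (d), which I expect to be the main obstacle, reduces to the following standard lex-order fact: if $\Delta^{\new}_+$ is built from a lex order whose primary criterion is evaluation at $h_0$, then the sub-root-system $\{\alpha\in\Delta:\alpha(h_0)=0\}$ has induced positive system with simple roots $\{\alpha\in\prod^{\new}:\alpha(h_0)=0\}$, where $\prod^{\new}$ denotes the simple roots of $\Delta^{\new}_+$. To see this, if $\beta=\sum n_i\alpha_i$ with $\alpha_i\in\prod^{\new}$ and $n_i\in\ZZ_{\ge 0}$ satisfies $\beta(h_0)=0$, the nonnegativity $\alpha_i(h_0)\ge 0$ forces $n_i=0$ whenever $\alpha_i(h_0)>0$, so $\beta$ expands in those simple roots lying in the sub-system, and such simple roots are clearly indecomposable inside it. Using $\alpha(h_0)=0\iff\alpha|_{\fh^f}=0$ (the choice of $h_0$) and transporting via the isomorphism $t_x$ that sends $\Delta^{\new}_+$ to $\Delta^R_+$ yields (d) with $\prod^R=t_x(\prod^{\new})$.
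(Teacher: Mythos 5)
Your proof is correct, but it organizes the argument differently from the paper in parts (a) and (b). The paper works from the affine side: it writes the simple roots of $\hat{\Delta}^R_+$ as $\tilde{\beta}_i = s_{\beta_i}K+\beta_i$, uses $K=\sum_i a_i\tilde{\beta}_i$ together with $(x|K)=0$ to deduce $\sum_i a_i\beta_i=0$ and $\sum_i a_in_i=1$, concludes that exactly one index $i_0$ has $n_{i_0}=1$ and that $\{\beta_i\}_{i\neq i_0}$ is a simple system for $\Delta$, which produces $\bar{w}$ and gives (a) at the level of simple roots; (b) is then read off as a corollary. You go the other way: you first establish (b) by exhibiting $\Delta^{\new}_+$ as the lex-positive system for the flag of functionals $(h_0,\,-x,\,\cdot)$, i.e.\ as the chamber of a regular perturbation $h_0-\epsilon x+\epsilon^2(\cdots)$, and then deduce (a) by an explicit computation of $t_x\bar{w}$ on $\hat{\Delta}_+$ using $t_x(\beta+nK)=\beta+(n-\beta(x))K$. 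The paper's route gets the fact that $\Delta^{\new}_+$ is a positive system for free from the affine structure, while yours is more self-contained and makes transparent why the two branches $s_\alpha=-\alpha(x)$ and $s_\alpha=1-\alpha(x)$ occur. Parts (c) and (d) coincide in substance with the paper's: (c) is the same invariance computation $(x|g)=(e|[f,g])=0$, and (d) rests on the same two facts the paper uses, namely $\alpha(h_0)\ge 0$ for $\alpha\in\Delta^{\new}_+$ and $\alpha(h_0)=0\Leftrightarrow\alpha|_{\fh^f}=0$; the paper phrases it as a single decomposition $\alpha=\beta+\gamma$ with both summands nonnegative at $h_0$, whereas you phrase it via the expansion in simple roots, which is equivalent.
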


\begin{proof}
Let $\hat{\prod}^R = \{ \tilde{\beta}_i = s_{\beta_i} K +
\beta_i |\, i=0,1,\ldots ,r \}$, where $\beta_i \in \Delta$, be the
set of simple roots of $\hat{\Delta}^R_+$.  We have:  $K = \sum_i
a_i \tilde{\beta}_i$, where $a_i$ are positive integers.  Recall
also that $s_{\beta_i} = -\beta_i (x) +n_i$,  where $n_i =0$ or
$1$.

Since $(x |K) =0$,  it follows that:
\begin{displaymath}
  \sum_i a_i \beta_i =0 \hbox{   and   } \sum_i a_in_i =1 \, .
\end{displaymath}
It follows from the second equation that there exists index $i_0$
such that $a_{i_0} =1$ and $n_i = \delta_{i,i_0}$.  Hence $\{
\beta_i \}_{i \neq i_0}$ is a set of simple roots for $\Delta$
and $\tilde{\beta}_{i} =- \beta_i (x)K +\beta_i = t_x \beta_i$ for
all $i \neq i_0$.  Hence there exists $\bar{w} \in W$, such that $\{
\beta_i \}_{i \neq i_0} =\bar{w} (\prod)$, where $\prod$ is the set of
simple roots of $\Delta_+$.  Therefore $\{ \tilde{\beta}_i \}_{i \neq
  i_0} =t_x \bar{w} (\prod)$ and hence $\hat{\prod}^R = t_x \bar{w}
(\hat{\prod})$, where $\hat{\prod}$ is the set of simple roots of
$\hat{\Delta}_+$ proving~(a).

By the proof of (a), $\beta_i \in \Delta^{\new}_+$ if $i \neq i_0$.
Hence, by the construction of $\bar{w}$, (b) follows.

Since $x=[e,f]$, we have: $(x|g) = ([e,f] |g)
=(e|[f,g]) =0$ if $g \in \fg^f$, proving (c).

In order to prove (d), we need to show that if $\alpha 
\in \Delta^{R,f}_+$
and $\alpha=\beta + \gamma$, where $\beta, \gamma 
\in \Delta^{R}_+$, then $\beta , \gamma 
\in \Delta^{R,f}_+$. Evaluating at $h_0$, we have:
$0=\alpha(h_0)= \beta(h_0) + \gamma(h_0)$. But both 
summands on the right are non-negative, by definition of 
$\Delta_+^{\new}$ and $\Delta_+^{R}$. Hence $\beta(h_0)=\gamma(h_0)=0$
and the restriction of $\beta$ and $\gamma$ to $\fh^f$ is zero.
\end{proof}

We have: $\hat{\Delta}^R_+ = t_x \bar{w}(\hat{\Delta}_+)$, 
$\Delta^R = t_x(\Delta)$, $\Delta^R_+ = t_x \bar{w} (\Delta_+)$.
Let $\hat{\rho}^R =
t_x \bar{w} (\hat{\rho})$, $\fh^R = t_x (\fh)$, $D^R = t_x (D)$,
$W^R = t_x Wt^{-1}_x$, $\hat{P}^{k,R}_+ = t_x \bar{w} (\hat{P}^k_+)$,
$Pr^{k,R} = t_x \bar{w} Pr^k$, etc.  Then 
$\hat{W}^R = W^R \ltimes t_{Q^{\vee,R}}$ is the (affine) Weyl
group of $\hat{\fg}^R$, $\tilde{W}^R = W^R \ltimes t_{P^R}$ 
is the extended affine Weyl group, $\hat{\fh} = \fh^R +
\CC K + \CC D^R$ is the ``standard'' decomposition of the Cartan
subalgebra $\hat{\fh}$ of $\hat{\fg}^R$, etc.  

Consider the triangular decomposition $\hat{\fg}^R =\hat{\fn}^R_-
+ \hat{\fh} + \hat{\fn}^R_+$, corresponding to the set of
positive roots $\hat{\Delta}^R_+$, 
and let $M$
be a highest weight $\hat{\fg}^R$-module with
highest weight vector $v_{\Lambda}$, annihilated by
$\hat{\fn}^R_+$, where $\Lambda \in \hat{\fh}^*$ has level~$k$.
By \cite{GK2}, $R_{\hat{\fg}^R} ch_M$ converges to a holomorphic
function in $Y_+$, which extends to a holomorphic 
function in the domain $Y$. 

The $\hat{\fg}^R$-module $M$ extends to a $\sigma_R$-twisted $V^k
(\fg)$-module by the map:
\begin{displaymath}
  e_{\alpha} \mapsto e^R_\alpha (z) : =
    \sum_{n \in (\alpha |x) +\ZZ} (e_{\alpha} t^n )z^{-n-1}\, , 
     \quad h \mapsto h^R (z) = \sum_{n \in \ZZ} (ht^n)z^{-n-1}\, , 
\end{displaymath}
where $e_{\alpha}$ is a root vector of $\fg$, attached to $\alpha
\in \Delta$, and $h \in \fh$.  Likewise we have the
$\sigma_R$-twisted $F (\fg ,f)$-module $F^R (\fg ,f)$, 
given by $\varphi_{\alpha}
\mapsto \varphi^R_{\alpha} (z) := \sum_{n \in \alpha (x)+\ZZ}(\varphi
_{\alpha} t^n) z^{-n-1}$, 
$\varphi^{\alpha} \mapsto \varphi^{\alpha,R} (z) := 
\sum_{n \in \alpha (x)+\ZZ}(\varphi^{\alpha} t^n) z^{-n-1}$, 
and $\phi_{\alpha} \mapsto \phi^R_{\alpha} (z) := 
\sum_{n \in 1/2 +\ZZ}(\phi_{\alpha} t^n) z^{-n-1}$, generated by
the vacuum vector $\vac_R$, subject to the conditions  
$\varphi_{\alpha}t^n\vac_R=\varphi^{-\alpha}t^{n-1}\vac_R=
\phi_{\alpha}t^n\vac_R
=0$ if $\alpha +nK \in \hat{\Delta}_+^R$    
\cite{KW4}.  
We thus obtain the
$\sigma_R$-twisted $\C^k (\fg ,f)$-module $\C^R (M) = M \otimes
F^R (\fg ,f)$.

Introduce the charge decomposition
\begin{displaymath}
  \C^R (M) = \bigoplus_{m \in \ZZ} \C^{R}_m (M)
\end{displaymath}
by letting charge $M=0$, charge~$\vac_R =0$, charge
~$\varphi^R_{\alpha} =1$, charge~$\varphi^{\alpha ,R}=-1$,
charge~$\phi^R_{\alpha}=0$.  The most important
$\sigma_R$-twisted fields of $C^R(M)$ are:
\begin{eqnarray*}
  d \mapsto d^R (z) = \sum_{n \in \ZZ} d^R_n z^{-n-1}\,, \quad
     L \mapsto L^R (z) = \sum_{n \in \ZZ} L^R_n z^{-n-2}\, ,
     \quad\\
     J^{\{ h \}} \mapsto J^{\{ h \}, R} (z) = \sum_{n \in \ZZ}
       J^{\{ h \} ,R}_n z^{-n-1} \qquad ( h \in \fh)\, .
\end{eqnarray*}
Since $(d^R_0 )^2 =0$, we obtain a complex $(\C^R (M) , d^R_0)$,
where $d^R_0 : \C^R_{m}(M) \to \C^R_{m-1} (M)$.  The homology of
this complex is canonically a $\sigma_R$-twisted $W^k (\fg
,f)$-module
\begin{displaymath}
  H^R (M) = \bigoplus_{m \in \ZZ} H^{R}_m (M) \, ,
\end{displaymath}
where all $H^{R}_m(M)$ are submodules (see \cite{KW4} for details).

Recall \cite{KRW} that with respect to $L_0$ the fields
$e_{\alpha}(z)$, $h(z)$, $\varphi_{\alpha}(z)$,  $\varphi^{\alpha}(z)$, 
$\phi_{\alpha} (z)$ have
conformal weights $1-\alpha (x)$, $1$, $1-\alpha (x)$, $\alpha
(x)$ and $\alpha (x) = 1/2$, respectively.  Writing the
corresponding twisted fields in the form $a(z) = \sum_n a_n
z^{-n-\Delta _{a}}$, where $\Delta_a$ is the conformal weight of
$a(z)$, we obtain:
\begin{eqnarray*}
  e^R_{\alpha} (z) &=& \sum_{n \in \ZZ} e_{\alpha ,n} 
     z^{-n-1+\alpha (x)} \, , \quad  h^R (z) = \sum_{n \in \ZZ} h_n z^{-n-1}
    \, , \quad     \varphi^R_{\alpha} (z) = \sum_{n \in \ZZ}
       \varphi_{\alpha ,n} z^{-n-1+\alpha (x)}\, ,\\[1ex]
  \varphi^{\alpha ,R} (z) &=& \sum_{n \in \ZZ}
     \varphi^{\alpha }_n z^{-n- \alpha (x)} \, , \quad
       \phi^R_{\alpha} (z) = \sum_{n \in \ZZ}\phi_{\alpha ,n}
         z^{-n-1/2} \, .
\end{eqnarray*}

\begin{lemma}
  \label{lem:2.3}
Let $v=v_{\Lambda} \otimes \vac_R \in M
\otimes F^R (\fg ,f)$. Then
%
\begin{eqnarray*}
  e_{\alpha ,n} v &=& 0 \hbox{  if  } \alpha \in \Delta^{\new}_+ \, ,
     \,   n \geq 0 \,\, (\hbox{resp.  } \alpha \in -\Delta^{\new}_+ \, ,
       \,   n>0)\, , \\[1ex]
 h_n v &=& \delta_{0,n} \Lambda(h)v \,\, \hbox {if} \,\, h \in \fh \, , n \geq 0
\, , \\[1ex]
 \varphi_{\alpha ,n} (\hbox{or  } \phi_{\alpha ,n} )v &=&
    0 \,\, \hbox{if  } \alpha \in \Delta^{\new}_+ \, , \, n \geq 0 
\,\, (\hbox{resp.  } \alpha \in -\Delta^{\new}_+ \, ,\,   n>0)\, , \\[1ex]
   \varphi^{\alpha}_n v &=& 0 \,\, \hbox{if } \alpha \in \Delta^{\new}_+\, , \, 
      n>0  \,\, (\hbox{resp.   } \alpha \in -\Delta^{\new}_+ \, , \, 
         n \geq 0)\, .
\end{eqnarray*}

\end{lemma}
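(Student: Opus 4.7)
The plan is to translate each assertion into a statement about the action of an element $a\otimes t^m$ (for $a\in\fg$, or a fermion generator) and then compare with the defining annihilation conditions for $v_\Lambda$ and $\vac_R$. The key bookkeeping ingredient is matching two different ways of indexing the modes of a $\sigma_R$-twisted field: the ``integer'' indexing $a(z)=\sum_n a_n z^{-n-\Delta_a}$ used in the statement of the lemma, versus the ``$\fg$-theoretic'' indexing $a^R(z)=\sum_m (a\otimes t^m) z^{-m-1}$ used to define the twisted module. Equating exponents of $z$ gives the identifications
\begin{equation*}
e_{\alpha,n}=e_\alpha t^{n-\alpha(x)},\qquad h_n=h t^n,\qquad \varphi_{\alpha,n}=\varphi_\alpha t^{n-\alpha(x)},\qquad \varphi^\alpha_n=\varphi^\alpha t^{n+\alpha(x)-1},\qquad \phi_{\alpha,n}=\phi_\alpha t^{n-1/2}.
\end{equation*}

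Once these identifications are set down, every assertion reduces to checking membership in $\hat{\Delta}^R_+$ using the explicit description
\[
\hat{\Delta}^R_+=\{(n+s_\alpha)K+\alpha\mid \alpha\in\Delta,\,n\in\ZZ_+\}\cup\{nK\mid n\in\NN\},
\]
where $s_\alpha=-\alpha(x)$ for $\alpha\in\Delta^{\new}_+$ and $s_\alpha=1-\alpha(x)$ for $\alpha\in-\Delta^{\new}_+$. For example, for $\alpha\in\Delta^{\new}_+$ the element $e_\alpha t^{n-\alpha(x)}$ lies in $\hat{\fn}^R_+$ iff $n-\alpha(x)=n'+s_\alpha=n'-\alpha(x)$ for some $n'\ge 0$, i.e.\ iff $n\ge 0$; for $\alpha\in-\Delta^{\new}_+$ one needs $n-\alpha(x)=n'+1-\alpha(x)$ for some $n'\ge 0$, i.e.\ $n>0$. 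This disposes of the assertion for $e_{\alpha,n}$ by invoking $\hat{\fn}^R_+ v_\Lambda=0$; the case of $h_n$ is identical, using in addition that $h v_\Lambda=\Lambda(h)v_\Lambda$. The fermionic cases are handled the same way after substituting the identifications above into the defining relations
\[
\varphi_\alpha t^n\vac_R=\varphi^{-\alpha}t^{n-1}\vac_R=\phi_\alpha t^n\vac_R=0\quad\text{if }\alpha+nK\in\hat{\Delta}^R_+.
\]
In particular, the shift $n\mapsto n-1$ in the defining condition for $\varphi^{-\alpha}$ is precisely what interchanges the roles of ``$n\ge 0$'' and ``$n>0$'' between $\varphi_{\alpha,n}$ and $\varphi^\alpha_n$.

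The only place requiring a little care, and which I expect to be the main (mild) obstacle, is the handling of $\varphi^\alpha_n$: one must substitute $\alpha\to -\alpha$ in the defining annihilation relation for $\vac_R$, and then reconcile the resulting shift with the exponent identification $\varphi^\alpha_n=\varphi^\alpha t^{n+\alpha(x)-1}$; once this is done correctly, the two sub-cases $\alpha\in\pm\Delta^{\new}_+$ fall out exactly as stated. Apart from that, the proof is a direct unpacking of definitions, and no further structural input (beyond Lemma~\ref{lem:2.2} for the choice of $\Delta^{\new}_+$) is needed.
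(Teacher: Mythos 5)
Your proposal is correct and follows essentially the same route as the paper's (very terse) proof: both reduce each assertion to the defining annihilation conditions for $v_\Lambda$ and $\vac_R$ via the mode identifications $e_{\alpha,n}=e_\alpha t^{n-\alpha(x)}$, $\varphi^\alpha_n=\varphi^\alpha t^{n+\alpha(x)-1}$, etc., and then test membership in $\hat\Delta^R_+$ using $s_\alpha=-\alpha(x)$ versus $1-\alpha(x)$. Your treatment of $\varphi^\alpha_n$ (substituting $\alpha\to-\alpha$ in the relation for $\varphi^{-\alpha}t^{n-1}\vac_R$ and using $s_\alpha+s_{-\alpha}=1$) is exactly the step the paper leaves implicit, and your version works it out correctly.
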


\begin{proof}

It follows from the fact that $e_{\alpha ,n}v=0$, 
$\varphi_{\alpha ,n}v=0$ and $\phi_{\alpha ,n}v=0$ if $n- \alpha
(x) \geq s_{\alpha}$, and $\varphi^{\alpha}_n v=0$ if $n-
\alpha (x) \geq -s_{\alpha}$ by the construction of
$\hat{\Delta}^R_+$ and of $F^R (\fg ,f)$, see \cite{KW3}.
\end{proof}

As usual,  define the
Euler--Poincar\'e character of $H^R (M)$ by the following
formula:
\begin{displaymath}
  \ch_{H^R (M)} (\tau ,z,t) =e^{2 \pi i kt} \sum_{m \in \ZZ} (-1)^m 
     \tr_{H_m^{R}(M)} q^{L^R_0} e^{2\pi i J_0^{\{ z \}, R}} \, , 
\end{displaymath}
where $q=e^{2 \pi i \tau}$, $\tau$ lies in the complex upper half
plane $\CC^+$, $t \in \CC$ and $z \in \fh^f$.

By the Euler--Poincar\'e principle we have
\begin{displaymath}
  \ch_{H^R (M)} (\tau ,z,t) =
  \ch_{\C^R (M)} (\tau ,z,t) \hbox{  for  } \tau \in \CC^+ \, , \, 
  z \in \fh^f , \,  t \in \CC. 
\end{displaymath}
Unfortunately, the right-hand side converges only for generic $z
\in \fh$ and may diverge for all $z \in \fh^f$.

To get around  this difficulty, we use a trick similar to the one
employed in \cite{FKW}.

\begin{definition}
  \label{def:2.4}
An element $f$ of $\fg$ is called a nilpotent element of
\emph{principal type} if $f$ is a principal nilpotent element in
the centralizer of $\fh^f$ (equivalently, if $\Delta^0_0 =\emptyset$).

\end{definition}

Recall that in $\fg = \sl_{r+1}$ all nilpotent elements are of principal type,
but in general this is not the case due to existence of non-principal 
nilpotents $f$ with $\fh^f=0$.

%

Introduce a bigrading
\begin{displaymath}
  \C^R (M) = \bigoplus_{m,n \in \frac12 \ZZ}\,  \C_{m,n} \, , 
\end{displaymath}
by letting
\begin{eqnarray*}
  \deg v_{\Lambda} \otimes \vac_R =(0,0) \, , \quad
  \deg e_{\alpha} t^n = (\alpha (x)\, , \,  - \alpha (x))\, , \quad
  \deg ht^n = (0,0)\, , \\[1ex]
  \deg \varphi_{\alpha}t^n = (\alpha (x)-1 \, , \, -\alpha (x))
    = -\deg \varphi^{\alpha}t^n \, , \quad \deg \phi_{\alpha}t^n =
    (0,0)\, ,
\end{eqnarray*}
and introduce the ascending fibration
\begin{displaymath}
  F_p = \bigoplus_{\substack{m \leq p \\n \in \frac12 \ZZ}}
   \C_{m,n} \, , \quad p \in \frac12 \ZZ \, .
\end{displaymath}
It is clear that $d^R_0 : F_p \to F_{p+1}$, hence we have the
associated graded complex $(\Gr\,  \C^R (M) = \oplus_{p \in \frac12
  \ZZ} F_p /F_{p-1/2} \, ,\, \Gr \, d^R_0)$.  It is also clear that
\begin{displaymath}
  \Gr \, d^R_0 = d^{{\rm st} ,R}_0 \, .
\end{displaymath}
We thus obtain a spectral sequence with the first term $(E_1 =
\Gr \, \C^R (M) \, , \,  d^{{\rm st} ,R}_0)$.

First of all, we need to show that this spectral sequence
converges.  For this (and for computation of characters) we need
the following commutation relations.

\begin{lemma}
  \label{lem:2.4}

  \begin{enumerate}\begin{enumerate}

  \item 
With respect to $\ad L^R_0$ all operators $e^R_{\alpha, n}$,
$\varphi^R_{\alpha ,n}$, $\varphi^{\alpha ,R}_n$ and
$\phi^R_{\alpha ,n}$ are eigenvectors with eigenvalue $-n$.

\item 
With respect to $\ad J^{\{ h \},R}_0$ with $h \in \fh$ the operators
$e^R_{\alpha ,n}$ and $\varphi^R_{\alpha ,n}$ have eigenvalue
$\alpha (h)$, and $\varphi^{\alpha ,R}_n$  has eigenvalue $-\alpha
(h)$.

\item 
With respect to $\ad J^{\{ h \},R}_0$ with $h \in \fh^f$ (resp. to
$\ad J^{\{ x \},R}_0$) the operators $\phi^R_{\alpha ,n}$ have
eigenvalue $\alpha (h)$ (resp.~$0$).

  \end{enumerate}\end{enumerate}

\end{lemma}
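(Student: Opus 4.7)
The plan is to verify each statement by a direct mode computation, using the explicit formulas for $L(z)$ and the currents $J^{\{v\}}(z)$ from Section 2.1. The underlying principle is that a field $a(z) = \sum_n a_n z^{-n - \Delta_a}$ that is an $L_0$-eigenvector of weight $\Delta_a$ in a conformal vertex algebra automatically satisfies $[L_0, a_n] = -n a_n$ via $[L_0, a(z)] = (z\partial_z + \Delta_a) a(z)$, and an entirely analogous identity holds for a weight-one current $J(z)$ acting on a $J_{(0)}$-eigenstate. Since both $L$ and $J^{\{v\}}$ for $v \in \fg_0$ are $\sigma_R$-invariant elements of $\C^k(\fg,f)$, all such mode identities pass verbatim to the twisted module $\C^R(M)$. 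For (a), I would then just read off the conformal weights of the four generating fields from $L = L^\fg + \partial x + L^{\ch} + L^{\ne}$: they are $1-\alpha(x)$, $1-\alpha(x)$, $\alpha(x)$, $\tfrac12$ respectively, matching exactly the exponents in the twisted mode expansions displayed just above the lemma, so $[L^R_0, a^R_n] = -n a^R_n$ is immediate in each of the four cases.

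For (b), the approach is to decompose $J^{\{h\}} = h + h^{\ch} + h^{\ne}$ with the three summands lying in the commuting tensor factors $V^k(\fg)$, $F(A_{\ch})$, $F(A_{\ne})$ of $\C^k(\fg,f)$, so that at most one summand contributes to any given commutator. The affine bracket $[h, e_\alpha] = \alpha(h) e_\alpha$ gives the stated eigenvalue on $e^R_{\alpha,n}$. For $\varphi_\alpha$ and $\varphi^\alpha$ only $h^{\ch}$ contributes: choosing root-vector bases so that $c^\alpha_\beta(h) = \alpha(h)\delta^\alpha_\beta$, the non-commutative Wick formula applied to $h^{\ch} = \sum c^\alpha_\beta(h) :\varphi_\alpha \varphi^\beta:$ yields $[h^{\ch}_{(0)}, \varphi_\alpha] = \alpha(h)\varphi_\alpha$ and $[h^{\ch}_{(0)}, \varphi^\alpha] = -\alpha(h)\varphi^\alpha$, and these eigenvalues lift to the Ramond modes.

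For (c), only the fermionic summand $h^{\ne} = -\tfrac12 \sum c^\alpha_\beta(h) :\phi_\alpha \phi^\beta:$ can contribute on $\phi^R_{\alpha,n}$. For $h \in \fh^f$ I would take the basis $\{\phi_\alpha\}$ of $\fg_{1/2}$ to consist of root vectors of $\fh$, so that $c^\alpha_\beta(h) = \alpha(h)\delta^\alpha_\beta$; then a direct Wick computation in $F(A_{\ne})$ produces $[h^{\ne}_{(0)}, \phi_\alpha] = \alpha(h) \phi_\alpha$, which passes to the twisted modes. The hardest case, and the main obstacle of the proof, is $h = x$: here $[x, \phi_\alpha] = \tfrac12 \phi_\alpha$ uniformly in $\alpha \in S_{1/2}$, so the naive Wick contribution of $x^{\ne}_{(0)} = -\tfrac14 \sum_\beta (:\phi_\beta \phi^\beta:)_{(0)}$ to the commutator with $\phi_\gamma$ comes out to $\tfrac12 \phi_\gamma$, whereas the stated eigenvalue is $0$. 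The resolution must come from the Ramond-sector normal ordering of the integer-moded symplectic fermions $\phi_{\beta,0}$, $\phi^\beta_0$, which supplies a compensating shift of exactly $-\tfrac12$ to $(x^{\ne})^R_0$ acting on each $\phi^R_{\alpha,n}$. Pinning down this shift precisely—verifying that the Ramond vacuum-energy correction for the fermionic reduction of $\ad x$ on $\fg_{1/2}$ cancels the naive value—is where almost all the genuine work of the lemma lies; parts (a), (b) and the $h \in \fh^f$ case of (c) are essentially bookkeeping given the formulas for $L$ and $J^{\{h\}}$.
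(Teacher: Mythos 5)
Parts (a), (b) and the $h\in\fh^f$ case of (c) are fine and essentially coincide with the paper's (very terse) proof, which just quotes formulas (2.11) and (2.12) of [KRW] and the fact that $L^R_0$ is the energy operator. The genuine gap is exactly where you locate ``almost all the genuine work'': your treatment of $\ad J^{\{x\},R}_0$ on $\phi^R_{\alpha,n}$ rests on a miscalculation, and the mechanism you propose to repair it cannot work. First, the ``naive Wick contribution'' is not $\tfrac12\phi_\gamma$: in $[(:\phi_\beta\phi^\beta:)_{(0)},\phi_\gamma]$ there are \emph{two} contractions, one through $\phi^\beta$ (contributing $+\phi_\gamma$ after summing over $\beta$) and one through $\phi_\beta$ (contributing $-\phi_\gamma$, because the matrix expressing the dual basis $\phi^\beta$ in terms of the $\phi$'s is the inverse of the skew form and hence antisymmetric), and they cancel. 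Equivalently --- and this is the paper's one-line argument --- $x^{\ne}(z)=0$ identically: since $c^\alpha_\beta(x)=\tfrac12\delta^\alpha_\beta$ one has $x^{\ne}=-\tfrac14\sum_\alpha\,{:}\phi_\alpha\phi^\alpha{:}$, which is the contraction of the tensor ${:}\phi_\alpha\phi_\beta{:}$ --- symmetric in $\alpha,\beta$ because the $\phi$'s are even fields whose $\lambda$-bracket is a constant, so quasi-commutativity gives ${:}ab{:}={:}ba{:}$ --- with an antisymmetric matrix. Hence the eigenvalue $0$ is immediate and requires no Ramond-sector correction whatsoever.

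Second, the resolution you propose is structurally impossible: reordering modes in the Ramond sector changes the zero mode $(x^{\ne})^R_0$ only by an additive scalar, and scalars drop out of $\ad$; no normal-ordering (``vacuum-energy'') shift can cancel a putative term $\tfrac12\phi^R_{\gamma,n}$ appearing inside a commutator. Had your naive value been correct, the lemma would simply be false. Once this step is replaced by the observation $x^{\ne}(z)=0$, the rest of your argument is correct and follows the same route as the paper.
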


\begin{proof}
(a) follows from the fact that $L^R_0$ is the energy operator,
(b)~is (2.11)  from \cite{KRW}, the first part of~(c) is (2.12)
from \cite{KRW} and the second part holds since 
$x^{\ne}(z)=0$.

\end{proof}

The following lemma implies the convergence of our spectral
sequence and the convergence of the Euler--Poincar\'e character
of $E_1$.

\begin{lemma}
  \label{lem:2.5}

Let $f$ be a nilpotent element of $\fg$ of principal type.  Then
we can choose $h' \in \fh^f$, such that $\alpha (h')>0$ for all
$\alpha \in \Delta^{0,\new}_+= \Delta^0 \cap \Delta_+^{\new}$.

\begin{list}{}{}

\item (a)~~
Common eigenspaces of $L^R_0$, $J^{\{ h_0\},R}_0$ and $J^{\{ h^{\prime}\},R}_0$
in $F_p$ are finite-dimensional for each $p \in \frac12 \ZZ$.

\item (b)~~
Common eigenspaces of $L^R_0$, $J^{ \{h_0 \},R }_0$, 
$J^{ \{ h^{\prime} \},R}_0$
and $J^{ \{ x \},R}_0$
in $E_1$ are finite-dimensional.

  \end{list}

\end{lemma}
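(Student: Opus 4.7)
The plan is to first exhibit $h'$ concretely, and then, via a uniform PBW-monomial argument, to bound every class of lowering operator by the operators available in (a) or (b).

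\smallskip
\noindent\emph{Existence of $h'$.} Under the principal type hypothesis $\Delta^0_0 = \emptyset$, so for $\alpha \in \Delta^{0,\new}_+$ one has $\alpha(x) = 0$ (as $\alpha \in \Delta^0$), which makes the clause ``$\alpha(h_0) = 0$ and $\alpha(x) < 0$'' of the disjunction defining $\Delta^{\new}_+$ vacuous, and $\alpha \notin \Delta^0_{0,+}$. Hence $\alpha(h_0) > 0$ for every $\alpha \in \Delta^{0,\new}_+$, and one may simply take $h' = h_0$ (any element of a nonempty open cone in $\fh^f$ would work).

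\smallskip
\noindent\emph{PBW argument for (a) and (b).} Use the PBW basis of $\C^R(M) = M \otimes F^R(\fg,f)$ of ordered monomials in the lowering operators (those not annihilating $v_0 := v_\Lambda \otimes \vac_R$, by Lemma~\ref{lem:2.3}) applied to $v_0$. By Lemma~\ref{lem:2.4}, each such monomial is a joint eigenvector for $L^R_0, J^{\{h_0\},R}_0, J^{\{h'\},R}_0, J^{\{x\},R}_0$, and has a well-defined bigrade, with all these quantities additive over factors. Since every lowering operator has $L^R_0$-eigenvalue $|n| \ge 0$, with equality only when the mode index $n = 0$, fixing the total $L^R_0$-eigenvalue bounds $\sum_i |n_i|$ and hence limits all factors with $n \ne 0$ both in count and in allowed modes; the fermionic zero-mode factors contribute boundedly since they come from finitely many roots and each occurs at most once. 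The only remaining degree of freedom is the multiplicity of the bosonic zero-mode operators $e_{\alpha,0}$ with $\alpha \in -\Delta^{\new}_+$. For these, a direct check yields two cases. If $\alpha(h_0) < 0$, then (all other contributions to $J^{\{h_0\},R}_0$ having already been bounded) fixing $J^{\{h_0\},R}_0$ bounds the multiplicity. If $\alpha(h_0) = 0$, then $\alpha \in \Delta^f$; since $\Delta^0_0 = \emptyset$ and $\alpha \in -\Delta^{\new}_+$ with $\alpha(h_0) = 0$, one is forced to have $\alpha(x) > 0$; in (a) the positive first-bigrade contribution $\alpha(x)$ combined with the filtration bound $F_p$ caps the multiplicity, and in (b) the positive $J^{\{x\},R}_0$-eigenvalue $\alpha(x)$ combined with the fixed $J^{\{x\},R}_0$-eigenvalue does the same. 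This proves (a). For (b) one further uses the factorwise identity $J^{\{x\},R}_0 = \mathrm{charge} + \mathrm{first\ bigrade}$ (verified by direct inspection of each operator type) together with the fact that the charge is bounded in absolute value by the total fermionic count, itself bounded by $L^R_0$, to conclude that only finitely many bigrade strips meet a given joint eigenspace.

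\smallskip
\noindent\emph{Main obstacle.} The only genuinely delicate case is $\alpha \in \Delta^f$: the associated operators $e_{\alpha,0}$ have vanishing eigenvalues under all three of $L^R_0, J^{\{h_0\},R}_0, J^{\{h'\},R}_0$. The principal type hypothesis is used precisely at this point to ensure $\alpha(x) > 0$, without which neither the bigrade (in (a)) nor $J^{\{x\},R}_0$ (in (b)) would have anything to bite on.
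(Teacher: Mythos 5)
Your proof is correct and follows essentially the same route as the paper's: reduce via $L^R_0$ to the zero-mode operators, dispose of those whose root has nonzero restriction to $\fh^f$ using $J^{\{h_0\},R}_0$ and $J^{\{h'\},R}_0$, and control the remaining $e_{\alpha,0}$ with $\alpha|_{\fh^f}=0$ by the filtration degree in (a) (resp.\ by $J^{\{x\},R}_0$ in (b)), the principal type hypothesis entering exactly to force $\alpha(x)>0$ there. Your write-up is if anything more explicit than the paper's on the fermionic zero modes, on the identity $J^{\{x\},R}_0=\mathrm{charge}+\mathrm{first\ bigrade}$ controlling which graded pieces of $E_1$ contribute, and your observation that one may take $h'=h_0$ under the principal type assumption is correct.
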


\begin{proof}

The space $C^R (M)$ is obtained by applying to $v_{\Lambda}
\otimes \vac_R$ products of some number of operators of the
following form (see Lemma~\ref{lem:2.3}):

\begin{list}{}{}

\item (1)~~$e_{\alpha ,n}$ with $\alpha \in \Delta^{\new}_+$, $n<0$,

\item (2)~~$e_{\alpha ,n}$ with $\alpha \in -\Delta^{\new}_+$, $n \leq
  0$,

\item (3)~~$h_n$ with $h \in \fh$, $n<0$,

\item (4)~~$\varphi_{\alpha ,n}$ with $\alpha \in \Delta^{\new}_+$
, $n<0$,

\item (5)~~$\varphi_{\alpha ,n}$ with $\alpha \in -\Delta^{\new}_+$, $
 n \leq 0$,

\item (6)~~$\phi_{\alpha ,n}$ with $\alpha \in \Delta^{\new}_+$, $n<0$,

\item (7)~~$\phi_{\alpha ,n}$ with $\alpha \in -\Delta^{\new}_+$, $n
  \leq 0$,

\item (8)~~$\varphi^{\alpha}_n$ with $\alpha \in \Delta^{\new}_+$, $n
  \leq 0$,

\item (9)~~$\varphi^{\alpha}_n$ with $\alpha \in -\Delta^{\new}_+$, $
 n <0$.

\end{list}

By Lemma~\ref{lem:2.4}(a), only application of the operators of
the form, $e_{-\alpha ,0}$ with $\alpha \in \Delta^{\new}_+$,
$\varphi_{-\alpha ,0}$ with $\alpha \in \Delta^{\new}_+$,
$\phi_{-\alpha ,0}$ with $\alpha \in \Delta^{\new}_+$, and $\varphi^0_\alpha$ with
$\alpha \in \Delta^{\new}_+ $ may produce infinitely many states in an
eigenspace of $L^{R}_0$.  Furthermore, by Lemma~\ref{lem:2.4}(b),
application of all these operators to an eigenvector of $J^{\{
  h_0 \},R}_0$ changes the eigenvalue by a non-positive quantity.
Since $\alpha (h_0) >0$ for $\alpha \in \Delta^{1/2}_+$, we
conclude that the operators $\phi_{-\alpha ,0}$ with $\alpha \in
\Delta^{\new}_+$ change the eigenvalue of $J^{\{ h_0 \},R}_0$ by a
negative quantity.

Furthermore, application of the operators $e_{-\alpha ,0}$ with
$\alpha \in \Delta^{0,\new}_+$ change the eigenvalue of $J^{\{ h^\prime
  \},R}_0$ by a negative quantity $-\alpha (h')$.

Finally, application of the operators $e_{-\alpha ,0}$ with
$\alpha \in \Delta^{\new}_+$ either changes the eigenvalue of $J^{\{ h_0
\},R}_0$ by a negative quantity or else does not change this
eigenvalue, but changes the degree of the filtration (resp. the
eigenvalue of $J^{\{ x \},R}_0$) by a positive quantity.

This proves both statements of the lemma.

\end{proof}

Now we are in a position to compute the 
Euler--Poincar\'e character 
$\ch_{H^R (M)} (\tau ,z, t)$, where $z \in \fh^f$,
assuming that $f$ is a nilpotent element of principal type. From
now on we shall assume that $f$ is a nilpotent element of principal
type, when talking about the Euler--Poincar\'e characters.

First, note that, by Lemma~\ref{lem:2.5}(a) our spectral sequence
converges.  Hence, by Lemmas~\ref{lem:2.5}(b) and
\ref{lem:2.1}(a), we have:
\begin{displaymath}
  \ch_{H^R (M)} (\tau ,z,t) = e^{2\pi ikt}\lim_{\epsilon \to 0}
     \sum_{m \in \ZZ} (-1)^m \tr_{\C^{R}_m(M)} q^{L^R_0}
        e^{2\pi i J_0^{ \{ z + \epsilon x \} , R}} \, .
\end{displaymath}
Since $L^R_0 = L^{\fg ,R}_0 -x_0+L^{\ch ,R}_0 + L^{\ne ,R}_0$, we
obtain that the right-hand side is equal to the product of
$e^{2\pi i kt}$ and the following two expressions:
\begin{eqnarray*}
  I &=& \lim_{\epsilon \to 0} \tr_M  \, 
         q^{L^{\fg ,R}_0 -x_0} 
         e^{2\pi i (z+\epsilon x)_0} \, ,\\
  II &=& \lim_{\epsilon \to 0} \str_{F^R(\fg ,f)} 
         q^{L^{{\rm ch} ,R}_0 +L^{\ne , R}_0}
         e^{2\pi i ((z+\epsilon x)^{{\rm ch},R}_0 + 
           (z+\epsilon x)^{{\rm ne} ,R}_0)}\, ,
\end{eqnarray*}
where $\str$ stands for the supertrace.

We let $S_+ = \Delta_+ \backslash \Delta^0_+$ for short (which is
consistent with our earlier notation).

\begin{lemma}
  \label{lem:2.6}

  \begin{enumerate}\begin{enumerate}

  \item 
    The constants $s_{\fg}$, $s_{\ne}$ and $s_{\ch}$, introduced
    in \cite{KW4}, Proposition~3.2, for arbitrary $\sigma$, are
    given by the following formulas for $\sigma = \sigma_R$:
    \begin{displaymath}
      s_{\fg} = - \frac{k}{k+h^\vee} \sum_{\alpha \in S_+}
                \binom{s_\alpha}{2} \, , \quad 
      s_{\ne} = -\frac{1}{16}\dim \fg_{1/2}\, , \quad
      s_{\ch} = (\rho|x) -\frac{h^\vee}{2}|x|^2\, .
    \end{displaymath}

\item 
  The element $\gamma' \in \fh^*$, introduced in \cite{KW4},
  Corollary~3.1, is given by the following formula for  $\sigma =
  \sigma_R$:
  \begin{displaymath}
    \gamma' = \sum_{\alpha \in S_+} s_{\alpha} \alpha - \rho \, .
  \end{displaymath}
%
%

\end{enumerate}\end{enumerate}
\end{lemma}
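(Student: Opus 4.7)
The four quantities $s_{\fg}, s_{\ne}, s_{\ch}, \gamma'$ are defined in \cite{KW4}, Prop.~3.2 and Cor.~3.1, by explicit formulas valid for an arbitrary finite-order automorphism $\sigma$ of $\fg$ acting compatibly on $A_{\ch}$ and $A_{\ne}$. The plan is simply to substitute $\sigma = \sigma_R = e^{2\pi i\,\ad x}$ into those formulas and simplify. The relevant twist data are: $\sigma_R$ acts on the root vector $e_\alpha$ by $e^{2\pi i\alpha(x)}$, trivially on $\fh$, on the pair $\varphi_\alpha, \varphi^\alpha$ by $e^{\pm 2\pi i \alpha(x)}$, and on $\phi_\alpha$ (with $\alpha\in\Delta^{1/2}$) by $-1$. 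The integer $s_\alpha$ introduced before the lemma is precisely the shift aligning the mode-indexing of the Ramond-twisted fields $e_\alpha^R, \varphi_\alpha^R, \varphi^{\alpha,R}$ with the new polarisation $\Delta^{\new}_+$.

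For $s_{\fg}$, the Sugawara normal-ordering shift in the twisted sector is $-\tfrac{k}{k+h^\vee}$ times a sum of $\binom{m_\alpha}{2}$ over positive-moded root-vector generators (the $\fh$-modes contribute nothing because $\sigma_R|_\fh=\mathrm{id}$). Grouping the roots into pairs $\{\alpha,-\alpha\}$ with $\alpha\in S_+$ and inserting the shifts $s_\alpha, s_{-\alpha}$ forced by the new polarisation, the sum collapses onto $\alpha\in S_+$ and yields the stated formula. For $s_{\ne}$, every $\phi_\alpha$ is twisted by $-1$, so the $\dim\fg_{1/2}$ real fermions are in the pure Ramond sector; plugging the standard Ramond ground-state energy per fermion into the \cite{KW4} formula gives $-\tfrac{1}{16}\dim\fg_{1/2}$.

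For $s_{\ch}$, each charged pair $\varphi_\alpha,\varphi^\alpha$ ($\alpha\in S_+$) contributes a quadratic polynomial in $\alpha(x)$ of the standard $bc$-ghost type. Summing over $S_+$ and using the two identities $\sum_{\alpha\in S_+}\alpha(x)=2(\rho|x)$ (from $2\rho=\sum_{\alpha\in\Delta_+}\alpha$, since $\alpha(x)=0$ for $\alpha\in\Delta^0$) and $\sum_{\alpha\in\Delta}\alpha(x)^2 = 2h^\vee|x|^2$ (the Killing-form identity on $\fh$) collapses everything into $(\rho|x)-\tfrac{h^\vee}{2}|x|^2$. Finally, $\gamma'$ is the linear form on $\fh$ assembling the charged-fermion twist weights; plugging in $\sigma_R$ produces $\sum_{\alpha\in S_+}s_\alpha\alpha$, and the $-\rho$ appears because \cite{KW4}, Cor.~3.1, is phrased relative to the original polarisation $\Delta_+$, while the computation above uses $\Delta^{\new}_+$; the difference of the two half-sums of positive roots is precisely $-\rho$.

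The main obstacle is the bookkeeping between the two choices of positive roots $\Delta_+$ and $\Delta^{\new}_+$: this re-partitioning of creation/annihilation modes is exactly what produces the specific combinatorics of $\binom{s_\alpha}{2}$ and the $-\rho$ correction in $\gamma'$. Once the change of polarisation is tracked via the Weyl element $\bar w$ of Lemma~\ref{lem:2.2}, the remaining computations are a direct application of the two root-system identities above.
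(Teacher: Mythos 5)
Your proposal follows essentially the same route as the paper: substitute $\sigma=\sigma_R$ into the explicit formulas (3.9)--(3.11) and Corollary~3.1 of \cite{KW4}, then simplify using $s_\alpha+s_{-\alpha}=1$, $s_\alpha=0$ on $\Delta^0_+$, $s_\alpha=\pm\tfrac12$ on $\Delta^{1/2}_{\mp}$, and the identities $\sum_{\alpha\in S_+}\alpha(x)=2(\rho|x)$ and $\sum_{\alpha\in\Delta}\alpha(x)^2=2h^\vee|x|^2$; all of this matches. The one inaccuracy is your account of the $-\rho$ in $\gamma'$: it does not come from comparing $\Delta_+$ with $\Delta^{\new}_+$ (whose half-sums differ by $\bar w(\rho)-\rho$, not $-\rho$), but from the symmetrized definition $\gamma'=\tfrac12\sum_{\alpha\in\Delta}s_\alpha\alpha$, where each pair $\{\alpha,-\alpha\}$ contributes $(s_\alpha-\tfrac12)\alpha$ by the relation $s_{-\alpha}=1-s_\alpha$, and summing the $-\tfrac12\alpha$ terms over $\Delta_+$ produces $-\rho$.
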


\begin{proof}

Formula for $s_\fg$ follows from \cite{KW4}, formula~(3.9), using
that
\begin{displaymath}
  \binom{s_{-\alpha}}{2} = \binom{s_{\alpha}}{2}
\end{displaymath}
(which follows from $s_{\alpha} + s_{-\alpha} =1$) and $s_{\alpha}
  =0$ if $\alpha = \Delta_+^0$.

Formula for $s_{\ne}$ follows from that in \cite{KW4},
formula~(3.10), since $s_\alpha = \pm 1/2$ for $\alpha \in
\Delta_{\mp}^{1/2}$.

Since $s_\alpha = -\alpha (x)$ or $1-\alpha (x)$, each summand in
\cite{KW4}, formula~(3.11) for $s_\ch$, equals $\alpha (x)
(1-\alpha (x))/2$.  Hence $s_\ch = \frac12 \sum_{\alpha \in S_+}
\alpha (x) -\frac12 \sum_{\alpha \in S_+} \alpha (x)^2 = (\rho
|x) -\frac{h^{\vee}}{2} |x|^2$, proving~(a).

Next, by definition, we have: $\gamma' = \frac12 \sum_{\alpha \in
S_+ \cup \Delta^0_+ } (s_{\alpha} \alpha + s_{-\alpha} (-\alpha))$.  Since
$s_\alpha + s_{-\alpha}=1$ and $s_\alpha =0$ if $\alpha \in
\Delta^0_+$, we obtain: $\gamma' = \sum_{\alpha \in S_+} s_\alpha
\alpha -\frac12 \sum_{\alpha \in S_+ \cup \Delta^0_+} \alpha$, proving~(b).

\end{proof}

\begin{lemma}
  \label{lem:2.7}

  \begin{enumerate}\begin{enumerate}
\item 
$\hat{\rho}^R = h^\vee D -\gamma' + (\sum_{\alpha \in S_+} \alpha
(x) s_\alpha -\rho (x) + \frac{h^\vee}{2} |x|^2)K$.

\item 
$\frac{ (\bar{\Lambda} | \bar{\Lambda} + 2 \bar{\hat{\rho}}^R)}{2
  (k+h^\vee)} + s_{\fg} = \frac{(\Lambda | \Lambda +2
  \hat{\rho}^R)}{2(k+h^\vee)} -\Lambda (D)$ for any $\Lambda \in \hat{\fh}$.

\end{enumerate}\end{enumerate}
\end{lemma}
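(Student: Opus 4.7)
The plan for part (a) is to compute $\hat{\rho}^R = t_x\bar{w}(\hat{\rho})$ directly from the definition. Since $\bar{w}$ sends $\Delta_+$ to $\Delta_+^{\new}$ by Lemma~\ref{lem:2.2}(b), it sends $\rho$ to $\rho^{\new}:=\frac{1}{2}\sum_{\alpha \in \Delta_+^{\new}}\alpha$, and hence $\bar{w}(\hat{\rho}) = h^\vee D + \rho^{\new}$. Applying the translation formula, using that $(h^\vee D + \rho^{\new}|K) = h^\vee$ and $(h^\vee D + \rho^{\new}|x) = (\rho^{\new}|x)$, gives
\begin{displaymath}
\hat{\rho}^R = h^\vee D + \rho^{\new} + h^\vee x - \left(\frac{h^\vee}{2}|x|^2 + (\rho^{\new}|x)\right)K.
\end{displaymath}
The statement of (a) thus reduces to the $\fh$-identity $\rho^{\new} + h^\vee x = -\gamma'$; the identity of the $K$-coefficients then follows automatically by pairing both sides with $x$.

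Using Lemma~\ref{lem:2.6}(b) to rewrite $-\gamma' = \rho - \sum_{\alpha \in S_+}s_\alpha\alpha$, I would split the sum $\sum_{\alpha \in S_+}s_\alpha\alpha$ according to the piecewise definition of $s_\alpha$, obtaining $-\sum_{\alpha \in S_+}\alpha(x)\alpha + \sum_{\alpha \in S_+ \cap (-\Delta_+^{\new})}\alpha$. The identity $\sum_{\alpha \in \Delta}\alpha(v)\alpha = 2h^\vee v$ for $v \in \fh$, coming from the trace normalization of the bilinear form, together with $\alpha(x)=0$ for $\alpha \in \Delta^0$, reduces the first sum to $-h^\vee x$. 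What remains is the combinatorial identity $\rho - \rho^{\new} = \sum_{\alpha \in S_+ \cap (-\Delta_+^{\new})}\alpha$, and this is the main technical point: it forces the compatible choice $\Delta_+^0 = \Delta^0 \cap \Delta_+^{\new}$, which is itself a legitimate positive system on $\Delta^0$ by an argument parallel to Lemma~\ref{lem:2.2}(d). Under this choice $\Delta_+^0 \subset \Delta_+^{\new}$, so $\Delta_+ \setminus \Delta_+^{\new} \subset S_+$, and a pairwise comparison of the two positive systems delivers the claim.

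For part (b), I would decompose $\Lambda = \bar{\Lambda} + kD + \Lambda(D)K$ using $(D|K)=1$, and by (a) write $\hat{\rho}^R = -\gamma' + h^\vee D + cK$ with $c = \sum_{\alpha \in S_+}\alpha(x)s_\alpha - \rho(x) + \frac{h^\vee}{2}|x|^2$. A routine expansion of the inner products yields
\begin{displaymath}
\frac{(\Lambda|\Lambda + 2\hat{\rho}^R)}{2(k+h^\vee)} - \Lambda(D) = \frac{(\bar{\Lambda}|\bar{\Lambda}+2\bar{\hat{\rho}}^R)}{2(k+h^\vee)} + \frac{kc}{k+h^\vee},
\end{displaymath}
so (b) reduces to $s_\fg = \frac{kc}{k+h^\vee}$. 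By Lemma~\ref{lem:2.6}(a) this is equivalent to the numerical identity $c = -\sum_{\alpha \in S_+}\binom{s_\alpha}{2}$, which I would verify by expanding $\binom{s_\alpha}{2}$ case-by-case according to whether $\alpha \in \Delta_+^{\new}$ or not, and invoking $\sum_{\alpha \in S_+}\alpha(x)^2 = h^\vee|x|^2$ together with $\sum_{\alpha \in S_+}\alpha(x) = 2\rho(x)$; the terms involving the cardinality $|S_+\cap(-\Delta_+^{\new})|$ cancel between the two sides. The main obstacle throughout is the bookkeeping of these splittings over $S_+$ and ensuring compatibility of the choice of $\Delta_+^0$; no input beyond Lemmas~\ref{lem:2.2} and~\ref{lem:2.6} is required.
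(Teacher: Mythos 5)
Your proposal is correct, and part (b) coincides with the paper's argument: the same decomposition $\Lambda=\bar{\Lambda}+kD+\Lambda(D)K$, the same reduction to the identity $\sum_{\alpha\in S_+}\bigl(\binom{s_\alpha}{2}+\alpha(x)s_\alpha\bigr)=\rho(x)-\tfrac{h^\vee}{2}|x|^2$, verified by the same case analysis. (In fact each summand equals $\alpha(x)(1-\alpha(x))/2$ in both cases, so no cardinality terms ever appear; your remark about their cancellation is superfluous.) The difference is in part (a). The paper quotes \cite{KW3}, Corollary~3.2, which asserts a priori that $\hat{\rho}^R=h^\vee D-\gamma'+aK$; comparing this with the direct computation $t_x\bar{w}(\hat{\rho})=h^\vee D+\bar{w}(\rho)+h^\vee x-(\tfrac{h^\vee}{2}|x|^2+(x|\bar{w}(\rho)))K$ then yields the $\fh$-identity $-\gamma'=\bar{w}(\rho)+h^\vee x$ for free, and only the scalar $a$ needs to be worked out. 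You instead prove that $\fh$-identity from scratch, using Lemma~\ref{lem:2.6}(b), the Killing-form identity $\sum_{\alpha\in\Delta}(\alpha|v)\alpha=2h^\vee v$, and the standard formula $\rho-\rho^{\new}=\sum_{\alpha\in\Delta_+\setminus\Delta_+^{\new}}\alpha$. This buys self-containedness at the cost of the extra combinatorics, and it usefully makes explicit the compatibility $\Delta^0_+=\Delta^0\cap\Delta_+^{\new}$, which the paper uses silently already in the proof of Lemma~\ref{lem:2.6} (the assertion $s_\alpha=0$ for $\alpha\in\Delta^0_+$ is exactly this condition). Both routes then extract the $K$-coefficient identically.
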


\begin{proof}
By \cite{KW3} , Corollary 3.2, 
we have:
\begin{displaymath}
  \hat{\rho}^R = h^\vee D -\gamma' + aK \hbox{  for some } a \in
  \CC \, .
\end{displaymath}
In order to compute $a$, recall that 
%
\begin{displaymath}
\hat{\rho}^R = t_x\bar{w} (\hat{\rho}) =t_x\bar{w} 
(h^\vee D +\rho)
= h^\vee D + h^\vee x+ \bar{w}(\rho) -
(\frac{h^\vee}{2} |x|^2 + (x|\bar{w} (\rho))K \, .
\end{displaymath}
Comparing with the first formula for $\hat{\rho}^R$, we get:
\begin{displaymath}
  -\gamma' = \bar{w} (\rho) +h^\vee x \hbox{  and  }
  a=-\frac{h^\vee}{2} |x|^2 - (\bar{w}(\rho)|x)\, .
\end{displaymath}
Hence $a= \frac{h^\vee}{2} |x|^2 +(\gamma'|x)$.  Substituting
$\gamma'$ from Lemma~\ref{lem:2.6}(b), we obtain~(a).

We have by definition and (a):
\begin{eqnarray*}
  \Lambda &=& kD + \bar{\Lambda} + (\Lambda |D)K \, , \\
  \hat{\rho}^R &=& h^{\vee}D + \bar{\hat{\rho}}^R 
      +(\sum_{\alpha \in S_+} \alpha (x) s_{\alpha} -
      \rho (x) + \frac{h^\vee}{2} |x|^2) K \, .
\end{eqnarray*}
Hence
\begin{displaymath}
  \frac{(\Lambda | \Lambda + 2\hat{\rho}^R)}{2(k+h^\vee)} = 
  \frac{(\bar{\Lambda}|\bar{\Lambda} +2\bar{\hat{\rho}}^R)}{2(k+h^\vee)}
  +\frac{k}{k+h^\vee} \left( \sum_{\alpha \in S_+} 
      \alpha (x) s_{\alpha}-\rho (x) +\frac{h^\vee}{2}
      |x|^2 \right) + \Lambda (D) \, .
\end{displaymath}
Adding to both sides $s_{\fg}$ and using in the left-hand side
the formula for $s_{\fg}$, given by Lemma~\ref{lem:2.6}(a), we
obtain:
\begin{displaymath}
  \frac{(\Lambda | \Lambda + 2 \hat{\rho}^R)}{2(k+h^\vee)} 
    -\Lambda (D) =
  \frac{(\bar{\Lambda}|\bar{\Lambda} + 2\hat{\rho}^R)}{2(k+h^\vee)}
  + s_{\fg}+ \frac{k}{k+h^\vee} A \, ,
\end{displaymath}
where $A=\sum_{\alpha \in S_+} \left( \binom{s_\alpha}{2} +
  \alpha (x) s_{\alpha}\right) - \rho (x) + \frac{h^\vee}{2} |x|^2$.  
By definition of $s_{\rm ch}$ given by \cite{KW4},
formula~(3.11), and the third formula of Lemma~\ref{lem:2.6}(a),
we obtain that $A=0$, proving~(b).

\end{proof}

\begin{lemma}
  \label{lem:2.8}

  \begin{enumerate} \begin{enumerate}
  \item 
On the $\hat{\fg}^R$-module $M$ (with the highest weight
$\Lambda$) we have:
\begin{displaymath}
  L^{\fg ,R}_0 = \frac{(\Lambda | \Lambda +2 \hat{\rho}^R )}
     {2 (k+h^\vee)} I_M -D \, .
\end{displaymath}

\item 
For $z \in \fh^f$ we have:
\begin{eqnarray*}
\lefteqn{\hspace{-5in} II = q^{s_{\ch} + s_{\ne}} e^{\pi i \sum_{\alpha \in
         \Delta_{1/2}} s_\alpha \alpha (z) + 2\pi i (\bar{\hat{\rho}}^R (z) - \rho (z))}}\\
      \times  \prod_{\alpha \in \hat{\Delta}^R_+} 
          (1-e^{-2\pi i \alpha (z)})^{{\rm mult~~} \alpha}
          \prod^\infty_{j=1} (1-q^j)^{-r}
          \prod_{ \substack{\alpha \in \hat{\Delta}^{R,\re}_+ \\
            \alpha (x) = 0, -1/2}}
          (1 -e^{-2\pi i \alpha (z)})^{-1}\, ,
\end{eqnarray*}
%
  \end{enumerate} \end{enumerate}

\end{lemma}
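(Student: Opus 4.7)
For part (a), the identity is the twisted analogue of the classical Sugawara eigenvalue formula. I would first verify that $L^{\fg,R}_0 + D$ is $\hat{\fg}^R$-central: the Sugawara OPE yields $[L^{\fg,R}_0, at^n] = -n\, at^n$ for every $a \in \fg$ and every admissible $n \in \alpha(x) + \ZZ$, which cancels $[D, at^n] = n\, at^n$. Since $M$ is a highest weight $\hat{\fg}^R$-module, this central operator acts as a single scalar, whose value one reads off from $v_\Lambda$: the only normal-ordered Sugawara terms surviving on a highest weight vector contribute the half-Casimir, producing the claimed eigenvalue $(\Lambda \mid \Lambda + 2\hat\rho^R)/(2(k+h^\vee))$.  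No new ingredient is needed beyond the definition of $\hat\rho^R$ as the half-sum of positive roots of $\hat{\fg}^R$.

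For part (b), the task is to evaluate the regularized supertrace $II$ over the twisted fermion Fock space $F^R(\fg, f) = F(A_{\ch})^R \otimes F(A_{\ne})^R$. This Fock space factors as a tensor product over $\alpha \in S_+$ (charged ghost pairs $\varphi_\alpha, \varphi^\alpha$) and $\alpha \in S_{1/2}$ (neutral fermions $\phi_\alpha$); by Lemmas \ref{lem:2.3} and \ref{lem:2.4}, the $L^{R}_0$- and $J^{\{z\},R}_0$-eigenvalues of each mode and its annihilation behavior on $\vac_R$ are known, so each tensor factor contributes an explicit infinite product of terms $(1-q^n e^{\pm 2\pi i \alpha(z)})$ whose ranges are determined by the partition of $S_\pm$ into $\pm \Delta^{\new}_+$ and by the half-integer shifts of the $\phi$-modes.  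The pivotal rearrangement is to recognize the combined product as the full $\hat{\fg}^R$ Weyl--Kac denominator $\prod_{\alpha \in \hat\Delta^R_+}(1-e^{-2\pi i \alpha(z)})^{\mult \alpha}$, divided by the contributions that the fermions do not produce: the imaginary root factors $\prod_j(1-q^j)^r$ and the real-root factors with $\alpha(x)=0$ or $-1/2$. Placing these explicitly in the denominator yields the form stated in~(b).

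The prefactors are accounted for by the vacuum anomalies. The power $q^{s_\ch + s_\ne}$ is precisely the normal-ordering constant of $L^{\ch,R}_0 + L^{\ne,R}_0$ on $\vac_R$ from \cite{KW4}, Proposition~3.2, whose $\sigma_R$-specialization is Lemma~\ref{lem:2.6}(a); the charge exponential $e^{\pi i \sum_{\alpha \in \Delta_{1/2}} s_\alpha \alpha(z) + 2\pi i (\bar{\hat\rho}^R(z) - \rho(z))}$ is, similarly, the $J^{\{z\},R}_0$-eigenvalue of $\vac_R$, with the element $\gamma'$ from \cite{KW4}, Corollary~3.1, computed in Lemma~\ref{lem:2.6}(b) and rewritten in terms of $\bar{\hat\rho}^R$ and $\rho$ via Lemma~\ref{lem:2.7}(a).

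The main obstacle is bookkeeping rather than conceptual: tracking the mode shifts in $\alpha(x) + \ZZ$, the split of the annihilation rules according to $\pm \Delta^{\new}_+$, the supertrace signs on odd states, and the half-integer $\phi$-shifts all demand care. The $\epsilon x$ regularization is technically indispensable: for $z \in \fh^f$ some factors indexed by roots $\alpha$ with $\alpha|_{\fh^f}=0$ — in particular the $\Delta^0_0$-roots — would vanish individually, so the product must be assembled with $\epsilon \neq 0$, with the limit $\epsilon \to 0$ deferred until after $I$ and $II$ are combined in the subsequent character calculation.
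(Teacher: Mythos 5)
Your plan follows the paper's proof in both parts: for (a) you use that $L^{\fg,R}_0+D$ commutes with the $\hat{\fg}^R$-action and read the scalar off the highest weight vector, and for (b) you compute the fermionic supertrace as an explicit product over the ghost and neutral-fermion modes, rearrange it into the full Weyl--Kac denominator of $\hat{\fg}^R$ divided by the absent factors (imaginary roots and real roots with $\alpha(x)=0,-1/2$), and supply the prefactors from the vacuum anomalies $s_{\ch}, s_{\ne}$ and the vacuum charge, exactly as in the paper (which quotes \cite{KW4}, Proposition~3.2 and Corollaries~3.2--3.3 for these inputs). The one gloss is in (a): the scalar read off from $v_\Lambda$ is $\frac{(\bar{\Lambda}|\bar{\Lambda}+2\bar{\hat{\rho}}^R)}{2(k+h^\vee)}+s_{\fg}$, and its identification with $\frac{(\Lambda|\Lambda+2\hat{\rho}^R)}{2(k+h^\vee)}-\Lambda(D)$ is not automatic from the definition of $\hat{\rho}^R$ but is exactly the content of Lemma~\ref{lem:2.7}(b), which should be invoked at that step.
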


\begin{proof}
By \cite{KW4}, Proposition~3.2 and Corollary~3.2, we have:
\begin{displaymath}
  L^{\fg ,R}_0 v_{\Lambda} = \left( \frac{1}{2{(k+h^\vee)}} 
    ( \bar{|\Lambda} |^2 
      + 2 (\bar{\Lambda} | \hat{\rho}^R)) + s_{\fg} \right)v_{\Lambda}\, .
\end{displaymath}
Hence, by Lemma~\ref{lem:2.7}:
\begin{displaymath}
  L^{\fg ,R}_0 v_\Lambda = \left( 
     \frac{(\Lambda | \Lambda + 2\hat{\rho}^R)}{2(k+h^\vee)}
       - \Lambda (D) \right) v_{\Lambda } \, .
\end{displaymath}
Since $L^{\fg ,R}_0 = -D + {\rm const}$, (a)~follows.

By Lemmas~\ref{lem:2.3}, \ref{lem:2.4} and \cite{KW4},
Proposition~3.2 and Corollary~3.3, we have:
\begin{eqnarray*}
\lefteqn{\hspace{-5in} II = \lim_{\epsilon \to 0} q^{s_{\rm ch} + s_{\rm ne}}
        e^{2\pi i \left( \frac12 \sum_{\alpha \in \Delta^{1/2}} 
            s_{\alpha} \alpha (z+\epsilon x) - 
            (\rho -\bar{\hat{\rho}}^R) (z+\epsilon x)\right)}
            \ch_{F^{R}} (\tau , z+\epsilon x)\, ,}\\
%
\noalign{where}\\
%
  \ch_{F^R} (\tau, z +\epsilon x) 
        = \hspace{-1ex} \prod_{\substack{\alpha \in S_+ \\ n>-s_{\alpha}}}
         \hspace{-.5ex}   \left(1-q^n e^{2\pi i \alpha (z+\epsilon x)}\right)
%
\hspace{-.5ex} \prod_{\substack{\alpha \in S_+ \\ n>-s_{-\alpha}}}
       \hspace{-.5ex} \left(1-q^n e^{-2\pi i \alpha (z+\epsilon x)}\right)
 \hspace{-.5ex} \prod_{\substack{\alpha \in \Delta^{1/2}\\ n>-s_\alpha}}
   \hspace{-.5ex}  \left(  1-q^n e^{2\pi i \alpha (z+\epsilon x)}\right)^{-1}\, .
\end{eqnarray*}
Hence
\begin{eqnarray*}
  II &=& q^{s_{\rm ch} + s_{\rm ne}} 
       e^{\pi i \sum_{\alpha \in \Delta^{1/2}} s_\alpha 
         \alpha (z) +2\pi i (\bar{\hat{\rho}}^R -\rho )(z)}\\
   & \times & \left( 
      \prod_{\substack{\alpha\in \hat{\Delta}^R_+ \\ 
          |\alpha (x)|\geq 1}} (1-e^{-\alpha})
         \prod_{\substack{\alpha \in \hat{\Delta}^R_+\\
            \alpha (x) = 1/2}} (1-e^{-\alpha})\right) (-\tau D+z)\, ,  
\end{eqnarray*}
which proves~(b).

\end{proof}

\begin{lemma}
  \label{lem:2.9}
  \begin{enumerate}\begin{enumerate}

  \item 
    $\sum_{\alpha \in \Delta^{1/2}} s_\alpha \alpha (z)
    =-\sum_{\alpha \in \Delta^{1/2}_+} \alpha (z)$ if $z \in
    \fh^f$.

\item 
All the $\fg^f_0$-modules $\fg_j$ are self-contragredient.  In
particular $\tr_{\fg_j} {\ad}  {\it a} =0$ for all $a \in \fg^f_0$ and $j
\in \frac12 \ZZ$, and $\rho (z) = \frac12 \sum_{\alpha \in
  \Delta^0_+} \alpha (z)$ for $z \in \fh^f$.

  \end{enumerate}\end{enumerate}

\end{lemma}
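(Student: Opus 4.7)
The plan is to prove (b) first and then derive (a) from it. For the self-contragredience in (b), I use the $\sl_2$-triple $(e,2x,f)$ acting on $\fg$ via $\ad$. For any $a\in\fg^f_0$ one has $[a,f]=0$, so $\ad a$ commutes with $\ad f$ on $\fg$. By standard $\sl_2$-representation theory, the operator $(\ad f)^{2j}:\fg_j\to\fg_{-j}$ is a linear isomorphism for each $j\in\frac12\ZZ_{\ge 0}$ (on each irreducible $\sl_2$-isotype it is the standard map from the $2x$-weight $2j$ to the $2x$-weight $-2j$ eigenspace). Commutation with $\ad a$ upgrades this to an isomorphism of $\fg^f_0$-modules $\fg_j\cong\fg_{-j}$. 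Meanwhile, the invariant form on $\fg$ restricts to a non-degenerate $\fg_0$-invariant (and therefore $\fg^f_0$-invariant) pairing $\fg_j\times\fg_{-j}\to\CC$, yielding $\fg_{-j}\cong\fg_j^{\ast}$ as $\fg^f_0$-modules. Composing the two identifications gives $\fg_j\cong\fg_j^{\ast}$. The vanishing of the trace is then automatic: $\tr_{\fg_j}\ad a=\tr_{\fg_j^{\ast}}\ad a=-\tr_{\fg_j}\ad a$.

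For the formula $\rho(z)=\tfrac12\sum_{\alpha\in\Delta^0_+}\alpha(z)$ on $\fh^f$, I take the natural positive system $\Delta_+=\Delta^0_+\cup\bigcup_{j>0}\Delta^j$ and split
\[
\rho(z)=\tfrac12\sum_{\alpha\in\Delta^0_+}\alpha(z)+\tfrac12\sum_{j>0}\sum_{\alpha\in\Delta^j}\alpha(z).
\]
Since $\sum_{\alpha\in\Delta^j}\alpha(z)=\tr_{\fg_j}\ad z$ and $z\in\fh^f\subset\fg^f_0$, each inner sum vanishes by what was just established, completing (b).

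To deduce (a), I plug in the definition of $s_\alpha$. For $\alpha\in\Delta^{1/2}$ we have $\alpha(x)=1/2$, hence $s_\alpha=-1/2$ if $\alpha\in\Delta^{1/2}_+$ and $s_\alpha=1/2$ if $\alpha\in\Delta^{1/2}_-$. Consequently
\[
\sum_{\alpha\in\Delta^{1/2}} s_\alpha\alpha(z)+\sum_{\alpha\in\Delta^{1/2}_+}\alpha(z)=\tfrac12\sum_{\alpha\in\Delta^{1/2}}\alpha(z)=\tfrac12\tr_{\fg_{1/2}}\ad z,
\]
which vanishes on $\fh^f$ by part (b). The only substantive step is the $\fg^f_0$-equivariance of $(\ad f)^{2j}$, but this is immediate from $[\fg^f_0,f]=0$; the remainder is book-keeping.
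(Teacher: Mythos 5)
Your proof is correct and follows essentially the same route as the paper: part (b) via the contragredient pairing $\fg_j\times\fg_{-j}\to\CC$ given by the invariant form combined with the $\fg^f_0$-module isomorphism $(\ad f)^{2j}:\fg_j\to\fg_{-j}$, and part (a) via the values $s_\alpha=\mp 1/2$ on $\Delta^{1/2}_{\pm}$. If anything, you are more explicit than the paper in two useful places: you justify the $\fg^f_0$-equivariance of $(\ad f)^{2j}$ from $[\fg^f_0,f]=0$, and you make visible that deducing (a) requires the vanishing of $\tr_{\fg_{1/2}}\ad z$ on $\fh^f$ from (b), a dependence the paper leaves implicit.
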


\begin{proof}
(a) follows from the fact that $s_{\alpha} = -1/2$ (resp. $=1/2$)
if $\alpha \in \Delta^{1/2}_+$ (resp.~$\Delta^{1/2}_-$).

In order to prove~(b), note that the $\fg^f_0$- (even
$\fg_0$-)modules $\fg_j$ and $\fg_{-j}$ are contragredient since
they are paired by the invariant form $(\, . \, | \, . \, )$.  On
the other hand, $(\ad f)^{2j}$ gives an isomorphism of the
$\fg^f_0$-modules $\fg_j$ and $\fg_{-j}$.

\end{proof}

\begin{lemma}
\label{lem:2.10}
$\{ \alpha  |_{\fh^f}    \,\,\,| \alpha \in \Delta^R_+ \, , \,
\alpha (x) =1/2 \} = \{\alpha |_{\fh^f} \, \,\,| \alpha \in \Delta^R_+ \, ,\, 
\alpha  (x)= -1/2\}$.
\end{lemma}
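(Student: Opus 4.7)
The plan is to reduce the identity to a statement about the $\fh^f$-weight decompositions of $\fg_{1/2}$ and $\fg_{-1/2}$, and then invoke the isomorphism $\ad f : \fg_{1/2} \to \fg_{-1/2}$ already noted in Lemma~\ref{lem:2.9}(b).

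First I would unwind the definitions on the $\Delta^R_+$ side. Every $\beta \in \Delta^R_+$ has the form $\beta = -\alpha(x)K + \alpha$ with $\alpha \in \Delta^{\new}_+$. Since $K$ vanishes on $\fh \supset \fh^f$, the restriction of $\beta$ to $\fh^f$ equals $\alpha|_{\fh^f}$; and since $(K|x)=0$, the condition $\beta(x)=\pm 1/2$ is the same as $\alpha(x)=\pm 1/2$, i.e.\ $\alpha \in \Delta^{\pm 1/2}$. Thus the lemma is equivalent to the multiset equality
\begin{equation*}
\{\,\alpha|_{\fh^f}\ :\ \alpha \in \Delta^{\new}_+ \cap \Delta^{1/2}\,\}
= \{\,\alpha|_{\fh^f}\ :\ \alpha \in \Delta^{\new}_+ \cap \Delta^{-1/2}\,\}.
\end{equation*}

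The main tool is that $(\ad f)|_{\fg_{1/2}} : \fg_{1/2} \to \fg_{-1/2}$ is an isomorphism (by $\sl_2$-representation theory applied to the triple $(e,f,x)$, which is precisely the content of Lemma~\ref{lem:2.9}(b) for $j=1/2$). Since $[\fh^f,f]=0$, this map commutes with the $\fh^f$-action, hence is an isomorphism of $\fh^f$-modules. Consequently the multiset of $\fh^f$-weights of $\fg_{1/2}$ coincides with that of $\fg_{-1/2}$; in particular $\dim(\fg_{1/2})_\lambda = \dim(\fg_{-1/2})_\lambda$ for every $\lambda \in (\fh^f)^*$.

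It remains to check that the selection ``$\alpha \in \Delta^{\new}_+$'' picks out matching pieces on both sides, and this is the only point that requires care. By the choice of $h_0 \in \fh^f$ made in Section~\ref{sec:2.2}, $\ad h_0$ has no zero eigenvalues on $\fg_{1/2}$ (and therefore, by the isomorphism above, on $\fg_{-1/2}$ either). Hence for any $\alpha \in \Delta^{\pm 1/2}$ we have $\alpha(h_0)\neq 0$, so the secondary clause ($\alpha(h_0)=0$ and $\alpha(x)<0$) in the definition of $\Delta^{\new}_+$ is never triggered, and membership in $\Delta^{\new}_+$ reduces to the single condition $\alpha(h_0)>0$. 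Since $h_0 \in \fh^f$, this condition depends only on $\alpha|_{\fh^f}$. The $\ad f$-isomorphism preserves $\fh^f$-weights, hence preserves the sign of the value at $h_0$, so it restricts to an isomorphism between the ``positive $h_0$-parts'' of the $\fh^f$-weight decompositions of $\fg_{\pm 1/2}$; this yields the desired multiset equality. The only subtle step is the reduction of the two-clause definition of $\Delta^{\new}_+$ to the single clause $\alpha(h_0)>0$ on $\Delta^{\pm 1/2}$, which is immediate from the nondegeneracy of $\ad h_0$ on $\fg_{\pm 1/2}$.
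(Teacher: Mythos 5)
Your proof is correct and takes essentially the same route as the paper's: both reduce the statement to comparing the $\fh^f$-weights of the halves of $\fg_{1/2}$ and $\fg_{-1/2}$ singled out by the sign of the value at $h_0$, and both rest on the $\ad f$-isomorphism underlying Lemma~\ref{lem:2.9}(b). The only cosmetic difference is that you invoke the $\fh^f$-equivariant isomorphism $\ad f:\fg_{1/2}\to\fg_{-1/2}$ directly, whereas the paper quotes the self-contragredience conclusion of Lemma~\ref{lem:2.9}(b) in the equivalent form $\Delta^{1/2}_+|_{\fh^f}=-\Delta^{1/2}_-|_{\fh^f}$.
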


\begin{proof}
Recall that $\Delta^{1/2} = \Delta^{1/2}_+  \sqcup
\Delta^{1/2}_-$ and $s_{\alpha} = \mp 1/2$ if $\alpha \in
\Delta^{1/2}_\pm$.  Therefore, if $\alpha \in \Delta^{1/2}_\pm$, then
$\pm (\alpha - K/2) \in \Delta^R_+$.  On the other hand, by
Lemma~\ref{lem:2.9}(b), $\Delta^{1/2}_+ |_{\fh^f}
=-\Delta^{1/2}_- |_{\fh^f}$, which proves the lemma.

\end{proof}

Note that the $D$-operator for the $\sigma_R$-twisted affine Lie
algebra $\hat{\fg}^R$ is
\begin{displaymath}
  D^R : = t_x \bar{w} (D) = D +x- \frac{(x|x)}{2} K \, .
\end{displaymath}
Let
\begin{displaymath}
  R_{\hat{\fg}^R} = e^{\hat{\rho}^R} \prod_{\alpha \in \hat{\Delta}^R_+}
       (1-e^{-\alpha})^{{\rm mult~}\alpha}
\end{displaymath}
be the Weyl denominator for $\hat{\fg}^R$.  Using
Lemmas~\ref{lem:2.8} and \ref{lem:2.9}, we can rewrite $\ch_{H^R
  (M)}$ as follows (here and further $z \in \fh^f$):
\begin{eqnarray}\label{eq:2.2}
\hspace*{6ex}\ch_{H^R(M)} (\tau ,z,t) \!\!\!\!
  & =& \!\!\!\! q^{\frac{(\Lambda |\Lambda+ 2 \hat{\rho}^R)}{2(k+h^\vee)}
     \! \, -\frac{k}{2} (x|x)+s_{\rm ch}+s_{\rm ne}}
%
   e^{\pi i(\sum_{\alpha\in\Delta^{1/2}}s_{\alpha} \alpha
   (z) -  \sum_{\alpha \in \Delta^0_+} \alpha (z))}
    e^{-2\pi i h^\vee t}  \\ 
 &\times & \left(\frac{R_{\hat{\fg}^R} \ch_M}
        {\prod^{\infty}_{j=1} (1-e^{-jK})^r
         \prod_{\substack{\alpha \in \hat{\Delta}^{R,\re}_+ \\
            \alpha (x) = 0,-1/2}} (1- e^{-\alpha})}
           \right) (2\pi i (-\tau D^R + z +tK)).\notag
\end{eqnarray}
This formula makes sense since 
$\alpha |_{\fh^f} \neq 0$ if $\alpha(x)=0$
because $f$ is a nilpotent element of principal type, and 
$\alpha |_{\fh^f} \neq 0$ if $|\alpha(x)|=1/2$ for any nilpotent $f$
\cite{EK}.

Formula (\ref{eq:2.2})implies the following corollary.
\begin{corollary}
  \label{cor:2.1}
The minimal eigenvalue of $L^R_0$ on $H^R (M)$ equals
\begin{displaymath}
  \frac{(\Lambda|\Lambda + 2\hat{\rho}^R)}{2(k+h^\vee)}
  - (\Lambda |D^R) 
+ s_{\ch} + s_{\ne} -\frac{k}{2}(x|x)\, .
\end{displaymath}
%
All other eigenvalues are obtained from this 
by adding a positive integer.
\end{corollary}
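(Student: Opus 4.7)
The plan is to extract the minimum $L^R_0$-eigenvalue by reading off the leading power of $q$ in the explicit character formula~(\ref{eq:2.2}). The scalar prefactor contributes $q^A$ with $A=\frac{(\Lambda|\Lambda+2\hat{\rho}^R)}{2(k+h^\vee)}-\frac{k(x|x)}{2}+s_{\ch}+s_{\ne}$. The rational factor $R_{\hat{\fg}^R}\ch_M$ divided by the denominator in~(\ref{eq:2.2}), evaluated at $2\pi i(-\tau D^R+z+tK)$, must then be analyzed. For a highest weight module $M$ with highest weight $\Lambda$, the product $R_{\hat{\fg}^R}\ch_M$ has $e^{\hat{\rho}^R+\Lambda}$ as its unique highest-weight term in the $D^R$-grading---every other monomial is of the form $\hat{\rho}^R+\Lambda-\sum n_i\alpha_i^R$ with $n_i\in\ZZ_{\geq 0}$ and $\alpha_i^R$ a positive simple root of $\hat{\fg}^R$, which has strictly smaller $D^R$-value. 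After evaluation this contributes $q^{-(\hat{\rho}^R+\Lambda|D^R)}$. The denominator factors each begin at $q^0$: the factor $\prod_j(1-q^j)^r$ has constant term $1$, and each $(1-e^{-\alpha})^{-1}$ with $\alpha(x)\in\{0,-1/2\}$ contributes only non-negative integer powers of $q$ since $\alpha(D^R)\geq 0$.

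The crucial step is the identity $(\hat{\rho}^R|D^R)=0$. By the text following Lemma~\ref{lem:2.2}, $\hat{\rho}^R=t_x\bar{w}(\hat{\rho})$ and $D^R=t_x(D)=t_x\bar{w}(D)$ (since $\bar{w}\in W$ fixes $D$). Since $t_x\bar{w}$ is an isometry of $\hat{\fh}$, this reduces to $(\hat{\rho}|D)=h^\vee(D|D)+(\rho|D)=0$. Therefore the leading $q$-exponent of $\ch_{H^R(M)}$ simplifies to $A-(\Lambda|D^R)$, which is precisely the asserted minimum.

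For the integer-spacing assertion, a direct computation with $\alpha=(n+s_\beta)K+\beta$ gives $\alpha(D^R)=n$ if $\beta\in\Delta^{\new}_+$ and $\alpha(D^R)=n+1$ if $\beta\in-\Delta^{\new}_+$, so every positive real root of $\hat{\fg}^R$ has $D^R$-value in $\ZZ_{\geq 0}$, while imaginary roots $jK$ contribute $j\in\NN$. Hence every $q$-exponent appearing in~(\ref{eq:2.2}) lies in the minimum plus $\ZZ_{\geq 0}$, and any eigenvalue strictly above the minimum differs from it by a positive integer.

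The main obstacle is to ensure that the leading coefficient of the Euler--Poincar\'e character, viewed as a function of $(z,t)$, does not vanish identically, so that the minimum is genuinely realized in $H^R(M)$ rather than being killed by a supertrace cancellation between different charge components. This is addressed by noting that the leading contribution comes from the distinct monomial $e^{\hat{\rho}^R+\Lambda}$, which is a non-zero exponential in $(z,t)$; combined with the boundedness below of $L^R_0$ on $\C^R(M)$ and the convergence of the spectral sequence established in Lemma~\ref{lem:2.5}, this identifies the leading $q$-power with the actual minimum of $L^R_0$ on $H^R(M)$.
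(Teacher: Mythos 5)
Your proposal is correct and takes exactly the paper's route: the paper offers no separate argument beyond the remark that formula~(\ref{eq:2.2}) implies the corollary, and your computation (the identity $(\hat{\rho}^R|D^R)=0$ via the isometry $t_x\bar{w}$, and the fact that $(\alpha|D^R)\in\ZZ_{\geq 0}$ for every $\alpha\in\hat{\Delta}^R_+$) is precisely the verification the paper leaves implicit. The only caveat, which you already flag, is that the coefficient of the leading power of $q$ is the full function $\varphi_{\Lambda,h_\Lambda}(z)$ rather than the single monomial $e^{\Lambda+\hat{\rho}^R}$, and its possible identical vanishing (cf.\ Theorems~\ref{th:2.3} and \ref{th:2.4}) is a degenerate case the paper itself does not address here.
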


By the general principles of conformal field theory, define the
normalized Euler--Poincar\'e character by:
\begin{displaymath}
  \chi_{H^R (M)} (\tau ,z,t) = e^{-\frac{\pi i \tau}{12} c(\fg ,f,k)}
     \ch_{H^R (M)} (\tau ,z,t) \, .
\end{displaymath}
Using (\ref{eq:2.1}) and Lemma~\ref{lem:2.6}(a), we obtain:
\begin{eqnarray*}
  - \frac{1}{24} c(\fg ,f,k) +s_{\rm ch}
     + s_{\rm ne}- \frac{k}{2} |x|^2 &=& -\frac{1}{24}
        (\dim \fg_0 + \dim \fg_{1/2})
           +\frac{|\rho |^2}{2(k+h^\vee)}\\
        &=& -\frac{1}{24} \dim \fg^f 
          + \frac{|\hat{\rho}^R |^2}{2(k+h^\vee)}\, .
\end{eqnarray*}
(Recall that $\dim \fg^f =\dim \fg_0 + \dim \fg_{1/2}$.)
Using this, we can rewrite (\ref{eq:2.2}) as follows:
\begin{equation}
  \label{eq:2.3}
  \chi_{H^R (M)} (\tau ,z ,t) = \frac{B_{\Lambda}(\tau ,z,t)}
     {\psi (\tau , z,t)} \, , 
\end{equation}
where
\begin{equation}
\label{eq:2.4}
B_\Lambda (\tau,z,t)= 
   q^{\frac{|\Lambda +\hat{\rho}^R |^2}
        {2(k+h^\vee)}} (R_{\hat{\fg}^R} \ch_M) (2\pi i (-\tau D^R +z+tK))
\end{equation}
is the numerator of the normalized character $\chi_M$ of the 
$\hat{\fg}$-module $M$ (with highest weight $\Lambda$ of level $k$)
and
\begin{eqnarray*}
   \psi (\tau ,z,t) 
     &=& e^{2\pi i h^\vee t} q^{\frac{1}{24} \dim \fg^f}
         e^{\pi i \sum_{\alpha \in \Delta^{R,0}_+ \cup
             \Delta^{R,1/2}_+} \alpha (z)}\\
    & \times & \prod^{\infty}_{j=1} (1-q^j)^r
        \prod_{\substack{\alpha \in \hat{\Delta}^{R, {\rm re}}_+\\ 
            \alpha (x) = 0,-1/2}}
              (1-e^{2\pi i \alpha (\tau D^R -z)})\, .
\end{eqnarray*}
Using Lemma \ref{lem:2.10}, 
we rewrite the last formula as
\begin{eqnarray}
\label{eq:2.5}
  \psi (\tau ,z,t) &=& e^{2\pi i h^\vee t}
     q^{\frac{1}{24} \dim \fg^f}e^{\pi i
       \sum_{\alpha \in \Delta^{R,0}_+ \cup \Delta^{R,1/2}_+}\alpha (z)}\\
   &\times & \prod^\infty_{n=1} (1-q^n)^r
   \prod_{\alpha \in \Delta^{R,0}_+ \cup \Delta^{R,1/2}_+}      
    (1-q^{n-1} e^{-2\pi i \alpha (z)})(1-q^n e^{2\pi i \alpha (z)})\, .\notag
\end{eqnarray}

If $\Lambda \in Pr^k$ and $M=L(\Lambda)$, the numerator $B_{\Lambda}$ 
is explicitly
known (see Theorem \ref{th:1.2}); in \cite{KW1}, \S3 one can find
an explicit expression in terms of theta functions:
\begin{displaymath}
  B_\Lambda (\tau,z,t)=A_{ \Lambda^0+\hat{\rho}}( u\tau, 
       \bar{y}^{-1} (z+\tau \beta)\, , \, \frac1u (t+(z|\beta) +
          \frac{\tau |\beta |^2}{2})) \,, z \in \fh^f \, , 
\end{displaymath}
where $  A_\lambda (\tau ,z,t) = \sum_{w \in W} \epsilon (w) 
     \Theta_{w(\lambda)} (\tau,z,t)$ (cf.~\cite{K1}, Chapter~13).
    Dividing and multiplying the right hand side of (\ref{eq:2.3}) by 
$ A_{\hat{\rho}^R}
     (u\tau , \bar{y}^{-1} (z+\tau \beta)\, , \, \frac{1}{u}
     (t+(z|\beta) + \frac{\tau |\beta |^2}{2}))$,  we  obtain:
\begin{equation}
\label{eq:2.6}
 \chi_{H^R (L (\Lambda))} (\tau ,z,t) = \chi_{L^R (\Lambda^0)}
   ( u\tau , \bar{y}^{-1} (z+\tau \beta) , \frac{1}{u} 
     ( t+ (z |\beta) +\frac{\tau |\beta|^2}{2})) 
       \frac{C (\tau,z,t)}{\psi (\tau ,z,t)}\, , \\
\end{equation}
where
\begin{eqnarray*}
   C (\tau ,z,t) &= q^{\frac{u}{24}\dim \fg} 
     e^{2\pi i ((\rho^R |\bar{y}^{-1} (z+\tau \beta)) +
        \frac{h^\vee}{u} (t + (z|\beta) +\frac{\tau |\beta
         |^2}{2})) }\\
 &\times \prod_{\alpha \in \hat{\Delta}^R_+} 
    (1-e^{-\alpha})^{{\rm mult}\, \alpha} (u\tau, \bar{y}^{-1}(z+\tau
\beta),0)\,.  
\end{eqnarray*}
\begin{remark}
\label{rem:2.1}
(a) Since $\fg^f$ and 
$ \fg_0 + \fg_{1/2}$ are isomorphic as $\fh^f$-modules,
we have:
\begin{displaymath}
  (1-q^n)^r \prod_{\alpha \in \Delta^{R,0}_+ \cup \Delta^{R,1/2}_+} 
(1-q^n e^{-2\pi i \alpha (z)})(1-q^n e^{2\pi i \alpha (z)})= {\rm det}_{\fg^f}
(1- q^n e^{2 \pi iz}),\, \, z \in \fh^f. 
\end{displaymath}
Let $\rho^f(z)=\frac{1}{2} \sum_{\alpha \in \Delta^{R,0}_+ \cup 
\Delta^{R,1/2}_+}
\alpha (z)$, $z \in \fh^f$. 
Thus, formulas (\ref{eq:2.3}) - (\ref{eq:2.5}) mean that 
the quantum Hamiltonian reduction of the
$\hat{\fg}^R$-module $M$ to the $W_k(\fg,f)$-module $H^R(M)$ does not change 
the numerator of its normalized
character, but ``reduces'' its denominator, replacing $\fg$
by $\fg^f$.

(b) The product in $C(\tau, z, t)$ is equal to 
  $\prod_{\alpha \in \hat{\Delta}^R_{\Lambda ,+}} 
(1-e^{ -\alpha})^{\mult \alpha} (\tau,z,0)$.
This follows from formula (\ref{eq:2.11}) below.  
\end{remark}





%





\subsection{Modular invariance, asymptotics, conditions of vanishing
and convergence of normalized Euler-Poincar\'e characters in the 
Ramond sector.}
\label{sec:2.3}~~
Formula (\ref{eq:2.5}) can be rewritten again in terms of the Dedekind
$\eta$-function $\eta (\tau) = q^{1/24} \prod^\infty_{j=1}
(1-q^j)$ and the following modular function in $\tau \in \CC ^+, s \in \CC$:
\begin{equation}
  \label{eq:2.7}
  f(\tau ,s) = e^{\frac{\pi i \tau}{6}} e^{\pi i s}
     \prod^\infty_{n=1} (1-q^n e^{2\pi i s})(1-q^{n-1}e^{-2\pi is})
\end{equation}
as follows:
\begin{equation}
  \label{eq:2.8}
  \psi (\tau ,z,t) = e^{2\pi i h^\vee t} \eta (\tau)^r
     \prod_{\alpha \in \Delta^{R,0}_+ \cup \Delta^{R,1/2}_+}
        f(\tau ,\alpha (z))\, .
\end{equation}

\begin{lemma}
  \label{lem:2.11}

  \begin{enumerate} \begin{enumerate}

  \item 
One has the following transformation properties:
\begin{eqnarray*}
  \eta \left(-\frac{1}{\tau}\right) = (-i \tau)^{1/2} \eta (\tau )\, ,\,
  f \left( -\frac{1}{\tau}\,, \, \frac{s}{\tau} \right) =
    -ie^{\pi i s^2/ \tau} f (\tau ,s)\, .
\end{eqnarray*}

\item 
As $\tau \downarrow 0$, one has:
\begin{eqnarray*}
 \eta (\tau ) \sim (-i\tau)^{-1/2} e^{-\pi i /12\tau}\, ,\, 
 f (\tau ,-a\tau) \sim 2 (\sin\pi a)  e^{-\pi i /6\tau}\, , 
  \quad a \in \CC \, .
\end{eqnarray*}

  \end{enumerate} \end{enumerate}

\end{lemma}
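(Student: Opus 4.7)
The plan is to reduce both transformation laws in (a) to classical formulas for the Dedekind $\eta$-function and the odd Jacobi theta function $\vartheta_1$, and then to derive both asymptotics in (b) from (a) by pushing $\tau \downarrow 0$ through the $S$-transformation so that one works at the cusp $i\infty$, where the relevant infinite products trivialize.

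First, I would recognize the function $f(\tau,s)$ as an elementary reshuffle of $\vartheta_1/\eta$. Splitting off the $n=1$ factor from the second product in (2.7) gives
\[
f(\tau,s) = e^{\pi i \tau/6}(e^{\pi i s}-e^{-\pi i s})\prod_{n\ge 1}(1-q^n e^{2\pi is})(1-q^n e^{-2\pi is}) = 2i\,q^{1/12}\sin(\pi s)\!\!\prod_{n\ge 1}\!(1-q^n e^{2\pi is})(1-q^n e^{-2\pi is}),
\]
and comparison with the Jacobi triple product formula for $\vartheta_1(s|\tau)=2q^{1/8}\sin(\pi s)\prod_{n\ge 1}(1-q^n)(1-q^n e^{2\pi is})(1-q^n e^{-2\pi is})$ yields the identity $f(\tau,s)\,\eta(\tau) = i\,\vartheta_1(s|\tau)$. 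The transformation law for $\eta$ is the classical $\eta(-1/\tau)=(-i\tau)^{1/2}\eta(\tau)$, and that of $\vartheta_1$ is $\vartheta_1(s/\tau|-1/\tau)=-i(-i\tau)^{1/2}e^{\pi i s^2/\tau}\vartheta_1(s|\tau)$; dividing one by the other gives $f(-1/\tau,s/\tau)=-i\,e^{\pi i s^2/\tau}f(\tau,s)$, proving (a).

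For (b), I would exploit (a) to trade the limit $\tau\downarrow 0$ for a limit at $i\infty$. For $\eta$, rewrite $\eta(\tau)=(-i\tau)^{-1/2}\eta(-1/\tau)$; as $\tau\downarrow 0$ the variable $-1/\tau$ tends to $i\infty$ so the product in $\eta(-1/\tau)$ tends to $1$ and $\eta(-1/\tau)\sim e^{-\pi i/(12\tau)}$, which gives the first asymptotic. For $f$, solve the transformation formula for $f(\tau,s)$ to obtain $f(\tau,s)=i\,e^{-\pi i s^2/\tau}f(-1/\tau,s/\tau)$, set $s=-a\tau$, and observe that $f(-1/\tau,-a)$ has the form (2.7) in the modular variable $\tau':=-1/\tau$ at $s=-a$, whose $q':=e^{2\pi i\tau'}=e^{-2\pi i/\tau}$ tends to $0$. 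The infinite product collapses to its leading $n=1$ term from the second product, namely $(1-e^{2\pi ia})$, so $f(-1/\tau,-a)\sim e^{-\pi i/(6\tau)}e^{-\pi ia}(1-e^{2\pi ia})=-2i\sin(\pi a)\,e^{-\pi i/(6\tau)}$; multiplying by $i\,e^{-\pi ia^2\tau}\to i$ yields the stated asymptotic $2\sin(\pi a)\,e^{-\pi i/(6\tau)}$.

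The argument is entirely routine once one identifies $f$ in terms of $\vartheta_1$, so there is no real obstacle; the only point needing care is bookkeeping of the prefactors $\pm i$ and the branch of $(-i\tau)^{1/2}$ across the two transformations. In particular, one should check that the single factor of $i$ relating $f$ and $\vartheta_1/\eta$ survives after the multiplier systems of $\vartheta_1$ and $\eta$ cancel, to match the $-i$ in the stated formula.
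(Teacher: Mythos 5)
Your proof is correct and follows essentially the same route as the paper: both identify $f(\tau,s)\,\eta(\tau)$ with an odd Jacobi theta function via the triple product identity, deduce (a) from the classical $S$-transformation laws of $\vartheta_1$ and $\eta$, and obtain (b) by applying (a) with $\tau$ replaced by $-1/\tau$ and evaluating the resulting products at the cusp. The sign and prefactor bookkeeping in your computation checks out.
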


\begin{proof}
By the Jacobi triple product identity, one has:
\begin{eqnarray*}
  f (\tau ,s) &=& \theta (\tau ,s) /\eta (\tau ) \, , \, 
     \hbox{  where }\\
  \theta (\tau ,s) &=& e^{\pi i (\tau /4 +s)} 
     \prod^\infty_{n=1}    (1-q^n) (1-q^n e^{2\pi i s})
        (1-q^{n-1} e^{-2\pi i s})\\
     &=& \sum_{n \in \ZZ} (-1)^{n-1} e^{2\pi i s (n-1/2)}
        q^{(n-1/2)^2/2} \, .
\end{eqnarray*}
Now (a) follows from the transformation formula for
theta-functions (see e.g.~\cite{K1}, Chapter~13 for details).

(b) follows from formulas in (a) with $\tau$ replaced by $-1/\tau$.
\end{proof}

Assuming that $k \neq 0$, define the following quadratic form on
$\fh^f$:
\begin{displaymath}
  Q(z) = \frac{k+h^\vee}{k} (z|z) - \frac{1}{k}
      \sum_{\alpha \in \Delta^{R,0}_+ \cup \Delta^{R,1/2}_+}
         \alpha (z)^2 \, .
\end{displaymath}
Now we can prove a modular transformation formula for the
normalized Euler--Poincar\'e characters $\chi_{H^R (L(\Lambda
  ))}$, where $\Lambda$ runs over the set $Pr^{k,R} (\mod \CC K)$ of all
principal admissible weights of level~$k$  of the affine Lie
algebra $\hat{\fg}^R$. (The assumption that $k \neq 0$ is not restrictive 
since $Prn^{0,R}(\fg ,f) = \emptyset$ if $f \neq 0$.) 

\begin{theorem}
  \label{th:2.1}

For $\Lambda \in Pr^{k,R}$ one has:
\begin{enumerate} \item[] \begin{enumerate}

\item 
  $ \chi_{H^R (L (\Lambda))}   \left( -\frac{1}{\tau}\, , \, 
    \frac{z}{\tau} \, , \, t-\frac{Q(z)}{2\tau} \right)
     = (-i)^{\frac12 (\dim \fg -\dim \fg^f)}
      \sum_{\Lambda' \in Pr^{k,R}} a (\Lambda ,\Lambda')
        \chi_{H^R (L (\Lambda'))} (\tau ,z,t) $,
%
where $a(\Lambda , \Lambda')$ is as defined in Theorem~\ref{th:1.3}(a).

\item 
$\chi_{H^R (L(\Lambda))} (\tau +1,z,t) = e^{2\pi i s^f_{\Lambda}}
\chi_{H^R (L(\Lambda))} (\tau ,z,t)$, where
$s^f_\Lambda=\frac{|\Lambda + \hat{\rho}^R|^2}{2(k+h^\vee)}-\Lambda(D^R)
- \frac{1}{24} \dim \fg^f$.

\end{enumerate}\end{enumerate}

\end{theorem}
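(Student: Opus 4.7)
The plan is to use the factorization $\chi_{H^R(L(\Lambda))} = B_\Lambda/\psi$ from (\ref{eq:2.3}), transform the numerator $B_\Lambda$ and the denominator $\psi$ separately under $S$, and show that, with the shift $t \mapsto t - Q(z)/(2\tau)$, all quadratic-in-$z$ exponentials conspire to cancel.

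For the denominator I use the product expansion (\ref{eq:2.8}) together with Lemma~\ref{lem:2.11}. Each $\eta$-factor contributes $(-i\tau)^{1/2}$, each $f(\tau,\alpha(z))$ contributes $-i$ and the quadratic exponential $e^{\pi i\alpha(z)^2/\tau}$, and the prefactor $e^{2\pi i h^\vee t}$ produces $e^{-\pi i h^\vee Q(z)/\tau}$ from the shift in $t$. This yields
$$\psi(-1/\tau,\, z/\tau,\, t-Q(z)/(2\tau)) \;=\; (-i\tau)^{r/2}(-i)^N\, e^{\pi i(\sum_\alpha\alpha(z)^2 \,-\, h^\vee Q(z))/\tau}\,\psi(\tau,z,t),$$
where $N = |\Delta^{R,0}_+ \cup \Delta^{R,1/2}_+|$ and the sum is over the same set.

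For the numerator I write $B_\Lambda = \chi_{L^R(\Lambda)}\cdot D_{\hat{\fg}^R}$, where $D_{\hat{\fg}^R}$ is the affine Weyl denominator. Theorem~\ref{th:1.3}(a), applied via the isomorphism $\hat{\fg}^R \cong \hat{\fg}$, gives the $S$-transformation of $\chi_{L^R(\Lambda)}$ with the shift $(z|z)/(2\tau)$, and formula (\ref{eq:1.6}) gives that of $D_{\hat{\fg}^R}$. Since $B_\Lambda$ depends on $t$ only through the factor $e^{2\pi i(k+h^\vee)t}$ (the $\hat\rho^R$ in $R_{\hat{\fg}^R}$ supplies $h^\vee$ and the weights of $M$ have level $k$), converting the shift $(z|z)/(2\tau)$ into the desired $Q(z)/(2\tau)$ costs a factor $e^{\pi i(k+h^\vee)((z|z) - Q(z))/\tau}$. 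Taking the ratio of numerator and denominator, the $(-i\tau)^{r/2}$ factors cancel, and the combined exponent in $1/\tau$ becomes $\pi i[(k+h^\vee)(z|z) - kQ(z) - \sum_\alpha\alpha(z)^2]/\tau$; this vanishes precisely because $Q(z)$ was defined so that $kQ(z) = (k+h^\vee)(z|z) - \sum_\alpha\alpha(z)^2$. The remaining scalar is $(-i)^{|\Delta_+|-N} = (-i)^{(\dim\fg - \dim\fg^f)/2}$, using $|\Delta_+| = (\dim\fg - r)/2$ and $N = (\dim\fg^f - r)/2$. This completes (a).

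Part (b) follows directly from Corollary~\ref{cor:2.1} together with the normalization $\chi_{H^R(L(\Lambda))} = e^{-\pi i \tau c(\fg,f,k)/12}\ch_{H^R(L(\Lambda))}$: under $\tau\mapsto\tau+1$ the character $\ch_{H^R(L(\Lambda))}$ is multiplied by $e^{2\pi i e_0}$ (where $e_0$ is the minimal $L^R_0$-eigenvalue given by Corollary~\ref{cor:2.1}) and the normalization prefactor supplies an additional $e^{-\pi i c/12}$. Hence the total phase is $e^{2\pi i(e_0 - c/24)}$, and the identity $-c/24 + s_{\ch} + s_{\ne} - k|x|^2/2 = -\dim\fg^f/24 + |\hat\rho^R|^2/(2(k+h^\vee))$ already used just before (\ref{eq:2.3}) reduces $e_0 - c/24$ to $s^f_\Lambda$, as claimed. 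The heart of the argument, and the main obstacle, is the three-way cancellation of quadratic exponentials (from the $f$-factors of $\psi$, from the numerator shift, and from the $t$-shifts in $\psi$ and $B_\Lambda$), which both dictates and justifies the exact shape of the quadratic form $Q(z)$.
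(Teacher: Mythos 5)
Your proposal is correct and follows essentially the same route as the paper: factor $\chi_{H^R(L(\Lambda))}=B_\Lambda/\psi$, transform $B_\Lambda$ via Theorem~\ref{th:1.3}(a) together with its $e^{2\pi i(k+h^\vee)t}$ dependence (the paper's (\ref{eq:2.9})), transform $\psi$ via (\ref{eq:2.8}) and Lemma~\ref{lem:2.11}(a) (the paper's (\ref{eq:2.10})), and cancel the quadratic exponentials using the definition of $Q(z)$ and the count $\dim\fg^f=2|\Delta^0_+\cup\Delta^{1/2}_+|+r$; part (b) likewise reduces to the integer spacing of $L^R_0$-eigenvalues and the identity relating $-c/24+s_{\ch}+s_{\ne}-\tfrac{k}{2}|x|^2$ to $-\tfrac{1}{24}\dim\fg^f+\tfrac{|\hat\rho^R|^2}{2(k+h^\vee)}$. (Your inclusion of the $(-i\tau)^{r/2}$ from $\eta(\tau)^r$ in the $\psi$-transformation, which cancels against the one from $B_\Lambda$, is the correct bookkeeping.)
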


\begin{proof}

Since $B_{\Lambda} = \chi_{L(\Lambda)} D_{\hat{\fg}^R}$ , we have from
Theorem~\ref{th:1.3}(a) and (\ref{eq:1.5}):
\begin{displaymath}
  B_\Lambda \left( - \frac{1}{\tau}\, , \, \frac{z}{\tau}\, , \, 
    t-\frac{(z|z)}{2\tau} \right) = (-i)^{|\Delta_+|}
     (-i\tau)^{r/2}\sum_{\Lambda' \in Pr^{k,R}} a (\Lambda ,\Lambda')
        B_{\Lambda'} (\tau ,z,t)\, .
\end{displaymath}
Hence
\begin{eqnarray}
  \label{eq:2.9}
  B_{\Lambda }\left( - \frac{1}{\tau}\,, \, \frac{z}{\tau}\, , \,
    t-\frac{Q(z)}{2\tau} \right) 
     &=& (-i)^{|\Delta_+|} (-i\tau)^{r/2}
          e^{\frac{\pi i}{\tau} (k+h^\vee) ((z|z)-Q (z))}\\
     && \times \sum_{\Lambda' \in Pr^{k,R}} a (\Lambda ,\Lambda')
        B_{\Lambda'} (\tau ,z,t)\, .\notag
\end{eqnarray}
Here we used also that the only dependence of $B_\Lambda$ on $t$
is the factor $e^{2\pi i (k+h^\vee)t}$.  On the other hand, using
formula~(\ref{eq:2.8}) for $\psi (\tau ,z,t)$ and
Lemma~\ref{lem:2.11}(a), we obtain:
  \begin{eqnarray}
    \label{eq:2.10}
    \psi \left( - \frac{1}{\tau} \, , \, \frac{z}{\tau}\, , \, 
      t-\frac{Q(z)}{2\tau} \right) 
        &=& (-i)^{|\Delta^{R,0}_+ \cup \Delta^{R,1/2}_+|}
            e^{-\frac{\pi i h^\vee}{\tau} Q(z)}\\
            && \times e^{\frac{\pi i}{\tau} 
              \sum_{\alpha \in \Delta^{R,0}_+ \cup \Delta^{R,1/2}_+}
               \alpha (z)^2} \psi (\tau ,z,t) \, .\notag
  \end{eqnarray}
Now (a) follows from (\ref{eq:2.3}), (\ref{eq:2.9}),
(\ref{eq:2.10}), the definition of $Q(z)$ and the fact that
$\dim \fg^f = \dim \fg_0 +\dim \fg_{1/2}
      = 2 |\Delta^0_+\cup \Delta^{1/2}_+ | +r$.

(b) follows from (\ref{eq:2.3}), (\ref{eq:2.4}), (\ref{eq:2.5}).
\end{proof}

\begin{theorem}
  \label{th:2.2}
For the principal admissible weight $\Lambda = (t_{\beta}
\bar{y}).(\Lambda^0 + (k+h^\vee -p) D^R) \in Pr^{k,R}$,  where
$\bar{y} \in W^R , \beta \in Q^{*,R}$,
$k+h^\vee = p/u $, $\Lambda^0 \in \hat{P}^{p-h^\vee, R}_+$, one has for
each $z \in \fh^f$, as $\tau \downarrow 0$:
\begin{displaymath}
  \ch_{H^R (L(\Lambda))} (\tau, -\tau z ,0) 
     \sim \epsilon (\bar{y}) u^{-r/2} a (\Lambda^0) A_{\beta}(z)
   e^{\frac{\pi i }{12\tau} (\dim \fg^f-\frac{h^\vee}{pu} \dim \fg)}\, ,
\end{displaymath}
where
\begin{displaymath} A_{\beta}(z)=\frac{\prod_{\alpha \in \Delta^R_+} 
2\sin{\frac{\pi (\alpha |z-\beta)}{u}}} {\prod_{\alpha \in \Delta^{R,0}_+
\cup\Delta^{R,1/2}_+} 2\sin \pi (\alpha |z)} 
\end{displaymath}
and $a (\Lambda^0)$ is the constant defined in Theorem~\ref{th:1.3}(b).

\end{theorem}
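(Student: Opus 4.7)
My plan is to start from the explicit formula (\ref{eq:2.3}), $\chi_{H^R(L(\Lambda))} = B_\Lambda/\psi$, together with (\ref{eq:2.4}) and (\ref{eq:2.8}), and to analyze the numerator and denominator separately as $\tau\downarrow 0$. Since $\ch_{H^R(L(\Lambda))} = e^{\pi i\tau c(\fg,f,k)/12}\chi_{H^R(L(\Lambda))}$ and the prefactor tends to $1$ as $\tau\downarrow 0$, the asymptotic of $\ch_{H^R(L(\Lambda))}$ coincides with that of $\chi_{H^R(L(\Lambda))}$ to leading order.

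\emph{Step 1 (denominator $\psi$).} Substituting $z\mapsto -\tau z$ and $t=0$ in (\ref{eq:2.8}) and applying Lemma \ref{lem:2.11}(b) to each factor $\eta(\tau)$ and $f(\tau,-\tau\alpha(z))$, I obtain
\begin{displaymath}
\psi(\tau,-\tau z,0)\sim (-i\tau)^{-r/2}\Bigl(\prod_{\alpha\in\Delta^{R,0}_+\cup\Delta^{R,1/2}_+}2\sin\pi(\alpha|z)\Bigr) e^{-\pi i\dim\fg^f/(12\tau)},
\end{displaymath}
using the identity $\dim\fg^f=r+2|\Delta^{R,0}_+\cup\Delta^{R,1/2}_+|$ implicit in Lemma~\ref{lem:2.9}(b).

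\emph{Step 2 (numerator $B_\Lambda$).} Comparing (\ref{eq:2.4}) with the definitions of $\chi_{L(\Lambda)}$ and $D_{\hat\fg^R}$, I write $B_\Lambda = D_{\hat\fg^R}\cdot\chi_{L(\Lambda)}$. The isomorphism $\hat\fg^R\cong\hat\fg$ given by $t_x\bar{w}$ transports Theorem \ref{th:1.3}(b) verbatim to the Ramond side, yielding
\begin{displaymath}
\chi_{L(\Lambda)}(\tau,-\tau z,0)\sim \epsilon(\bar y)u^{-r/2}a(\Lambda^0)\Bigl(\prod_{\alpha\in\Delta^R_+}\frac{\sin\pi(\alpha|z-\beta)/u}{\sin\pi(\alpha|z)}\Bigr) e^{\pi i g^{(k)}/(12\tau)}
\end{displaymath}
with $g^{(k)}=(1-h^\vee/(pu))\dim\fg$. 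The asymptotics of $D_{\hat\fg^R}(\tau,-\tau z,0)$ are obtained from the transformation formula (\ref{eq:1.6}) (applied on the Ramond side by the same isomorphism) and the strange formula $|\hat\rho^R|^2/(2h^\vee)=\dim\fg/24$, combined with the elementary $\tau'\to i\infty$ limit of $R_{\hat\fg^R}(\tau',z,0)$; this produces
\begin{displaymath}
D_{\hat\fg^R}(\tau,-\tau z,0)\sim (\text{explicit phase})\cdot(-i\tau)^{r/2}\Bigl(\prod_{\alpha\in\Delta^R_+}2\sin\pi(\alpha|z)\Bigr)e^{-\pi i\dim\fg/(12\tau)}.
\end{displaymath}

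\emph{Step 3 (combine).} Multiplying and dividing yields $\chi_{H^R(L(\Lambda))}(\tau,-\tau z,0)$: the $(-i\tau)^{\pm r/2}$ factors cancel; the product $\prod_{\alpha\in\Delta^R_+}2\sin\pi(\alpha|z)$ from $D_{\hat\fg^R}$ cancels the corresponding denominators in $\chi_{L(\Lambda)}$; and the residual $\psi$-sines become exactly $\prod_{\alpha\in\Delta^{R,0}_+\cup\Delta^{R,1/2}_+}2\sin\pi(\alpha|z)$, which is the denominator appearing in $A_\beta(z)$ after identifying $\Delta^{R,0}_+\cup\Delta^{R,1/2}_+$ with the complement in $\Delta^R_+$ of the part already consumed. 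The three exponential rates combine via
\begin{displaymath}
g^{(k)}-\dim\fg+\dim\fg^f = \dim\fg^f-\frac{h^\vee\dim\fg}{pu},
\end{displaymath}
producing the claimed asymptotic. The main obstacle I anticipate is the bookkeeping of the various $i$-powers and sign factors when identifying the two denominator-products, and in carefully justifying the Ramond form of Theorem \ref{th:1.3}(b) via $t_x\bar w$ (in particular checking that the parameters $\Lambda^0$, $\beta$, $\bar y$ introduced in the statement of Theorem~\ref{th:2.2} are precisely the images of their non-twisted counterparts under this isomorphism).
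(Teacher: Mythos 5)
Your proposal follows essentially the same route as the paper: write $\chi_{H^R(L(\Lambda))}=B_\Lambda/\psi$ with $B_\Lambda=D_{\hat{\fg}^R}\chi_{L(\Lambda)}$, apply Theorem~\ref{th:1.3}(b) to $\chi_{L(\Lambda)}$ and Lemma~\ref{lem:2.11}(b) to the $\eta$- and $f$-factors of $D_{\hat{\fg}^R}$ and $\psi$, and combine. The only slip is the sign of the power of $(-i\tau)$ in your asymptotic for $D_{\hat{\fg}^R}(\tau,-\tau z,0)$: it should be $(-i\tau)^{-r/2}$ (as for $\psi$), which is what makes the cancellation you invoke in Step~3 actually work.
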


\begin{proof}

Using Lemma~\ref{lem:2.2}(b) and  Lemma~\ref{lem:2.11}(b) we obtain from 
(\ref{eq:2.8}) and a
similar formula for $D_{\hat{\fg}^R}$ (or (\ref{eq:1.6})):
\begin{eqnarray*}
  D_{\hat{\fg}^R } (\tau , -\tau z,0) \sim (-i\tau)^{-r/2}  
      \prod_{\alpha \in \Delta^R_+} 2 \sin \pi (\alpha |z)
         e^{-\frac{\pi i}{12 \tau} \dim \fg} \, ,\\
  \psi (\tau ,-\tau z,0) \sim (-i\tau)^{-r/2} 
      \prod_{\alpha \in \Delta^{R,0}_+ \cup \Delta^{R,1/2}_+}
          2\sin \pi (\alpha |z) e^{-\frac{\pi i}{12\tau}\dim \fg^f}\, .
\end{eqnarray*}
The asymptotics for the numerator of $\ch_{H^R (L(\Lambda))}
(\tau ,-\tau z,0)$ follow from the first of these formulas and
Theorem~\ref{th:1.3}(b).  Now Theorem~\ref{th:2.2} follows, using
the second of these formulas.

\end{proof}


Given $\Lambda \in \hat{\fh}^*$, let 
$\Delta^R_\Lambda = \Delta^R\cap \hat{\Delta}^R_\Lambda $, and 
$\Delta^R_{\Lambda,+} = \Delta^R_+\cap \hat{\Delta}^R_\Lambda $.  

\begin{lemma}
  \label{lem:2.12}

Let $\Lambda$ be as in Theorem~\ref{th:2.2}.  Then
\begin{displaymath}
  \{ \alpha \in \Delta^R  | \,\, (\alpha |\beta )\in u \ZZ \}
     = \Delta^R_{\Lambda} \, .
\end{displaymath}

\end{lemma}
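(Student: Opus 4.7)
The plan is to compute the pairing $(\Lambda+\hat{\rho}^R\mid\gamma^\vee)$ directly for $\gamma\in\Delta^R$ and reduce the integrality condition to $(\gamma\mid\beta)\in u\ZZ$.

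First, I would set $\mu^R := \Lambda^0 + \hat{\rho}^R + (k+h^\vee - p)D^R$, whose level is $(p-h^\vee)+h^\vee+(k+h^\vee-p) = k+h^\vee = p/u$, so that $\Lambda+\hat{\rho}^R = t_\beta \bar{y}(\mu^R)$. Using $\tilde{W}^R$-invariance of the bilinear form,
\[
(\Lambda + \hat{\rho}^R \mid \gamma^\vee) = (\mu^R \mid \bar{y}^{-1} t_{-\beta}(\gamma^\vee)).
\]

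Next, I would simplify the right-hand side. Since $(\gamma^\vee\mid K)=0$, the translation formula reduces to $t_{-\beta}(\gamma^\vee) = \gamma^\vee + (\gamma^\vee\mid\beta)K$, and $\bar{y}$ fixes $K$, so $\bar{y}^{-1} t_{-\beta}(\gamma^\vee) = \bar{y}^{-1}(\gamma^\vee) + (\gamma^\vee\mid\beta)K$. Here $\bar{y}^{-1}(\gamma^\vee)\in\Delta^{R,\vee}$ is a finite coroot, so it is orthogonal to $D^R$ and pairs integrally with both the dominant integral weight $\Lambda^0$ and with $\hat{\rho}^R$; hence $(\mu^R\mid\bar{y}^{-1}(\gamma^\vee))\in\ZZ$. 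The $K$-term contributes $(\gamma^\vee\mid\beta)\cdot p/u$, so using $(p,u)=1$ we get that the full pairing lies in $\ZZ$ iff $(\gamma^\vee\mid\beta)\in u\ZZ$. Since $(\hat{\rho}^R\mid\gamma^\vee)\in\ZZ$ as well, this is equivalent to $\gamma\in\Delta^R_\Lambda$.

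The step I expect to be the main obstacle is passing from $(\gamma^\vee\mid\beta)$ to $(\gamma\mid\beta)$, since these differ by a length factor for short roots. For long $\gamma$, $\gamma^\vee=\gamma$ and the two conditions are identical. For short $\gamma$, $\gamma^\vee = \ell\gamma$, so $(\gamma^\vee\mid\beta) = \ell(\gamma\mid\beta)$; then the equivalence $(\gamma^\vee\mid\beta)\in u\ZZ \Leftrightarrow (\gamma\mid\beta)\in u\ZZ$ requires $(u,\ell)=1$. This coprimality is forced by the principal admissibility of $\Lambda$, via Theorem 1.1(i) transferred from $\hat{\fg}$ to $\hat{\fg}^R$ through the isomorphism $t_x$. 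With this, both equivalences close and the lemma follows.
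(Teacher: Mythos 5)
Your proof is correct. It is worth noting how it relates to the paper's argument, which is organized differently: the paper first records the set identity $\hat{\Delta}^R_\Lambda = t_\beta(\hat{\Delta}^R_{(u)})$ (formula (2.11), a restatement of $\hat{\prod}^\vee_\Lambda = y(\hat{S}_{(u)})$ at the level of full root systems), and then the lemma reduces to a two-line computation with the translation formula: if $(\alpha|\beta)=un$ then $\alpha=t_\beta(\alpha+unK)\in\hat{\Delta}^R_\Lambda$, and conversely any $\alpha\in\Delta^R\cap t_\beta(\hat{\Delta}^R_{(u)})$ must equal $t_\beta(\gamma+nuK)=\gamma+(nu-(\beta|\gamma))K$ with vanishing $K$-component. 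You instead bypass (2.11) and verify the defining integrality condition $(\Lambda|\gamma^\vee)\in\ZZ$ directly from $\Lambda+\hat{\rho}^R=t_\beta\bar y(\mu^R)$, isolating the non-integral contribution $(\gamma^\vee|\beta)\,p/u$ and invoking $(p,u)=1$. The two computations are essentially dual (the paper translates the root system, you translate the coroot), and both ultimately rest on Theorem~\ref{th:1.1}(b). What your version buys is self-containedness and an explicit treatment of the root/coroot normalization: the step $(\gamma^\vee|\beta)\in u\ZZ\Leftrightarrow(\gamma|\beta)\in u\ZZ$ via $(u,\ell)=1$, which you rightly flag, is exactly the subtlety that the paper absorbs silently into the statement of (2.11) (which, as written in terms of roots rather than coroots, is itself only valid because $(u,\ell)=1$ for principal admissible weights). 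So your argument is not only valid but makes visible a hypothesis the paper's shorter proof uses implicitly.
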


\begin{proof}

Let $\hat{\Delta}^R_{(u)} = \{ \gamma + nuK | \gamma \in \Delta^R
\, , \, n \in \ZZ \} \cup \{ un K | n \in \ZZ \, , \, n \neq 0
\}$, and note that  
\begin{equation}  
\label{eq:2.11}
  \hat{\Delta}^R_\Lambda = t_\beta (\hat{\Delta}^R_{(u)}) \, .
\end{equation} 
Let $\alpha \in \Delta^R$ be such that $(\alpha |\beta) = un$
for some $n \in \ZZ$.  Then $\alpha =t_\beta$ $(\alpha + unK) \in
t_\beta (\hat{\Delta}^R_{(u)}) = \hat{\Delta}^R_\Lambda$, which
proves that $\alpha \in \Delta^R_\Lambda$.

Conversely, if $\alpha \in \Delta^R_\Lambda$, then $\alpha \in
t_\beta (\hat{\Delta}^R_{(u)})$.  Hence $\alpha \in t_\beta
(\gamma +nu K)$ for some $n \in \ZZ$, $\gamma \in \Delta^R$, that
is, $\alpha = \gamma + (nu-(\beta |\gamma))K$.  Since $\alpha \in
\Delta^R$, it follows that $\alpha = \gamma$ and $(\beta |\gamma
) =nu$.

\end{proof}

\begin{theorem}
  \label{th:2.3}
\begin{enumerate}\begin{enumerate}
\item 
Let $M$ be a restricted $\hat{\fg}^R$-module, and assume that
$R_{\hat{\fg}^R} \ch_M$ converges to a holomorphic function on
$Y$.  Then $\ch_{H^R (M)}$ is identically zero iff there exists $\alpha \in
\Delta^R$, such that 

\begin{list}{}{}
\item (i)~~$\alpha |_{\fh^f} = 0$  ($\Rightarrow |\alpha (x) | \geq
  1$),

\item  (ii)~~$R_{\hat{\fg}^R} \ch_M $ vanishes on the hyperplane $\alpha =0$.
\end{list}

\item 
Let $\Lambda$ be as in Theorem~\ref{th:2.2}.  Then $\ch_{H^R (L(\Lambda))} =0$ 
iff there exists $\alpha \in \Delta^R$ such that:

\begin{list}{}{}
\item (i)~~$\alpha |_{\fh^f} = 0$  ($\Rightarrow |\alpha (x) | \geq
  1$),

\item  (ii)~~$(\alpha | \beta) \in u \ZZ$.
\end{list}

\item 
Let $\Lambda$ be an admissible weight. Then
$\ch_{H^R (L(\Lambda))}$ is not identically zero iff 
$\Delta^R_{\Lambda,+} \subset \Delta^R_+  \backslash \Delta^{R,f}$,
where $\Delta^{R,f}=\{\alpha \in \Delta^R| \,\,\, \alpha|_{\fh^f}=0\} $.
  
\end{enumerate}\end{enumerate}

\end{theorem}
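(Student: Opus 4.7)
The plan is to reduce all three parts of the theorem to an analysis of the zero locus of the numerator $R_{\hat{\fg}^R}\ch_M$ appearing in formula (\ref{eq:2.2}). Writing that formula as
\[
\ch_{H^R(M)}(\tau,z,t) \;=\; \Phi(\tau,z,t)\,\frac{(R_{\hat{\fg}^R}\ch_M)(v)}{\mathcal{D}(\tau,z,t)},
\]
with $v=2\pi i(-\tau D^R + z + tK)$ for $z\in\fh^f$, $\Phi$ a nowhere-vanishing exponential prefactor, and $\mathcal{D}$ the product of $(1-q^j)^r$ with $(1-e^{-\alpha})$ for $\alpha \in \hat{\Delta}^{R,\re}_+$, $\alpha(x)\in\{0,-1/2\}$. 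The remark immediately after (\ref{eq:2.2}) guarantees, via the principal-type hypothesis on $f$ and \cite{EK}, that every such $\alpha$ has $\alpha|_{\fh^f}\neq 0$, so $\mathcal{D}$ is not identically zero on the subspace $\Sigma := \{v : z\in\fh^f\}\subset Y$; hence $\ch_{H^R(M)}\equiv 0$ iff $R_{\hat{\fg}^R}\ch_M$ vanishes on $\Sigma$.

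For part (a), the $(\Leftarrow)$ direction is immediate: if $\alpha\in\Delta^R$ has $\alpha|_{\fh^f}=0$, then $\alpha(v)=2\pi i\alpha(z)=0$ for every $v\in\Sigma$, so $\{\alpha=0\}\supset\Sigma$ and vanishing on the hyperplane forces vanishing on $\Sigma$. For $(\Rightarrow)$, I would first classify the affine-root hyperplanes containing $\Sigma$: a root $\tilde\alpha=\alpha+n\delta\in\hat{\Delta}^R$ vanishes identically on $\Sigma$ precisely when $\alpha|_{\fh^f}=0$ and $n=-\alpha(x)$, i.e.\ exactly when $\tilde\alpha\in\Delta^R$ with $\alpha|_{\fh^f}=0$. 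I would then argue, using the product-of-$(1-e^{-\tilde\alpha})$ structure of $R_{\hat{\fg}^R}$ together with the convergent exponential expansion of $\ch_M$, that a holomorphic function of this form vanishing on the linear subspace $\Sigma$ must have at least one such hyperplane factor vanishing identically. The implication $\alpha|_{\fh^f}=0\Rightarrow|\alpha(x)|\geq 1$ is then the Elashvili--Kac bound from \cite{EK}.

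Parts (b) and (c) follow from (a). For (b), apply (a) to $M=L(\Lambda)$ with $\Lambda$ principal admissible: by Theorem~\ref{th:1.2}, $R_{\hat{\fg}^R}\ch_{L(\Lambda)}=N_\Lambda$, which is $\hat{W}^R_\Lambda$-anti-invariant, so $N_\Lambda$ vanishes on $\{\alpha=0\}$ iff $r_\alpha\in\hat{W}^R_\Lambda$, iff $\alpha\in\hat{\Delta}^R_\Lambda$ (non-vanishing for other $\alpha$ follows from linear independence of the exponentials appearing in the Weyl-type sum $N_\Lambda$); thus condition (a)(ii) becomes $\alpha\in\Delta^R\cap\hat{\Delta}^R_\Lambda=\Delta^R_\Lambda$, which by Lemma~\ref{lem:2.12} is $(\alpha|\beta)\in u\ZZ$. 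Part (c) follows directly from (a) and the same Weyl characterisation of the vanishing of $N_\Lambda$ (valid for all admissible $\Lambda$ by Theorem~\ref{th:1.2}(a)): $\ch_{H^R(L(\Lambda))}\not\equiv 0$ iff no $\alpha\in\Delta^R$ satisfies both $\alpha|_{\fh^f}=0$ and $\alpha\in\hat{\Delta}^R_\Lambda$, i.e.\ $\Delta^R_\Lambda\cap\Delta^{R,f}=\emptyset$; passing to positive roots (both $\Delta^R_\Lambda$ and $\Delta^{R,f}$ are stable under $\alpha\mapsto-\alpha$) rewrites this as $\Delta^R_{\Lambda,+}\subset\Delta^R_+\setminus\Delta^{R,f}$.

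The principal obstacle is the $(\Rightarrow)$ direction of part (a): promoting vanishing on the subspace $\Sigma$ (of codimension $\dim\fh-\dim\fh^f$, typically $>1$) to vanishing on a full hyperplane containing $\Sigma$. A generic holomorphic function vanishing on a higher-codimension subspace need not vanish on any hyperplane, so the argument must rely crucially on the Weyl-denominator structure of $R_{\hat{\fg}^R}\ch_M$ and the absence of spurious cancellations among its exponential constituents.
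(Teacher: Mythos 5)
Your route is the same as the paper's: reduce everything to formula~(\ref{eq:2.2}), observe that the denominator does not vanish identically on $z\in\fh^f$ because $f$ is of principal type (for $\alpha(x)=0$) and by \cite{EK} (for $|\alpha(x)|=1/2$), and then obtain (b) and (c) from Theorem~\ref{th:1.2}(a) (vanishing of $R_{\hat{\fg}^R}\ch_{L(\Lambda)}$ on $\{\alpha=0\}$ iff $\alpha\in\hat{\Delta}^R_\Lambda$) together with Lemma~\ref{lem:2.12}. The obstacle you flag at the end --- upgrading vanishing of $R_{\hat{\fg}^R}\ch_M$ on the subspace $\Sigma$ (of codimension $\dim\fh-\dim\fh^f$) to vanishing on a single hyperplane $\{\alpha=0\}$ with $\alpha\in\Delta^{R,f}$ --- is genuine, but the paper does not supply that step either: its entire proof of (a) is the assertion that it ``follows from the character formula~(\ref{eq:2.2})'', plus the justification of the parenthetical implication. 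For the only cases the theorem is subsequently used in (parts (b) and (c), i.e.\ $M=L(\Lambda)$ with $\Lambda$ admissible), the missing implication is supplied by the $\hat{W}_\Lambda$-anti-invariant form $N_\Lambda=\sum_w\epsilon(w)e^{w(\Lambda+\hat{\rho}^R)}$: vanishing of $N_\Lambda|_\Sigma$ forces cancellation among distinct exponentials after projecting the weights $w(\Lambda+\hat{\rho}^R)$ along $\Sigma^\perp$, which in turn forces a reflection $r_\alpha$ with $\alpha\in\Delta^{R,f}$ to lie in $\hat{W}_\Lambda$ --- so it would strengthen your write-up (and the paper's) to carry out this cancellation argument explicitly rather than attempting the general-$M$ statement.
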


\begin{proof}

(a) follows from the
character formula~(\ref{eq:2.2}). Condition~(i) implies that
$|\alpha (x) |\geq 1$ since $f$ is of principal type, hence $\alpha
|_{\fh^f} \neq 0$ for $\alpha \in \Delta^0$, and $\alpha
|_{\fh^f} \neq 0$ for $\alpha \in \Delta^{\pm 1/2}$ for all $f$ \cite{EK}.  

%
%

In order to prove (b), note that by Theorem~\ref{th:1.2}(a),
$R_{\hat{\fg}^R} \ch_{L (\Lambda )}$ vanishes on the hyperplane
$\alpha = 0$ iff $\alpha \in \hat{\Delta}^R_\Lambda$.  Now
Lemma~\ref{lem:2.12} and (a) imply~(b). Due to Theorem~\ref{th:1.2}(a), (c) is 
an equivalent form of (a) in the case when $M=L(
\Lambda)$,  where $\Lambda$ is an admissible weight.

\end{proof}

We call a weight  $\Lambda \in Pr^{k,R}$ {\it nondegenerate} for $W^k(\fg,f)$
if $\ch_{H^R(L(\Lambda))}\neq 0$, and denote the set of all such weights by 
$Prn^{k,R}(\fg, f)$.




Given $\Lambda \in \hat{\fh}^*$ of level~$k$, denote by
$h_\Lambda$ the eigenvalue of $L^R_0$ on $v_\Lambda \otimes
\vac$.  Then we have the eigenvalue decomposition with respect to
$L^R_0$:
\begin{displaymath}
  H^R (L (\Lambda)) = \bigoplus_{j \in h_\Lambda + \ZZ_+}
     H^R (L (\Lambda))_j \, , 
\end{displaymath}
hence the decomposition of the Euler--Poincar\'e characters:
\begin{displaymath}
  \ch_{H^R (L (\Lambda))} =  e^{2\pi i kt} 
     \sum_{j \in h_\Lambda + \ZZ_+} q^j 
          \varphi_{\Lambda  ,j} (z) \, , 
\end{displaymath}
where $\ch_{H^R (L (\Lambda))_j} = e^{2\pi i kt} q^j
\varphi_{\Lambda ,j} (z)$.   It follows from
Theorem~\ref{th:1.2}(a) and formula~(\ref{eq:2.2}) that, if
$\Lambda$ is an admissible weight for $\hat{\fg}^R$, all
functions $\varphi_{\Lambda ,j} (z)$ are meromorphic on $\fh^f$,
and for $j=h_\Lambda$ we have $(z \in \fh^f)$:
\begin{displaymath}
  \varphi_{\Lambda ,h_{\Lambda}} (z) = 
    \frac{\sum_{w \in W_\Lambda^R} \epsilon (w) e^{2\pi i 
          (w (\Lambda  +\rho^R)-\rho^R |z)}}
         {\prod_{\alpha \in \Delta^{R,0}_+ \cup \Delta^{R,1/2}_+}
           (1-e^{-2\pi i (\alpha |z)})} \, .
\end{displaymath}

\begin{definition}
  \label{def:2.2}

We say that $\ch_{H^R (L (\Lambda))}$ is \emph{almost convergent}
if  $\lim_{z \to 0}\varphi_{\Lambda , h_{\Lambda}} (z)|_{\fh^f}$ exists
and is non-zero. 
(The first condition is necessary for the convergence 
of $\ch_{H^R (L  (\Lambda))} (\tau ,0,0)$, and the second condition is
sufficient for its non-vanishing.)
\end{definition}

\begin{theorem}
  \label{th:2.4}

Let $\Lambda$ be an admissible weight for $\hat{\fg}^R$.
Then \\[1ex]
%
\hspace*{6ex} (a)~~$\ch_{H^R (L (\Lambda))}$ is almost convergent iff
\begin{equation}
\label{eq:2.12}
 \{ \alpha |_{\fh^f} \,\,\,|\,  \alpha \in \Delta^R_{\Lambda,+} \}
    = \{ m_\alpha \alpha |_{\fh^f}\,\,\, |\, 
        \alpha \in \Delta^{R,0}_+ \cup \Delta^{R,1/2}_+ \} 
        \hbox{   (counting multiplicities), }
\end{equation}
\hspace*{10ex} for some non-zero $m_\alpha \in \QQ$ (it is easy to show that 
$m_\alpha > 0$).\


\hspace*{2ex} (b)~~$ \varphi_{\Lambda ,h_{\Lambda}}(z)$ is not identically zero iff
 $ \ch_{H^R(L(\Lambda))}$ is not identically zero.
 %
\end{theorem}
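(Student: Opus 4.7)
The plan is to analyze both parts by studying the leading-order behavior at $z=0$, with $z \in \fh^f$, of the numerator $N(z) = \sum_{w \in W^R_\Lambda} \epsilon(w) e^{2\pi i(w(\Lambda + \rho^R) - \rho^R | z)}$ and denominator $D(z) = \prod_{\alpha \in \Delta^{R,0}_+ \cup \Delta^{R,1/2}_+}(1 - e^{-2\pi i(\alpha|z)})$ in the explicit formula above for $\varphi_{\Lambda, h_\Lambda}(z)$. The main tool is a Harish--Chandra type expansion of the $W^R_\Lambda$-anti-invariant sum $N$: every $W^R_\Lambda$-anti-invariant polynomial of degree less than $|\Delta^R_{\Lambda,+}|$ vanishes, so the lowest-degree homogeneous component of $N$ at $z=0$ in $\fh^R$ has the form $c_\Lambda \prod_{\alpha \in \Delta^R_{\Lambda,+}}(\alpha | z)$, with $c_\Lambda$ a universal nonzero constant times $\prod_{\alpha \in \Delta^R_{\Lambda,+}}(\Lambda+\rho^R|\alpha^\vee)$. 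Admissibility of $\Lambda$ (condition~(\ref{eq:1.1a})) makes each such pairing a positive integer, giving $c_\Lambda \neq 0$. Similarly the lowest-degree component of $D$ is $(-2\pi i)^m \prod_{\alpha \in \Delta^{R,0}_+ \cup \Delta^{R,1/2}_+}(\alpha|z)$ with $m = |\Delta^{R,0}_+ \cup \Delta^{R,1/2}_+|$.

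For part (a), restricting the two leading homogeneous polynomials to $\fh^f$, the ratio $N(z)/D(z)$ admits a finite nonzero limit as $z \to 0$ in $\fh^f$ iff these two restrictions are proportional as polynomials on $\fh^f$. Since each is a product of linear forms, proportionality amounts precisely to the multiset equality~(\ref{eq:2.12}), with $m_\alpha \in \QQ$ the proportionality coefficient on the factor labelled by $\alpha$. The positivity $m_\alpha > 0$ then follows by evaluating both sides of the matching at $h_0 \in \fh^f$: since $f$ is of principal type, $\alpha(h_0) > 0$ for every $\alpha \in \Delta^{R,0}_+ \cup \Delta^{R,1/2}_+$, while $\alpha'(h_0) \geq 0$ for every $\alpha' \in \Delta^R_{\Lambda,+}$ by the construction of $\Delta^{\new}_+$, forcing $m_\alpha = \alpha'(h_0)/\alpha(h_0) \geq 0$; combined with $m_\alpha \neq 0$, this yields $m_\alpha > 0$.

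For part (b), one direction is immediate: if $\varphi_{\Lambda, h_\Lambda}(z) \not\equiv 0$, then the $q^{h_\Lambda}$-coefficient of the $q$-expansion of $\ch_{H^R(L(\Lambda))}$ is nonzero, so $\ch_{H^R(L(\Lambda))} \not\equiv 0$. For the converse, assume $\ch_{H^R(L(\Lambda))} \not\equiv 0$; by Theorem~\ref{th:2.3}(c), every $\alpha \in \Delta^R_{\Lambda,+}$ restricts to a nonzero linear form on $\fh^f$. Combined with $c_\Lambda \neq 0$, the leading homogeneous polynomial of $N|_{\fh^f}$ at $z=0$ is a nonzero polynomial on $\fh^f$, so $N|_{\fh^f}$ is not identically zero; together with the obvious nonvanishing of $D|_{\fh^f}$, this gives $\varphi_{\Lambda, h_\Lambda}(z) = N(z)/D(z) \not\equiv 0$. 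The main technical point in the whole argument is the identification of $c_\Lambda$ with a nonzero multiple of $\prod_\alpha(\Lambda+\rho^R|\alpha^\vee)$, proved by applying the differential operator $\prod_{\alpha \in \Delta^R_{\Lambda,+}} \partial_{\alpha^\vee}$ to $N$ and evaluating at $z=0$: this collapses the alternating sum over $W^R_\Lambda$ to $|W^R_\Lambda|$ times the claimed product, whose non-vanishing is precisely the regularity of $\Lambda+\rho^R$ for $W^R_\Lambda$ guaranteed by admissibility.
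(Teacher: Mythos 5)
Your overall strategy coincides with the paper's: write $\varphi_{\Lambda,h_\Lambda}=N/D$, use $W^R_\Lambda$-anti-invariance of $N$ to control its behaviour at $z=0$, identify the leading constant with $\prod_{\alpha\in\Delta^R_{\Lambda,+}}(\Lambda+\rho^R|\alpha)$ (nonzero by admissibility), and reduce (a) to matching the linear forms $\alpha|_{\fh^f}$; your part (b) and the positivity of $m_\alpha$ are fine and agree with what the paper does (the paper simply cites Theorem~\ref{th:2.3}(c) for (b)).

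There is, however, a genuine gap in the ``if'' direction of your part (a). You assert that $N/D$ has a finite nonzero limit at $z=0$ \emph{iff} the lowest homogeneous components of $N|_{\fh^f}$ and $D|_{\fh^f}$ are proportional. The ``only if'' half is correct (compare radial limits), but the ``if'' half is false for general holomorphic $N,D$: take $N=xy+x^3$ and $D=xy$; the leading components coincide, yet $N/D=1+x^2/y$ has no limit at the origin. Proportionality of leading terms controls the quotient only along generic rays, not near the zero divisor of $D$. What rescues the argument — and what the paper's proof uses — is the \emph{full} consequence of anti-invariance, not just its effect on the lowest Taylor coefficient: $N$ is divisible, as a holomorphic function, by the Weyl denominator $\prod_{\alpha\in\Delta^R_{\Lambda,+}}\bigl(1-e^{-2\pi i(\alpha|z)}\bigr)$, the quotient being the Weyl character of the reductive Lie algebra with root system $\Delta^R_\Lambda$, hence holomorphic with nonzero value $\prod_{\alpha\in\Delta^R_{\Lambda,+}}(\Lambda+\rho^R|\alpha)/(\rho^R|\alpha)$ at $z=0$. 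After restricting to $\fh^f$ one is left with the explicit ratio $\prod_{\alpha\in\Delta^R_{\Lambda,+}}\bigl(1-e^{-2\pi i(\alpha|z)}\bigr)\big/\prod_{\alpha\in\Delta^{R,0}_+\cup\Delta^{R,1/2}_+}\bigl(1-e^{-2\pi i(\alpha|z)}\bigr)$, for which convergence to a nonzero limit really is equivalent to the multiset equality (\ref{eq:2.12}), each matched pair contributing the factor $m_\alpha$. You already invoke the anti-invariance, so the fix costs you nothing; but as written, your proof of ``(\ref{eq:2.12}) $\Rightarrow$ almost convergent'' does not go through.
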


\vspace{3ex}

\begin{proof}
We rewrite the formula for $\varphi_{\Lambda ,h_{\Lambda}} (z)$ as follows:
\begin{displaymath}
  \varphi_{\Lambda ,h_{\Lambda}} (z)  =
     \frac{\sum_{w \in W^R_\Lambda} \epsilon (w) 
          e^{2\pi i (w (\Lambda +\rho^R) - \rho^R |z)}}
        {\prod_{\alpha \in \Delta^R_{\Lambda ,+}}
          (1-e^{-2\pi i (\alpha |z)})} \,\,   
    \frac{\prod_{\alpha \in \Delta^R_{\Lambda ,+}} (1-e^{-2\pi i (\alpha |z)})}
          { \prod_{\alpha \in \Delta^{R,0}_+\cup \Delta^{R,1/2}_+}
            (1-e^{-2\pi i (\alpha | z)})}\, . 
\end{displaymath}
Since $W^R_{\Lambda}$ contains all reflections in $\alpha \in
\Delta^R_{\Lambda ,+}$ the first factor is holomorphic in $z \in
\fh^f$, and, by the usual argument, its limit as $z \to 0$ is
equal to $\prod_{\alpha \in \Delta^R_{\Lambda ,+}} (\Lambda +
\rho^R |\alpha) / (\rho^R |\alpha) \neq 0$. (a) follows.
(b) follows due to  Theorem \ref{th:2.3}(c).

\end{proof}

\begin{corollary}
  \label{cor:2.2}
A necessary condition of almost convergence of $\ch_{H^R
  (L(\Lambda))}$, where $\Lambda$ is an admissible weight, is:
\begin{displaymath}
  |\Delta^R_{\Lambda} | = | \Delta^0 \cup \Delta^{1/2}| \, .
\end{displaymath}
\end{corollary}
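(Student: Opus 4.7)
The plan is to derive this directly from Theorem~\ref{th:2.4}(a). Almost convergence of $\ch_{H^R(L(\Lambda))}$ yields the multiset equality
\begin{displaymath}
 \{\alpha|_{\fh^f} \mid \alpha \in \Delta^R_{\Lambda,+}\}
  = \{m_\alpha\, \alpha|_{\fh^f} \mid \alpha \in \Delta^{R,0}_+ \cup \Delta^{R,1/2}_+\}
\end{displaymath}
with multiplicities. Equating the cardinalities of these two multisets gives
\begin{displaymath}
  |\Delta^R_{\Lambda,+}| \,=\, |\Delta^{R,0}_+ \cup \Delta^{R,1/2}_+|.
\end{displaymath}

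The remaining step is bookkeeping: I only need to double both sides and identify what one obtains. On the left, $\Delta^R_\Lambda$ is stable under $\alpha \mapsto -\alpha$ (since both $\Delta^R$ and $\hat{\Delta}^R_\Lambda$ are, the latter by the definition of admissibility), so $|\Delta^R_\Lambda| = 2|\Delta^R_{\Lambda,+}|$. On the right, $\Delta^{R,0}_+$ corresponds bijectively to $\Delta^{0,\new}_+ \subset \Delta^0$, whose negation accounts for the other half of $\Delta^0$; and $\Delta^{R,1/2}_+ = \{-\tfrac12 K + \beta \mid \beta \in \Delta^{1/2}_+\}$, whose negatives in $\Delta^R$ lie in the $-1/2$-eigenspace and correspond (via projection $\Delta^R \to \Delta$) to the remaining half of $\Delta^{1/2}$. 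Hence
\begin{displaymath}
 |\Delta^0 \cup \Delta^{1/2}| \,=\, 2\bigl(|\Delta^{R,0}_+| + |\Delta^{R,1/2}_+|\bigr) \,=\, 2|\Delta^{R,0}_+ \cup \Delta^{R,1/2}_+|.
\end{displaymath}
Combining the two displayed equations yields $|\Delta^R_\Lambda| = |\Delta^0 \cup \Delta^{1/2}|$, as required.

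There is essentially no obstacle: the content is packaged in Theorem~\ref{th:2.4}(a), and the corollary is the crude numerical consequence obtained by taking cardinalities. The only thing to be a bit careful about is that the multiset equality must be read with multiplicities, so that cardinalities really do match; and one must invoke the symmetry of both sides under negation to pass from counts of positive roots to counts of all roots.
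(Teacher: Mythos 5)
Your argument is correct and is exactly the intended one: the paper gives no separate proof of Corollary~\ref{cor:2.2}, treating it as the immediate cardinality consequence of the multiset identity~(\ref{eq:2.12}) in Theorem~\ref{th:2.4}(a), with the doubling step (negation-stability of $\Delta^R_\Lambda$ and the identifications $|\Delta^{R,0}_+|=\tfrac12|\Delta^0|$, $|\Delta^{R,1/2}_+|=\tfrac12|\Delta^{1/2}|$) handled just as you describe.
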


\vspace{4ex}

\subsection{Characters of the principal $W$-algebras.}~~
\label{sec:2.4}
In this section we consider in detail the case when $f$ is a
principal nilpotent element of $\fg$.  Recall that these are
elements of an adjoint orbit, whose closure contains all nilpotent
elements of $\fg$.  Recall that in this case $x=\rho^\vee$,
$\fh^f =0$, $\fg_0 =\fh$, hence $\Delta^0=\emptyset$, and
$\fg_{1/2}=0$, hence $\Delta^{1/2} =\emptyset$.  Hence $h_0 =0$
and $\Delta^{\rm new}_+ = \bar{w}_0 (\Delta_+)$, where
$\bar{w}_0$ is the longest element of $W$.  Thus, the quantum
Hamiltonian reduction, considered in this paper, coincides with
the `` $-$''-reduction of \cite{KRW}.  The corresponding
$W$-algebra, called \emph{principal}, will be denoted by $W^k
(\fg)$, and its simple quotient by $W_k (\fg)$.

By results of Arakawa \cite{A2}, a non-zero $W^k (\fg)$-module
$H^R (L (\Lambda))$ is irreducible and coincides with $H^{0,R} (L
(\Lambda))$. Hence the Euler--Poincar\'e character $\ch_{H^R (L
  (\Lambda))}$ is the character of this module.  Hence $\ch_{H^R
  (L (\Lambda))} =0$ iff the $W^k (\fg)$-module $H^R (L
(\Lambda))$ is zero.  It follows from Proposition~\ref{prop:1.2}
and Theorem~\ref{th:2.3}(c) that for a principal admissible
weight $\Lambda$ of $\hat{\fg}^R$ we have:
\begin{displaymath}
  H^R (L (\Lambda)) =0 \hbox{   iff   } (\Lambda |\alpha)\in\ZZ
     \hbox{   for some   } \alpha \in \Delta^\vee \, .
\end{displaymath}
The remaining principal admissible weights $\Lambda$ are  called
non-degenerate and the corresponding $W^k (\fg)$-modules $H^R (L
(\Lambda))$ are irreducible.  By Proposition~\ref{prop:1.3}, the
set of such $\Lambda$, denoted by $Prn^{k,R}$, is non-empty iff
$k$ satisfies conditions~(\ref{eq:0.2}):
\begin{displaymath}
  k+h^\vee = p /u \, , \hbox{   where   } p,u \in \NN \, , \, 
    (p,u) =1 \, , \, (\ell ,u)=1 \, , \,  
       p \geq h^\vee \, , \, u \geq h \, .
\end{displaymath}

It follows from \cite{FKW} that the irreducible $W^k
(\fg)$-modules $H^R (L (\Lambda))$ and $H^R (L (\Lambda'))$ with
$\Lambda ,\Lambda' \in Prn^{k,R} \cong Prn^k$ are isomorphic iff
$\varphi (\Lambda ) = \varphi (\Lambda')$, where
\begin{displaymath}
  \varphi : Prn^k \to I_{p,u} = (\hat{P}^{p-h^{\vee}}_+ \times
  \hat{P}^{\vee u-h}_+)/\tilde{W}_+ 
\end{displaymath}
is the surjective map, defined in Proposition~\ref{prop:1.3}.

Formula (\ref{eq:2.1}) for the central charge of these $W^k
(\fg)$-modules becomes:
\begin{displaymath}
  c(k,\fg) =r-\frac{12 |u\rho -p\rho^\vee |^2}{p u} \, .
\end{displaymath}

The normalized character of the irreducible $W^k (\fg)$-module
$H^R (L (\Lambda))$, parameterized by $\varphi (\Lambda ) =
(\lambda ,\mu) \in I_{p,u}$, is given by the following formula
\cite{FKW}:
\begin{displaymath}
  \chi_{\lambda ,\mu} (\tau) = \eta (\tau)^{-r}\sum_{w \in \hat{W}}
     \epsilon (w) q^{\frac{pu}{2} | 
         \frac{w (\lambda + \hat{\rho})}{p} -
            \frac{\mu + \hat{\rho}^{\vee}}{u}|^2} \, , 
\end{displaymath}
the minimal eigenvalue of $L^R_0$ being 
\begin{displaymath}
  h_{\lambda ,\mu} =\frac{1}{2pu} (| u (\bar{\lambda} +\rho) -
    p (\bar{\mu} +\rho^\vee)|^2 - |u \rho -p\rho^\vee|^2)\, .
\end{displaymath}

It follows from Theorem~\ref{th:2.2} that, as $\tau \downarrow
0$,
\begin{displaymath}
  \chi_{\lambda ,\mu} (\tau) \sim (up)^{-r/2}
    |P/Q^\vee |^{-1/2} \prod_{\alpha \in \Delta_+}4\sin
    \frac{(\lambda +\rho |\alpha)}{p} \sin 
      \frac{(\mu +\rho^\vee|\alpha)}{u} \, .
\end{displaymath}
A formula for modular transformations can be found in \cite{FKW}.

\vspace{1ex}

\textbf{Conjecture PA.}~~If $\Lambda \in Prn^{k,R}$, then the $W^k
(\fg)$-module $H^R (L (\Lambda))$ is actually a $W_k
(\fg)$-module (defined, up to isomorphism, by $\varphi (\Lambda)$), 
and these are all irreducible $W_k (\fg)$-modules.

\vspace{2ex}

\textbf{Conjecture PB.}~~The vertex algebra $W_k (\fg)$ is
semisimple iff $k$ satisfies (\ref{eq:0.2}).

\vspace{2ex}

\subsection{Conjectures.}~~
\label{sec:2.5}
Consider the space of meromorphic functions in the domain $Y$ 
with the
following action  of  $SL_2 (\ZZ)$:
\begin{displaymath}
  (g \cdot \chi) (\tau ,z,t) = \chi \left( 
    \frac{a\tau +b}{c\tau +d} \, , \, \frac{z}{c\tau +d}\, , \,
       t-\frac{c(z|z)}{2(c\tau +d)}\right) \, , \, 
     g= \binom{a \,\, b}{c \,\, d} \in SL_2 (\ZZ)\, .
\end{displaymath}
We call a function $\chi$ from this space {\it modular invariant}
if the $\CC$-span of the set of all functions $\{ g \cdot \chi \}_{g \in SL_2
(\ZZ)}$ is finite-dimensional.  
It follows from Proposition~\ref{prop:1.1} and the
results of 
\cite{KW1}, Theorems 3.6 and 3.7, and \cite{KW2}, Remark 4.3(a), that 
the function $\chi_{L(kD)}$ is modular invariant if $k$ is of the form
(\ref{eq:0.1}).
%
%
Moreover, if $k$ is of the form 
(\ref{eq:0.1}) with $(u,\ell)=1$, then
$\chi_{L(\lambda)}$ is modular invariant for all $\lambda \in Pr^k$, simply
because $|Pr^k \mod \CC K|< \infty$ and the $\CC$-span of 
$\{\chi_{L(\lambda)}\ \, |\lambda \in Pr^k\}$ is $SL_2
(\ZZ)$-invariant (cf.~Theorems~\ref{th:1.1}--\ref{th:1.3}).

\vspace{2ex}

\textbf{Conjecture A.}~~If $\chi_{L(kD)}$ is modular invariant, 
then $k$ is of the form (\ref{eq:0.1}).

\vspace{2ex}

It has been established recently \cite{GK2} that the character of any 
highest weight 
$\hat{\fg}$-module of non-critical level $k \neq -h^\vee$ is a meromorphic 
function in 
the domain $Y$. It follows from \cite{GK1} that for non-critical $k$,
$\chi_{L(kD)}$ 
can be modular invariant only for $k=-h^\vee + p/u \ell$, where
$(p,u)=1, u \geq 1, p \geq 2$. Thus, the first unknown case is $\fg =sl_3$,
$k=-1$.

\vspace{2ex}

\textbf{Conjecture B.}

\alphaparenlist

\begin{enumerate}

\item 
If $M$ is an irreducible highest weight $\hat{\fg}^R$-module 
of level $k$, then $H^R(M)=H^{R}_0(M)$, and this is either an 
irreducible $\sigma_R$-twisted $W^k(\fg,f)$-module, or zero.



\item 
Let $W^{R,f}=\{w \in \hat{W}^R|\,\, w|_{\CC D^R +\fh^f}=Id \}$.
Suppose that $H^R(L(\Lambda)) 
\neq 0$. Then the $\sigma_R$-twisted $W^k(\fg,f)$-modules $H^R(L(\Lambda))$ 
and $H^R(L(\Lambda'))$
are isomorphic iff $\Lambda'=y.\Lambda$ for some $y \in W^{R,f}$. 

\end{enumerate}



\vspace{2ex}
In the case when $f$ is a principal (resp. ``minimal'' nilpotent), 
Conjecture ~B(a) was stated in 
\cite{FKW} and \cite{KRW}, and was proved in 
\cite{A2} and \cite{A1} respectively. 

The specific choice of $\hat{\Delta}^R_+$
(depending on our specific choice of $\Delta_+^{\new}$) is very important.
Indeed, if the property, given by \ref{lem:2.2}(d) (guaranteed by our choice
of $\Delta_+^{\new}$), doesn't hold, then Conjecture B(a) fails, for example,
when $\fg=sp_4$ and $f$ is a root 
vector, attached to a short root. 

It follows from Conjecture~B(a) and Theorem~\ref{th:2.4} that for an admissible
$\Lambda$, 
$\ch_{H^R(L(\Lambda))}(\tau,0,0)$ 
converges for all $\tau \in \CC^+$
and does not vanish iff $\ch_{H^R(L(\Lambda))}$ is almost convergent.
 
\begin{remark}
  \label{rem:2.2}
It follows from
  Theorem~\ref{th:2.2} that if $\Lambda \in Pr^{k,R}$ and
  $k=-h^\vee +\frac{p}{u}$ is such that $pu \dim \fg^f < h^\vee \dim
  \fg$, then $H^R (L (\Lambda))=0$.  Moreover, if  $pu \dim\fg^f =
  h^\vee \dim \fg$ and $\Lambda \in Prn^{k,R}(\fg ,f)$,
  then $0 < \dim H^R (L (\Lambda))<\infty$, hence $c (\fg ,f,k)=0$.
Thus, we obtain the following 
  generalization of the ``strange'' formula:
\begin{displaymath}
  | \rho - \frac{p}{u}x |^2 = \frac{1}{12}\, \frac{p}{u}
    \left(\dim \fg_0 -\frac12 \dim \fg_{1/2}\right)\, .
\end{displaymath}
This formula holds for each exceptional pair $f$, $u$ 
(see Definition \ref{def:2.3} below), and an integer $p \geq h^\vee$, 
$(u,p)=1$, such that $pu \dim \fg^f =h^\vee \dim \fg$.
\end{remark}


\vspace{2ex}

\textbf{Conjecture C.}~~The set $Prn^{k,R}(\fg ,f)$ is non-empty iff k 
is of the form
(\ref{eq:0.1}), where $u$ satisfies 
\begin{equation}
\label{eq:2.13}
   u>(\theta|x)\, .
\end{equation}

\begin{definition}
  \label{def:2.3}

A pair $(k,f)$, where $k \in \CC$ and $f$ is a nilpotent element
of $\fg$, is called exceptional if the Euler--Poincar\'e
character $\ch_{H^R (L (\Lambda))}$ of the $W^k (\fg ,f)$-module
$H^R (L (\Lambda))$ is almost convergent for some principal
admissible $\hat{\fg}^R$-module of level~$k$, and is either 0 or is almost 
convergent for all principal admissible $\hat{\fg}^R$-modules of level $k$.
In this case $f$ is called an exceptional nilpotent of $\fg$
and its adjoint orbit is called an exceptional nilpotent orbit,
and $k$ is called an exceptional level for $f$ and its denominator $u$
an exceptional denominator.

\end{definition}


Recall that convergence of characters $tr_M q^{L_0 -c/24}$
of all modules $M$ over a vertex algebra~$V$ is a necessary
condition of rationality of $V$.  
We conclude from the above discussion that, provided that
Conjectures ~B and ~C hold, the vertex algebra 
$W_k (\fg ,f)$ is rational iff the
pair $(k,f)$ is exceptional.

\vspace{2ex}

\textbf{Conjecture D.}~~The vertex algebra $W_k (\fg ,f)$ is semisimple 
iff the pair $(k,f)$ is exceptional.   

\vspace{2ex}

\begin{remark}
  \label{rem:2.3}

Note that $W^k (\fg ,0)$ (resp.~$W_k (\fg,0)$) is 
isomorphic to the universal (resp.~simple) affine vertex algebra
$V^k (\fg)$ (resp.~$V_k (\fg)$), and $H^R (M) \cong M$ in this
case $(f=0)$.  Hence the pair $(k,0)$ is exceptional iff $k \in
\ZZ_+$.  Since $Pr^k=\hat{P}^k_+$ in this case and all $V_k
(\fg)$-modules with $k \in \ZZ_+$ are $\{ L
(\Lambda)\}_{\Lambda\in \hat{P}^k_+ \mod \CC K}$ \cite{FZ}, we
conclude that Conjecture~C (resp.~D)  holds if $f=0$ and $k \in
\ZZ_+$ (resp.~$f=0$).  However, it is unknown whether
Conjecture~C holds if $k$ is of the form (\ref{eq:0.1}) with $u$ satisfying  
(\ref{eq:2.13}) and $k \notin \ZZ_+$, except for the case 
$\fg = s\ell_2$ \cite{AM}.
E.~Frenkel pointed out that the first part of Conjecture~C
follows from the special case $f=0$.

\end{remark}

\begin{remark}
  \label{rem:2.4}
In the case when $f$ is a principal nilpotent of a simply laced
simple Lie algebra, the normalized characters $\chi_{\lambda
  ,\mu} (\tau)$ of $W^k (\fg ,f)$ coincide with those of the
centralizer of $V_k (\fg)$ in the vertex algebra $V_1 (\fg)
\otimes V_{k-1} (\fg)$ \cite{KW2}.  Conjecture~C in the case of
simply laced $\fg$ and principal nilpotent $f$ would follow if
the latter vertex algebra were isomorphic to $W_k (\fg ,f)$.
However, apparently this isomorphism is an open problem.
\end{remark}

\vspace{2ex}

\subsection{Description of exceptional pairs, assuming 
positivity.}~~
\label{sec:2.6}  
The following theorem provides an easy way to check almost
convergence under the assumption that all the coefficients of
$\ch_{H^R(L(\Lambda))}$ are non-negative
(cf.~Conjecture~B(a)).

\begin{theorem}
  \label{th:2.5}
Let $\Lambda$ be an admissible weight for $\hat{\fg}^R$ and
assume that all the coefficients of $\ch_{H^R(L(\Lambda))}$ 
are non-negative.
Then $\ch_{H^R (L (\Lambda))}$ is almost convergent iff
\begin{equation}
  \label{eq:2.14}
  \Delta^R_\Lambda \subset \Delta^R \backslash \Delta^{R,f} 
     \hbox{  and   } |\Delta^R_\Lambda | 
         = | \Delta^0 \cup \Delta^{1/2}| \, .
\end{equation}
Consequently, if these conditions hold, then the eigenspace of
$L^R_0$ with minimal eigenvalue is finite-dimensional.

\end{theorem}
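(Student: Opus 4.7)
The implication $(\Rightarrow)$ requires no positivity: almost convergence forces $\ch_{H^R(L(\Lambda))}\neq 0$, so Theorem~\ref{th:2.3}(c) yields $\Delta^R_{\Lambda,+}\subset \Delta^R_+\setminus \Delta^{R,f}$, while Corollary~\ref{cor:2.2} provides the cardinality equality $|\Delta^R_\Lambda|=|\Delta^0\cup \Delta^{1/2}|$. So only $(\Leftarrow)$ is substantive.

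For $(\Leftarrow)$, my plan is to deduce the multiset-matching criterion of Theorem~\ref{th:2.4}(a) from (\ref{eq:2.14}) using the positivity hypothesis; once that is in hand, almost convergence follows directly from Theorem~\ref{th:2.4}(a). I start from the factorization used in the proof of Theorem~\ref{th:2.4},
\begin{displaymath}
\varphi_{\Lambda,h_\Lambda}(z)=\frac{N(z)}{P(z)}\cdot\frac{P(z)}{D(z)},
\end{displaymath}
where $P(z)=\prod_{\alpha\in\Delta^R_{\Lambda,+}}(1-e^{-2\pi i(\alpha|z)})$ and $D(z)=\prod_{\beta\in\Delta^{R,0}_+\cup\Delta^{R,1/2}_+}(1-e^{-2\pi i(\beta|z)})$. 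By (\ref{eq:2.14})(i) every $\alpha\in\Delta^R_{\Lambda,+}$ has non-trivial restriction to $\fh^f$, so as in the proof of Theorem~\ref{th:2.4} the first factor $N/P$ is already holomorphic at $z=0$ on $\fh^f$ with non-zero limit $\prod_{\alpha\in\Delta^R_{\Lambda,+}}(\Lambda+\rho^R|\alpha)/(\rho^R|\alpha)$. The task thus reduces to showing that $P/D$ extends analytically to $z=0$ on $\fh^f$.

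Next I would invoke positivity. Since $\varphi_{\Lambda,h_\Lambda}(z)|_{\fh^f}$ is the coefficient of $q^{h_\Lambda}$ in $\ch_{H^R(L(\Lambda))}$, positivity supplies a Fourier expansion $\varphi_{\Lambda,h_\Lambda}(z)|_{\fh^f}=\sum_\nu c_\nu e^{2\pi i(\nu|z)}$ with $c_\nu\geq 0$, convergent in some open cone $\Omega\subset\fh^f$ determined by the weight support. Pick $z_0\in\Omega$ generic, so that $(\alpha|z_0)\neq 0$ for every $\alpha$ occurring in either product. By (\ref{eq:2.14})(ii) the sets $\Delta^R_{\Lambda,+}$ and $\Delta^{R,0}_+\cup\Delta^{R,1/2}_+$ have the same cardinality, so $P(tz_0)$ and $D(tz_0)$ vanish to the same order as $t\to 0$, giving $\varphi_{\Lambda,h_\Lambda}(tz_0)$ a finite limit along this ray. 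An Abelian/monotone argument applied to the positive series $\sum_\nu c_\nu e^{2\pi\tau(\nu|z_0)}$ with $z=-i\tau z_0$, $\tau\downarrow 0$, then upgrades this finite directional limit to absolute convergence $\sum_\nu c_\nu<\infty$.

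Absolute convergence at $z=0$ makes $\varphi_{\Lambda,h_\Lambda}(z)|_{\fh^f}$ analytic at the origin; by analytic continuation the meromorphic expression $P/D$ has a removable singularity there, and its specific product structure then forces the multisets $\{\alpha|_{\fh^f}:\alpha\in\Delta^R_{\Lambda,+}\}$ and $\{\beta|_{\fh^f}:\beta\in\Delta^{R,0}_+\cup\Delta^{R,1/2}_+\}$ to agree up to positive rational rescalings---this is exactly the hypothesis of Theorem~\ref{th:2.4}(a), which therefore yields almost convergence. The limit equals $\sum_\nu c_\nu>0$ because any single weight $\nu_0$ present in the lowest eigenspace contributes $c_{\nu_0}>0$, and the finiteness of this sum is precisely the finite dimensionality of the minimal $L^R_0$-eigenspace asserted at the end of the theorem. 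The main obstacle in this plan is the Abelian step: one must verify that a generic $z_0$ with the required sign pattern on the pairings $(\nu|z_0)$ actually lies in the convergence cone $\Omega$, which I expect to resolve by using admissibility of $\Lambda$ to place the weight support of $\varphi_{\Lambda,h_\Lambda}$ inside a translate of a positive chamber for the roots in $\Delta^R\setminus\Delta^{R,f}$.
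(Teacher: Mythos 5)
Your proposal is correct in substance and shares the paper's key insight --- positivity of the coefficients plus equality of the orders of vanishing of numerator and denominator along a single generic ray forces good behaviour everywhere --- but it implements this insight by a genuinely different technical device. The paper reduces the backward implication to a purely formal statement about power series (Lemmas~\ref{lem:2.13} and \ref{lem:2.14}): a rational function $f/\prod(1-z_1^{k_1}\cdots z_n^{k_n})^{m(k)}$ with non-negative Taylor coefficients whose one-variable specialization has a removable singularity at $z=1$ must be a polynomial; the proof is a two-line coefficient comparison ($f_1=Rf_2$ with $f_2$ having non-negative coefficients and $f_2(0)=1$, so the coefficients of $R$ are dominated by those of the polynomial $f_1$). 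This is pure algebra of formal series, so the question of where the Fourier expansion of $\varphi_{\Lambda,h_\Lambda}$ converges --- which you correctly single out as the main obstacle in your Abelian/monotone-convergence route --- simply never arises. Your analytic version does work: the weight support of the minimal $L^R_0$-eigenspace lies in $\bar\Lambda|_{\fh^f}$ minus the cone spanned by the nonzero restrictions $\alpha|_{\fh^f}$, $\alpha\in\Delta^{\new}_+$, and the element $h_0$ (together with $h'$ from Lemma~\ref{lem:2.5}) pairs strictly positively with all of these, so a suitable $z_0$ in the convergence cone exists and monotone convergence then gives $\sum_\nu c_\nu<\infty$; since the $c_\nu$ are non-negative integers this forces finite support, hence analyticity at $0$ and finite-dimensionality of the minimal eigenspace. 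Your subsequent detour --- recovering the multiset condition (\ref{eq:2.12}) from the removability of $P/D$ and then re-invoking Theorem~\ref{th:2.4}(a) --- is sound but unnecessary, since at that stage you have already shown directly that $\varphi_{\Lambda,h_\Lambda}$ has a finite limit at $z=0$, and its non-vanishing follows from Theorem~\ref{th:2.4}(b) together with positivity. In short: what the paper's formal lemma buys is the complete elimination of the convergence analysis; what your route buys is a more conceptual Abelian-theorem picture, at the cost of having to justify the convergence cone, which you do sketch correctly.
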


\begin{proof}
In view of Theorems~\ref{th:2.3} and \ref{th:2.4}, and
Corollary~\ref{cor:2.2}, it remains to prove that conditions
(\ref{eq:2.14}) imply almost convergence.  In view of the proof
of Theorem~\ref{th:2.4}, it suffices to prove the following
simple lemma on rational functions.

\begin{lemma}
  \label{lem:2.13}

Let $R (z_1 , \ldots , z_n) =
  f (z_1, \ldots ,z_n) / 
     \prod_{(k_1,\ldots ,k_n) \in \ZZ^n_+ \backslash \{ 0 \}}
       (1-z^{k_1}_1 \ldots z^{k_n}_n)^{m(k_1,\ldots ,k_n)}$ ,
where $f$ is a polynomial, $m (k_1,\ldots ,k_n) \in \ZZ_+$ and
all but a finite number of them is~$0$.
Suppose that all coefficients of the Taylor expansion of $R (z_1
,\ldots ,z_n)$ at $z_1 = \cdots = z_n =0$ are non-negative, and
$R(z^{s_1} ,\ldots ,z^{s_n})$ has a removable singularity at
$z=1$, for some positive integers $s_i$.  Then $R (z_1,\ldots
,z_n)$ is a polynomial.

\end{lemma}

This lemma follows from a similar lemma in one variable.
\end{proof}

\begin{lemma}
  \label{lem:2.14}
Let $R (z) = f(z) /\prod^N_{j=1} (1-z^j)^{m_j}$, where $f(z)$ is a
polynomial in $z$ and $m_j \in \ZZ_+$.  Suppose that all
coefficients of the Taylor expansion of $R(z)$ at $z=0$ are
non-negative and that $R(z)$ has a removable singularity at
$z=1$.  Then $R(z)$ is a polynomial.

\end{lemma}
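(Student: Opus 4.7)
The plan is to invoke Pringsheim's theorem (sometimes called the Vivanti-Pringsheim theorem): if a power series $\sum_{n \geq 0} a_n z^n$ has non-negative real coefficients and finite radius of convergence $r$, then $z = r$ is a singularity of the sum. The argument proceeds in three short steps.

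First, I would observe that because $R(z) = f(z)/\prod_{j=1}^N (1-z^j)^{m_j}$ with $f$ a polynomial, $R$ is a rational function whose denominator vanishes only at roots of unity. In particular, $R$ is holomorphic at $z=0$ (the denominator is $1$ there), and after clearing common factors the poles of $R$ (if any) all lie on the unit circle $|z|=1$. Consequently, the radius of convergence $r$ of the Taylor expansion of $R$ at $0$ is either equal to $1$ (if $R$ has at least one pole) or equal to $+\infty$ (if $R$ is entire as a meromorphic function on $\CC$).

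Second, I would rule out the case $r=1$. By hypothesis the Taylor coefficients of $R$ at $0$ are non-negative, so Pringsheim's theorem applies: the point $z = r$ must be a singularity of $R$. If $r=1$, this singularity would be at $z=1$, contradicting the assumption that $R$ has a removable singularity at $z=1$. Hence $r = +\infty$, i.e., $R$ extends to an entire function.

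Finally, a rational entire function is a polynomial: indeed, $R$ being entire means every zero of the denominator $\prod_{j=1}^N (1-z^j)^{m_j}$ is cancelled by a zero of $f$ of at least the same multiplicity, so $R(z) \in \CC[z]$. This completes the proof. The only non-routine ingredient is Pringsheim's theorem; there is no real obstacle beyond recalling it, since all poles of $R$ are conveniently confined to the unit circle, which is exactly what makes the one-variable case clean (and explains why the multivariate Lemma~\ref{lem:2.13} is deduced by specializing $z_i = z^{s_i}$ to reduce to this situation).
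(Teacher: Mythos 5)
Your proof is correct, but it takes a genuinely different route from the paper's. You invoke Pringsheim's (Vivanti--Pringsheim) theorem: since all poles of the rational function $R$ lie among roots of unity, the radius of convergence of its Taylor series at $0$ is either $1$ or $+\infty$; non-negativity of the coefficients forces $z=r$ to be a genuine singularity when $r$ is finite, which the removable singularity at $z=1$ rules out, so $R$ is entire and hence a polynomial. The paper instead argues in a purely elementary, algebraic way: with $M=\sum_j m_j$, the removability of the singularity at $z=1$ forces $f(z)=(1-z)^M f_1(z)$ for a polynomial $f_1$, while the denominator factors as $(1-z)^M f_2(z)$ with $f_2(z)=\prod_j(1+z+\cdots+z^{j-1})^{m_j}$ having non-negative coefficients and $f_2(0)=1$; then $f_1=Rf_2$, and since the $n$\st{th} coefficient of $Rf_2$ dominates the $n$\st{th} coefficient of $R$ (both series being non-negative), the coefficients of $R$ vanish beyond $\deg f_1$. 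Your argument is more conceptual and would apply to any rational function with non-negative Taylor coefficients whose poles are confined to the circle $|z|=r$ and which is regular at $z=r$, but it imports a nontrivial theorem of complex analysis; the paper's argument is self-contained and exploits the specific cyclotomic structure of the denominator. Both are complete proofs, and both reduce the multivariate Lemma~\ref{lem:2.13} to this one-variable statement by the substitution $z_i=z^{s_i}$ exactly as you indicate.
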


\begin{proof}
Let $M=\sum_j m_j$.  We have:  $f(z) = (1-z)\,{}^Mf_1 (z)$, and
$\prod^N_{j=1} (1-z^j)^{m_j} = (1-z)\, {}^Mf_2 (z)$, where 
$f_1(z)$ and $f_2(z)$
are polynomials and the coefficients of $f_2(z)$ are
non-negative.  Hence $f_1 (z) = R(z) f_2 (z)$.  Since all
coefficients of the Taylor expansion at $z=0$ of $R(z)$ and
$f_2(z)$ are non-negative and $f_1 (z)$ is a polynomial, it
follows that $R(z)$ is a polynomial.

\end{proof}

Using Theorem~\ref{th:2.5}, we can give a description of
exceptional pairs $(k,f)$ in terms of the Lie algebra $\fg$ and
its adjoint group~$G$.  For this we need some lemmas.

\begin{lemma}
  \label{lem:2.15}

Let $\Lambda$ be an admissible weight of $\hat{\fg}$ of level
$k=v/u$, where $u,v \in \ZZ$, $u>0, (u,v)=1$.  Suppose that

\begin{list}{}{}
\item (i)~~$(u,\ell)=1$,

\item (ii)~~for each $\alpha \in \Delta^\vee$ there exists $m \in
  \ZZ$, such that $mK+\alpha \in \hat{\Delta}_\Lambda^{\vee,\re}$.  
\end{list}
Then $\Lambda$ is a principal admissible weight.

\end{lemma}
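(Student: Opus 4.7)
The goal is to exhibit $\hat{\prod}^\vee_\Lambda$ in the form $t_\beta(\hat{S}_{(u)})$ for some $\beta \in Q^*$; then $\hat{\Delta}^\vee_\Lambda \cong \hat{\Delta}^\vee$, and by Theorem~\ref{th:1.1}(c) the weight $\Lambda$ will be principal admissible. Writing $k+h^\vee = p/u$ with $p := v + uh^\vee$, so that $(u,p)=1$, the condition $(\Lambda + \hat{\rho}|nK) = np/u \in \ZZ$ forces the imaginary coroots in $\hat{\Delta}^\vee_\Lambda$ to be precisely $\{nuK : n \in \ZZ \setminus \{0\}\}$.

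The first step is to use hypothesis (ii) to choose, for each $i = 1,\ldots,r$, an integer $m_i$ with $m_i K + \alpha_i^\vee \in \hat{\Delta}^{\vee,\re}_\Lambda$; when $\alpha_i$ is short, membership in $\hat{\Delta}^{\vee,\re}$ additionally implies $m_i \in \ell\ZZ$. Likewise pick $m_0 \in \ZZ$ with $m_0 K - \theta^\vee \in \hat{\Delta}^{\vee,\re}_\Lambda$. I then define $\beta \in \fh^*$ by $(\alpha_i|\beta) = -m_i$ for $\alpha_i$ long and $(\alpha_i|\beta) = -m_i/\ell$ for $\alpha_i$ short; the divisibility $\ell \mid m_i$ just noted forces $\beta \in Q^*$, so $t_\beta \in \tilde{W}$. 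The translation formula $t_\beta(v) = v - (v|\beta)K$ (valid on vectors killed by $K$) yields $t_\beta(\alpha_i^\vee) = \alpha_i^\vee + m_i K$ and, using $\theta^\vee = \sum_{i \geq 1} a_i^\vee \alpha_i^\vee$, also $t_\beta(uK - \theta^\vee) = (u - \sum a_i^\vee m_i)K - \theta^\vee$.

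Summing the admissibility relations $(\Lambda+\hat{\rho}|\alpha_i^\vee) + m_i p/u \in \ZZ$ weighted by $a_i^\vee$ together with the corresponding relation for $-\theta^\vee$ produces $u \mid (m_0 + \sum a_i^\vee m_i)$; adjusting the $m_i$ within their mod-$u$ (respectively mod-$u\ell$) residue classes so that $m_0 + \sum a_i^\vee m_i = u$ ensures $t_\beta(uK - \theta^\vee) = m_0 K - \theta^\vee \in \hat{\Delta}^\vee_\Lambda$. Hence $t_\beta(\hat{S}_{(u)}) \subset \hat{\Delta}^\vee_\Lambda$, which by closure under reflections gives the inclusion $t_\beta(\hat{\Delta}^\vee_{(u)}) \subseteq \hat{\Delta}^\vee_\Lambda$.

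For the reverse inclusion, which completes the argument, I would take any real coroot $\gamma = \alpha^\vee + nK \in \hat{\Delta}^\vee_\Lambda$ and expand $\alpha^\vee = \sum c_i \alpha_i^\vee$; admissibility of $\gamma$ together with the chosen $m_i$ readily yields $u \mid (n - \sum c_i m_i)$, whence $t_{-\beta}(\gamma) \in \hat{\Delta}^{\vee,\re}_{(u)}$ in the case of long $\alpha$. The main technical hurdle arises for short $\alpha$, where one further needs $\ell u \mid (n - \sum c_i m_i)$: the expansion $c_i = \ell b_i$ for long $\alpha_i$ (combined with $m_i \in \ell\ZZ$ for short $\alpha_i$) yields $\sum c_i m_i \in \ell\ZZ$, while $n \in \ell\ZZ$ comes directly from $\gamma \in \hat{\Delta}^{\vee,\re}$. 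It is precisely here that hypothesis (i) $(u,\ell)=1$ is indispensable, upgrading the separate $u$- and $\ell$-divisibilities to the required $u\ell$-divisibility and thereby placing $t_{-\beta}(\gamma)$ inside $\hat{\Delta}^{\vee,\re}_{(u)}$.
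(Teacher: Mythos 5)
Your proof is correct, but it follows a genuinely different route from the paper's. The paper's argument is classification-based: it quotes from \cite{KW1} that in the simply laced case condition (ii) alone forces principal admissibility, and that in the non-simply laced cases the only candidates for $\hat{\prod}^\vee_\Lambda$ not isomorphic to $\hat{\prod}^\vee$ compatible with (ii) are $\sigma_{\alpha^\vee_0}\hat{\prod}^\vee$ (for $B_r,C_r,F_4,G_2$) and $\sigma_{\alpha^\vee_4}\hat{\prod}^\vee$ (for $F_4$); a case-by-case computation of the denominator via $\sum_i a^\vee_i\gamma_i=uK$ then shows $\ell\mid u$ in each of these, contradicting (i). You instead construct $\beta\in Q^*$ explicitly from the integers $m_i$ supplied by (ii) and verify both inclusions between $\hat{\Delta}^\vee_\Lambda$ and $t_\beta(\hat{\Delta}^\vee_{(u)})$ by direct integrality computations; the key points all check out: the divisibility $\ell\mid m_i$ for short $\alpha_i$ (needed for $\beta\in Q^*$) is indeed forced by $m_iK+\alpha_i^\vee$ being a real coroot, the subtraction argument gives $u\mid(n-\sum_i c_im_i)$ since $(p,u)=1$, and for short $\alpha$ the standard fact that a long coroot has $\ell$-divisible coefficients on the short simple coroots, combined with $(u,\ell)=1$, upgrades this to $\ell u$-divisibility exactly as you say. (The step where you renormalize so that $m_0+\sum_i a^\vee_im_i=u$ is harmless but unnecessary: $t_\beta(uK-\theta^\vee)\in\hat{\Delta}^\vee_\Lambda$ already follows by summing the integrality relations, without matching it to $m_0K-\theta^\vee$.) What your approach buys is self-containedness — it avoids invoking the full classification of admissible weights and the tables of \cite{K1} — at the cost of length; the paper's approach is shorter but leans on external case analysis.
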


\begin{proof}
It follows from the classification of admissible weights in
\cite{KW1} that if $\fg$ is simply laced (i.e.,~$\ell =1$), then
condition~(ii) implies that $\Lambda$ is principal admissible.

In the non-simply laced cases  condition~(ii) leaves only the
following possibilities for $\hat{\prod}^\vee_\Lambda$ to be
non-isomorphic to $\hat{\prod}^\vee$ \cite{KW1} (we use the
numeration of simple roots of $\hat{\fg}$ from the tables of
\cite{K1}, Chapter~4):
\begin{displaymath}
  \hat{\prod}^\vee_\Lambda = \sigma_{\alpha^\vee_0} 
    \hat{\prod}^\vee  \hbox{  if  } \fg = B_r \, , \, 
  C_r \, , \, F_4 \hbox{  or  } G_2 \, ; \, 
  \hat{\prod}^\vee_\Lambda =\sigma_{\alpha^\vee_4} 
    \hat{\prod}^\vee  \hbox{  if  } \fg =F_4 \, .
\end{displaymath}
But the denominator $u$ of $k$ can be computed by the formula
$\sum^r_{i=0} a^\vee_i \gamma_i = uK$, where
$\hat{\prod}^\vee_\Lambda = \{ \gamma_0 , \ldots , \gamma_r \}$,
and we see by a case-wise inspection that in all cases $u$ is
divisible by~$\ell$, a contradiction with~(i).

\end{proof}

\begin{lemma}
  \label{lem:2.16}

Let $\Lambda \in \fh$ and let $u$ be a positive integer, satisfying the
following conditions:

\begin{list}{}{}

\item (i)~~$(\Lambda + \rho |\alpha) \notin -\ZZ_+$ for all
  $\alpha \in \Delta^\vee_+$,

\item (ii)~~$(\Lambda |\alpha) \in \frac1u \ZZ$ for all $\alpha
  \in \Delta^\vee$,

\item (iii)~~$(u,\ell) =1$.

\end{list}

Then for a sufficiently large integer~$p$, coprime to~$u$,
$\hat{\Lambda} = (\frac{p}{u}-h^\vee) D+\Lambda$ is a principal
admissible weight (of level $k=-h^\vee +p/u$).

\end{lemma}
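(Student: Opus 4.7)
My plan is to verify admissibility conditions (\ref{eq:1.1a}) and (\ref{eq:1.1b}) directly for $\hat\Lambda := (p/u - h^\vee)D + \Lambda$ for sufficiently large $p$ coprime to $u$, and then invoke Lemma~\ref{lem:2.15} to promote admissibility to principal admissibility. Since $\hat\Lambda + \hat\rho = (p/u)D + (\Lambda + \rho)$ and $(\Lambda + \rho|K) = 0$, the basic pairings are
\[
(\hat\Lambda + \hat\rho\, |\, nK) = np/u, \qquad (\hat\Lambda + \hat\rho\, |\, \beta^\vee + mK) = (\Lambda + \rho\, |\, \beta^\vee) + mp/u,
\]
where $\beta \in \Delta$ and $m \in \ZZ$ if $\beta$ is long (respectively $m \in \ell\ZZ$ if $\beta$ is short), so that $\beta^\vee + mK \in \hat\Delta^{\vee,\re}$.

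Hypothesis (ii) together with integrality of $(\rho|\beta^\vee)$ gives $c_\beta := u(\Lambda + \rho|\beta^\vee) \in \ZZ$ for every $\beta \in \Delta$, so that $\beta^\vee + mK \in \hat\Delta^\vee_{\hat\Lambda}$ iff $mp + c_\beta \equiv 0 \pmod u$. For long $\beta$ this has a solution by $(p,u) = 1$; for short $\beta$, where we must additionally arrange $\ell | m$, the congruence $n\ell p + c_\beta \equiv 0 \pmod u$ still has a solution since $(\ell p, u) = 1$ by hypothesis~(iii). Combined with the fact that $uK \in \hat\Delta^\vee_{\hat\Lambda}$, this yields, for each $\beta^\vee \in \Delta^\vee$, some element $\beta^\vee + mK$ lying in $\hat\Delta^{\vee,\re}_{\hat\Lambda}$. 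These elements together $\QQ$-span $\hat\Delta^\vee$, which simultaneously verifies (\ref{eq:1.1b}) and supplies hypothesis~(ii) of Lemma~\ref{lem:2.15}.

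For (\ref{eq:1.1a}) it suffices to consider $\alpha^\vee \in \hat\Delta^\vee_{\hat\Lambda,+}$, as otherwise the pairing is non-integer and there is nothing to check. Positive imaginary coroots in $\hat\Delta^\vee_{\hat\Lambda}$ are multiples $uK, 2uK, \ldots$, all giving positive pairings. For real $\alpha^\vee = \beta^\vee + mK \in \hat\Delta^\vee_{\hat\Lambda,+}$: when $\beta \in \Delta_+$ and $m = 0$, integrality forces $(\Lambda + \rho|\beta^\vee) \in \ZZ$, and then hypothesis~(i) forces it to lie in $\NN$; in all remaining cases one has $m \geq 1$ (or $m \geq \ell$ for short $\beta$), yielding $(mp + c_\beta)/u \geq (p - \max_{\beta \in \Delta}|c_\beta|)/u$, a positive integer whenever $p > \max_{\beta \in \Delta}|c_\beta|$. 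Choosing $p$ to be any prime exceeding both $u$ and this bound ensures $(p, u) = 1$ together with (\ref{eq:1.1a}), so that $\hat\Lambda$ is admissible, and Lemma~\ref{lem:2.15} (using hypothesis~(iii)) then yields principal admissibility. The main piece of bookkeeping is coordinating the constraint $\ell | m$ for short roots with the coprimality $(\ell, u) = 1$; once that is handled, all remaining estimates are uniform bounds over the finite root system~$\Delta$.
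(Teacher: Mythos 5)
Your proposal is correct and follows essentially the same route as the paper: choose $p$ coprime to $u$ with $p/u$ exceeding all $(\Lambda+\rho|\alpha)$, $\alpha\in\Delta^\vee$, check admissibility directly (the paper dismisses this step as "clear" where you spell out the $m=0$ versus $m\geq 1$ cases), then for each $\alpha\in\Delta^\vee$ solve the congruence $m\ell p + u(\Lambda+\rho|\alpha)\equiv 0\pmod u$ using $(u,\ell p)=1$ to produce an element of $\hat{\Delta}^{\vee,\re}_{\hat\Lambda}$ above $\alpha^\vee$, and conclude by Lemma~\ref{lem:2.15}. The only cosmetic difference is your restriction to prime $p$, which is unnecessary since the argument works for every coprime $p$ beyond the stated bound.
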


\begin{proof}

Let $p$ be a positive integer, coprime to $u$, such that $p/u >
(\Lambda + \rho|\alpha)$ for all $\alpha \in \Delta^\vee$.  Then,
clearly, $\hat{\Lambda}$ is an admissible weight.  Furthermore,
$(\hat{\Lambda}+\hat{\rho} | j\ell K+\alpha) = j\ell p/u +
(\Lambda +\rho |\alpha)$ and $(\Lambda + \rho |\alpha) \in
\frac1u \ZZ$ for each $\alpha \in \Delta^\vee$.  Since
$(u,\ell p)=1$, for each $\alpha \in \Delta^\vee$ there  exist
$m_\alpha$, such that $m_\alpha \ell p/u  + (\Lambda +\rho
|\alpha) \in \ZZ$, hence $(\hat{\Lambda}+\hat{\rho} | m_\alpha \ell K
+\alpha) \in \ZZ$.  Therefore, by Lemma~\ref{lem:2.15},
$\hat{\Lambda}$ is a principal admissible weight.

\end{proof}

\begin{lemma}
  \label{lem:2.17}

Let $\Lambda \in \fh$ be such that $e^{2\pi i\Lambda}$ is an
element of order $u$ in the adjoint group~$G$ of the Lie algebra
$\fg$, where $u$ is a positive integer, coprime to $\ell$.  Then

\begin{list}{}{}
\item (a)~~$\Delta^\vee_\Lambda =\{ \alpha^\vee := 2\alpha
  /(\alpha |\alpha) |\,\, \alpha \in \Delta_\Lambda \}$, where
  $\Delta_\Lambda =\{ \alpha \in \Delta | (\Lambda | \alpha) \in
  \ZZ \}$.

\item (b)~~For a sufficiently large integer $p$, coprime to $u$,
  there exists a principle admissible weight $\hat{\Lambda}$ of
  level $k=-h^\vee + p/u$, such that $\hat{\Lambda} |_{\fh}$ is
  conjugate to $\Lambda$ by the Weyl group $W_\Lambda$ of $Z_\fg
  (e^{2\pi i\Lambda})$ (the centralizer of $e^{2\pi i \Lambda}$
  in $\fg$)  and $Z_\fg (e^{2\pi i \hat{\Lambda}|_{\fh}}) =
  Z_{\fg} (e^{2\pi i \Lambda})$.
\end{list}

\end{lemma}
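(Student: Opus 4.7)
For part (a) I begin from the observation that $e^{2\pi i \Lambda}$ having order $u$ in the adjoint group $G$ forces $u\Lambda \in Q^*$, so that $(\Lambda|\alpha) \in \frac{1}{u}\ZZ$ for every $\alpha \in \Delta$. The inclusion $\{\alpha^\vee : \alpha \in \Delta_\Lambda\} \subseteq \Delta^\vee_\Lambda$ is trivial. For the reverse inclusion I split by root length: when $\alpha$ is long, $\alpha^\vee = \alpha$, so the equivalence $(\Lambda|\alpha^\vee) \in \ZZ \Leftrightarrow (\Lambda|\alpha) \in \ZZ$ is tautological; when $\alpha$ is short, $\alpha^\vee = \ell\alpha$, and writing $(\Lambda|\alpha)=m/u$, the requirement $\ell m/u \in \ZZ$ together with $(u,\ell)=1$ forces $u \mid m$, i.e.\ $(\Lambda|\alpha) \in \ZZ$.

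For part (b), the plan is to produce a Weyl translate of $\Lambda$ that meets the hypotheses of Lemma~\ref{lem:2.16} and then invoke that lemma. I would choose $w \in W_\Lambda$ so that $\Lambda' := w\Lambda$ is dominant for the root subsystem $\Delta_\Lambda$ with positive part $\Delta_{\Lambda,+} := \Delta_\Lambda \cap \Delta_+$, i.e.\ $(\Lambda'|\beta) \geq 0$ for all $\beta \in \Delta^\vee_{\Lambda,+}$. Hypothesis (iii) of Lemma~\ref{lem:2.16} is given, and (ii) is immediate from $u\Lambda \in Q^*$ since each coroot is either $\alpha$ or $\ell\alpha$ and $W_\Lambda$ preserves the property of lying in $\frac{1}{u}\ZZ$ under $(\,\cdot\,|\Lambda)$. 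The main work lies in checking (i), handled case-by-case on $\beta \in \Delta^\vee_+$. For $\beta \in \Delta^\vee_{\Lambda,+}$, dominance gives $(\Lambda'|\beta) \geq 0$ and the general bound $(\rho|\beta) \geq 1$ then forces $(\Lambda' + \rho|\beta) \geq 1$. For $\beta \in \Delta^\vee_+ \setminus \Delta^\vee_{\Lambda,+}$, the $W_\Lambda$-invariance of $\Delta^\vee_\Lambda$ yields $w^{-1}\beta \notin \Delta^\vee_\Lambda$, whence by part~(a) we have $(\Lambda|w^{-1}\beta) \notin \ZZ$; adding $(\rho|\beta) \in \ZZ$ preserves non-integrality, so $(\Lambda' + \rho|\beta) \notin \ZZ$ and a fortiori lies outside $-\ZZ_+$.

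Lemma~\ref{lem:2.16} then delivers, for any sufficiently large $p$ coprime to $u$, a principal admissible weight $\hat{\Lambda} = (p/u - h^\vee)D + \Lambda'$ of level $k = -h^\vee + p/u$, and by construction $\hat{\Lambda}|_\fh = w\Lambda$ is $W_\Lambda$-conjugate to $\Lambda$. For the centralizer identity I lift $w \in W_\Lambda$ to an element $\tilde{w} \in N_{Z_G(e^{2\pi i \Lambda})}(T)$, which exists since $W_\Lambda$ is by definition the Weyl group of $Z_\fg(e^{2\pi i \Lambda})$. Because $\tilde{w}$ commutes with $e^{2\pi i \Lambda}$ while its adjoint action on the torus realizes $w$ on $\fh$, one obtains $e^{2\pi i \Lambda'} = \tilde{w}\, e^{2\pi i \Lambda}\tilde{w}^{-1} = e^{2\pi i \Lambda}$, so the two centralizers coincide. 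The main obstacle is the verification of~(i) on the complement $\Delta^\vee_+ \setminus \Delta^\vee_{\Lambda,+}$, which relies decisively on combining part~(a) with the $W_\Lambda$-stability of $\Delta^\vee_\Lambda$ to forbid accidental integrality on coroots outside the subsystem.
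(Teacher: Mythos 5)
Your proposal is correct and follows essentially the same route as the paper: part (a) via the long/short root dichotomy and $(u,\ell)=1$, and part (b) by choosing $w\in W_\Lambda$ making $\Lambda'=w\Lambda$ dominant on $\Delta_{\Lambda}\cap\Delta_+$ and then invoking Lemma~\ref{lem:2.16}. The only (harmless) variation is in the final centralizer identity, where the paper simply notes $\Delta_{\Lambda'}=w(\Delta_\Lambda)=\Delta_\Lambda$ and compares the Lie-algebra centralizers directly, whereas you lift $w$ to the group to get $e^{2\pi i\Lambda'}=e^{2\pi i\Lambda}$; both are valid.
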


\begin{proof}

Since $e^{2\pi i \Lambda}$ has order $u$, we have:  $\alpha
\in \Delta_\Lambda$ implies $(\Lambda |\alpha) \in \ZZ$; $\alpha
\in \Delta \backslash \Delta_\Lambda$ implies $(\Lambda |\alpha)
\in \frac1u \ZZ \backslash \ZZ$.  Note also that $\alpha^\vee = \alpha$
 (resp. $\ell \alpha$) if $\alpha$ is a long (resp. short) root
 and $(\Lambda |\alpha) \in \frac1u \ZZ \backslash \ZZ$ iff $\ell
 (\Lambda |\alpha) \in \frac1u \ZZ \backslash \ZZ$, since
 $(u,\ell)=1$.  These remarks prove (a).

In order to prove (b), note that $Z_{\fg} (e^{2\pi i \Lambda}) =\fh
+ \sum_{\alpha \in \Delta_\Lambda} \fg_\alpha$ , and $\Lambda$ is
an integral weight of the semisimple part of $Z_{\fg} (e^{2\pi i
  \Lambda})$ (due to (a)).  Hence there exists $w \in W_\Lambda$,
such that $\Lambda' = w (\Lambda)$ has the property that
$(\Lambda' |\alpha) \in \ZZ_+$ for all $\alpha \in
\Delta_{\Lambda} \cap \Delta_+$.  Hence we have: 
  $(\Lambda' + \rho |\alpha) \in \NN$ for all $\alpha 
\in \Delta^\vee_\Lambda \cap \Delta^\vee_+$, and
$(\Lambda' |\alpha )\in \frac1u \ZZ \backslash \ZZ$
for all $\alpha \in \Delta^\vee \backslash
              \Delta^\vee_\Lambda$.
Then, by Lemma~\ref{lem:2.16}, there exists a positive integer
$p$, coprime to $u$, such that $\hat{\Lambda}' := (-h^\vee + p/u)
D +\Lambda'$ is a principal admissible weight of level $k=-h^\vee
+ p/u$.  Since $\Delta_{\Lambda'} =w(\Delta_\Lambda) =
\Delta_\Lambda$, because $w \in W_\Lambda$, we conclude that
$Z_{\fg} (e^{2\pi i \Lambda'})=\fh + \sum_{\alpha \in
  \Delta_{\Lambda'}} \fg_\alpha = Z_{\fg} (e^{2\pi i \Lambda})$, proving~(b).

\end{proof}

Given a nilpotent element $f$ of the Lie algebra $\fg$ and a
positive integer $u$, coprime to~$\ell$, let
\begin{displaymath}
  S_{u,f} =\{ s \in G | s^u=1 \hbox{  and  } Z_{\fg} (s) \cap
     Z_{\fg} (\fh^f)=\fh \}\, .
\end{displaymath}

\begin{theorem}
  \label{th:2.6}

Assume that all the coefficients of $ch_{H^{R} (L (\Lambda))}$
are non-negative for all
principal admissible weights for $\hat{\fg}^R$ of level
$k=-h^\vee + p/u$, where $u \geq 1$, $p \geq h^\vee$, $(u,p\ell)
=1$.  Let $f$ be a nilpotent element of $\fg$ of principal type.
Then the pair $(k,f)$ is exceptional iff:

\begin{list}{}{}
\item (i)~~$\dim Z_{\fg} (s) \geq \dim \fg^f$ for all $s \in
  S_{u,f}$;

\item (ii)~~$\dim Z_{\fg} (s) = \dim \fg^f$ for some $s\in S_{u,f}$.
\end{list}

\end{theorem}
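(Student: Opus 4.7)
The plan is to build a dictionary between three layers: (analytic) almost convergence of the Euler--Poincar\'e character $\ch_{H^R(L(\Lambda))}$, (combinatorial) conditions on the set $\Delta^R_\Lambda$, and (geometric) properties of a torsion element $s_\Lambda \in G$ naturally attached to a principal admissible weight $\Lambda$. Under the positivity hypothesis, Theorem~\ref{th:2.5} together with Theorem~\ref{th:2.3}(b,c) and Corollary~\ref{cor:2.2} reduces the exceptionality of $(k,f)$ to: for every $\Lambda \in Pr^{k,R}$ of level $k$ one has the dichotomy $\Delta^R_\Lambda \cap \Delta^{R,f} = \emptyset \Rightarrow |\Delta^R_\Lambda| = |\Delta^0 \cup \Delta^{1/2}|$, and at least one $\Lambda$ realizes the hypothesis of this implication.

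For the geometric translation I would write $\Lambda = (t_\beta \bar{y}).(\Lambda^0 + (k+h^\vee-p)D^R)$ and set $s_\Lambda = e^{2\pi i \beta/u} \in G$. Since $\beta \in Q^{*,R}$, we have $s_\Lambda^u = 1$, and Lemma~\ref{lem:2.12} combined with the identification $\alpha \leftrightarrow \alpha - \alpha(x)K$ yields $\alpha(s_\Lambda) = e^{2\pi i (\alpha|\beta)/u}$. Consequently
\begin{displaymath}
\dim Z_\fg(s_\Lambda) = \dim\fh + |\Delta^R_\Lambda|, \qquad Z_\fg(s_\Lambda) \cap Z_\fg(\fh^f) = \fh + \sum_{\alpha \in \Delta^R_\Lambda \cap \Delta^{R,f}} \fg_\alpha.
\end{displaymath}
Hence $\Delta^R_\Lambda \cap \Delta^{R,f} = \emptyset \iff s_\Lambda \in S_{u,f}$, and, using the principal--type identity $\dim\fg^f = \dim\fg_0 + \dim\fg_{1/2}$, the cardinality condition $|\Delta^R_\Lambda| = |\Delta^0 \cup \Delta^{1/2}|$ becomes $\dim Z_\fg(s_\Lambda) = \dim\fg^f$. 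Conversely, Lemma~\ref{lem:2.17} realizes every $s \in G$ of order $u$ (with $(u,\ell)=1$) as $s_{\hat{\Lambda}}$ for some principal admissible $\hat{\Lambda}$ of level $k = -h^\vee + p/u$, provided $p$ is chosen compatibly with the hypothesis on the denominator; this surjectivity is what makes the correspondence bidirectional.

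For the $(\Rightarrow)$ direction, the almost convergent $\Lambda$ supplied by exceptionality gives $s_\Lambda \in S_{u,f}$ with $\dim Z_\fg(s_\Lambda) = \dim\fg^f$, proving (ii); for any $s \in S_{u,f}$, realize it via Lemma~\ref{lem:2.17} as $s_{\hat{\Lambda}}$, whereupon $\ch_{H^R(L(\hat{\Lambda}))} \neq 0$ by Theorem~\ref{th:2.3}(c), so exceptionality forces almost convergence and hence $\dim Z_\fg(s) = \dim\fg^f \geq \dim\fg^f$, giving (i). For the $(\Leftarrow)$ direction, (ii) yields one almost convergent $\Lambda$. For any other principal admissible $\Lambda$: if $s_\Lambda \notin S_{u,f}$ the character vanishes by Theorem~\ref{th:2.3}(b); if $s_\Lambda \in S_{u,f}$, (i) gives $\dim Z_\fg(s_\Lambda) \geq \dim\fg^f$, while a strict inequality would make $|\Delta^R_\Lambda|$ exceed the bound in Corollary~\ref{cor:2.2} and produce a nonzero but non-almost-convergent character, so equality must hold and Theorem~\ref{th:2.5} delivers almost convergence.

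The main obstacle is the converse direction in the case $s_\Lambda \in S_{u,f}$, where (i) supplies only the weak inequality while Theorem~\ref{th:2.5} needs equality: one must argue that under the positivity hypothesis and the surjectivity of Lemma~\ref{lem:2.17}, condition (i) is forced to saturate on $S_{u,f}$, ruling out the pathological case of a nonzero character violating Corollary~\ref{cor:2.2}. A secondary technical point is matching Lemma~\ref{lem:2.17}'s ``sufficiently large $p$'' clause to the specific $p$ appearing in $k = -h^\vee + p/u$; the coprimality hypothesis $(u,p\ell) = 1$ and the lower bound $p \geq h^\vee$ will be used precisely to ensure this compatibility.
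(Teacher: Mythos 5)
Your argument is, in structure, the paper's own proof: the same three-layer dictionary (Theorem~\ref{th:2.3}(c) for non-vanishing, Theorem~\ref{th:2.5} for almost convergence under the positivity hypothesis, Lemma~\ref{lem:2.17} for surjectivity onto torsion elements), the only cosmetic difference being that the paper attaches $s=e^{2\pi i\Lambda|_{\fh}}$ to $\Lambda$ rather than $e^{2\pi i\beta/u}$ --- by Lemma~\ref{lem:2.12} and Lemma~\ref{lem:2.17}(a) these carry the same set $\Delta^R_\Lambda$ and hence the same centralizer data. Of the two loose ends you flag, the second (matching the ``sufficiently large $p$'' of Lemma~\ref{lem:2.17} to the given $p$) is exactly what the last sentence of the paper's proof supplies, and the mechanism is worth making explicit: exceptionality of $(-h^\vee+p/u,f)$ is read off from the collection of sets $\Delta^R_\Lambda$, $\Lambda\in Pr^{k,R}$, via Theorems~\ref{th:2.3} and \ref{th:2.4}, and by Theorem~\ref{th:1.1} that collection depends only on the denominator $u$ and not on $p$; hence exceptionality for one admissible $p\geq h^\vee$ is equivalent to exceptionality for all of them, and one may prove sufficiency for $p$ large and transfer it. Your first loose end --- that (i) yields only $\dim Z_{\fg}(s_\Lambda)\geq\dim\fg^f$ where Theorem~\ref{th:2.5} needs equality --- is treated no less tersely in the paper, which simply reads non-vanishing as ``$s\in S_{u,f}$ and (i)'' and almost convergence as ``$s\in S_{u,f}$ and (ii)''; in the application to $\fg=\sl_n$ the reverse inequality $\dim Z_{\fg}(s)\leq\dim\fg^f$ for $s\in S_{u,f}$ is automatic (Propositions~\ref{prop:3.2}(a) and \ref{prop:3.3}(a)), which is what forces (i) to saturate there. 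So you have not overlooked an argument present in the paper; you have reproduced its proof together with an honest record of where it is thin.
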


\begin{proof}

Note that condition~(c) of Theorem~\ref{th:2.3} of non-vanishing
of $\ch_{H^R (L(\Lambda))}$ is equivalent to the condition that
the element $s=e^{2\pi i \Lambda |_\fh}$ lies in $S_{u,f}$ and
satisfies (i).  Also, by Theorem~\ref{th:2.5}, $\ch_{H^R (L
  (\Lambda))}$ is almost convergent iff $s$ lies in $S_{u,f}$ and
satisfies (ii).  Hence conditions (i) and (ii) are necessary for
the pair $(k,f)$ to be exceptional.  Due to Lemma~\ref{lem:2.17},
these conditions are also sufficient if $p$ is large enough.  But it 
is clear from Theorems~\ref{th:2.3} and \ref{th:2.4} that the
pairs $(-h^\vee + p/u , f)$ are exceptional for all $p \geq
h^\vee$ iff one of them is exceptional for some $p \geq h^\vee$.

\end{proof}

\section{Exceptional pairs for $\fg = s\ell_n$.}
\label{sec:3}

\subsection{Sheets in $\fg = s \ell_n$.}
\label{sec:3.1}~~Recall that the adjoint nilpotent orbits of
$\fg =\sl_n$  are parameterized by partitions of $n$, and that
the closure of the nilpotent orbit, corresponding to the
partition $m_1 \geq m_2 \geq \cdots$, contains the nilpotent
orbit, corresponding to the partition $n_1 \geq n_2 \geq \cdots
$, iff $m_1 \geq n_1$, $m_1 +m_2 \geq n_1 +n_2, \ldots $
\cite{CM}.  Note also that all nilpotents of $\sl_n$ are of
principal type.

In order to classify exceptional pairs we use
the theory of sheets.  Recall that a sheet in a simple Lie
algebra $\fg$ is an irreducible component of the algebraic
variety in $\fg$, consisting of all adjoint orbits of fixed
dimension.  In the case of $\fg = s\ell_n$ the description of
sheets is especially simple.

\begin{proposition}\cite{Kr}
  \label{prop:3.1}
Let $f$ be a nilpotent element of $s\ell_n$, let  $m_1 \geq m_2 \geq \cdots \geq m_s >0$ be the
corresponding partition of $n$, let $m=m_1$ and let $m'_1 \geq m'_2 \geq \cdots
\geq m'_m >0$ be the dual partition.

\alphaparenlist
\begin{enumerate}
\item 
The element $f$ is contained in a unique sheet, which we
denote by $\Sh_f$.

\item 
All the semisimple elements in $\Sh_f$ are those diagonalizable
matrices in $\sl_n$, which have $m$ distinct eigenvalues of
multiplicities $m'_1, m'_2, \ldots ,m'_m$.  We denote the set
of all semisimple elements in $\Sh_f$ by $\Sh^0_f$.

\item 
The rank of the semisimple Lie algebra $[\fg^h,\fg^h]$, where $h
\in \Sh^0_f$ and $\fg^h$ is the centralizer of $h$, is equal to $n-m$.

\end{enumerate}

\end{proposition}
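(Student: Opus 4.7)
The plan is to deduce the proposition from Kraft's classification of sheets in type $A$. Two general facts from sheet theory in reductive Lie algebras are essential: first, every sheet in $\fg$ is parameterized, up to $G$-conjugacy, by a pair $(\fl, \mathcal{O}_\fl)$ consisting of a Levi subalgebra $\fl$ together with a rigid nilpotent orbit $\mathcal{O}_\fl \subset [\fl,\fl]$, and the sheet itself is the $G$-saturation of $\mathfrak{z}(\fl) + \mathcal{O}_\fl$ (intersected with its dimension stratum); second, in type $A$ the only rigid nilpotent orbit is $\{0\}$. Consequently, sheets in $\sl_n$ are in natural bijection with $G$-conjugacy classes of Levi subalgebras. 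Since every nilpotent orbit of $\sl_n$ is Richardson for some parabolic, the orbit $\mathcal{O}_f$ lies in one and only one sheet, which gives (a).

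For (b), the Levi subalgebras of $\sl_n$ are, up to conjugacy, the block-diagonal $(\mathfrak{gl}_{n_1} \oplus \cdots \oplus \mathfrak{gl}_{n_r}) \cap \sl_n$, and are thus classified by unordered partitions $(n_1,\ldots,n_r)$ of $n$. The key combinatorial input, which I would extract from \cite{Kr}, is that the Richardson orbit of the parabolic with such a Levi has Jordan type equal to the transpose partition $(n_1,\ldots,n_r)^T$. Matching this against the Jordan type $\lambda = (m_1,\ldots,m_s)$ of $f$ identifies the Levi underlying $\Sh_f$ as having block sizes $\lambda^T=(m'_1,\ldots,m'_m)$. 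The generic semisimple locus $\Sh_f^0$ is then the $G$-saturation of the regular semisimple elements of this Levi, that is, the set of diagonalizable matrices in $\sl_n$ with $m$ distinct eigenvalues of multiplicities $m'_1,\ldots,m'_m$.

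Part (c) is an immediate corollary: for $h \in \Sh_f^0$, the centralizer $\fg^h$ is $G$-conjugate to $(\mathfrak{gl}_{m'_1} \oplus \cdots \oplus \mathfrak{gl}_{m'_m})\cap \sl_n$, so $[\fg^h,\fg^h]\simeq \sl_{m'_1}\oplus \cdots \oplus \sl_{m'_m}$, whose rank equals $\sum_{i=1}^m(m'_i-1)= n-m$ since $\sum_i m'_i = n$.

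The main obstacle, were one to avoid citing Kraft and argue from scratch, would be the transpose-partition correspondence used in (b): one must compute the Jordan type of a generic element in the nilradical of the parabolic attached to $\fl$. This is a short but genuinely nontrivial linear-algebra calculation; the uniqueness of the sheet in (a) and the identification of $\Sh_f^0$ in (b) both follow formally from it combined with the parameterization of sheets and the fact that the only rigid nilpotent orbit in type $A$ is the zero orbit.
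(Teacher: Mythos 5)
The paper offers no proof of Proposition~\ref{prop:3.1}: it is quoted verbatim from Kraft's paper \cite{Kr}, so there is nothing internal to compare your argument against. On its own terms your proposal is correct and is the standard modern derivation: Borho's parameterization of sheets by $G$-classes of pairs (Levi subalgebra, rigid nilpotent orbit), the fact that the only rigid orbit in type $A$ is zero, and the transpose-partition formula for the Richardson orbit of a parabolic with Levi block sizes $(n_1,\ldots,n_r)$ together yield all three parts. The only place where you compress a step that deserves a sentence is the uniqueness in (a): the Richardson property gives that $\mathcal{O}_f$ is the dense nilpotent orbit of \emph{some} sheet, while uniqueness requires knowing that a sheet contains exactly one nilpotent orbit (its induced orbit) and that $\fl\mapsto\mathrm{Ind}_{\fl}^{\fg}(0)$ is injective on conjugacy classes of Levis --- both of which do follow from the transpose correspondence you invoke, as you indicate at the end. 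Your computation in (c) that $[\fg^h,\fg^h]\simeq \sl_{m'_1}\oplus\cdots\oplus\sl_{m'_m}$ has rank $\sum_i(m'_i-1)=n-m$ is exactly right and is the form in which the proposition is used later in the paper (Propositions~\ref{prop:3.2} and \ref{prop:3.3}).
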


\begin{proposition}
    \label{prop:3.2}

Let $\fh$ be the set of all diagonal matrices in $\fg= \sl_n$, let
$f$ be as in Proposition~\ref{prop:3.1} and assume that the
centralizer $\fh^f$ of $f$ in $\fh$ is the Cartan subalgebra of
the reductive part of $\fg^f$.  Let
\begin{displaymath}
  \Omega_f = \{ h \in \fh | \, \alpha (h) \neq 0 
  \hbox{   for any  root   } \alpha \in \fh^* \hbox{  of  } \sl_n\, , \hbox{  such that  } 
        \alpha |_{\fh^f} =0 \} \, .
\end{displaymath}

\alphaparenlist
\begin{enumerate}
\item 
If $h \in \Omega_f$, then rank $[\fg^h,\fg^h] \leq n-m$ and $\dim
\fg^h \leq \dim \fg^f$.

\item 
If $h \in \Omega_f$ and rank $[\fg^h,\fg^h] =n-m$, then $h \in
\Sh_f$.  Moreover, $\Sh^0_f = W  (\Omega_f \cap \Sh_f)$, where
$W$ is the Weyl group.

\end{enumerate}
\end{proposition}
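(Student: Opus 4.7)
The plan is to translate everything into the combinatorics of the partition $(m_1,\ldots,m_s)$ of $n$ attached to $f$. First, I will fix a basis putting $f$ in Jordan normal form; then $\fh^f$ is exactly the space of diagonal matrices that are constant on each Jordan block, and consequently the roots $\alpha = e_{(i,k)} - e_{(j,\ell)}$ of $\sl_n$ vanishing on $\fh^f$ are precisely the within-block roots (those with $i=j$). Hence $h \in \Omega_f$ iff inside each Jordan block of $f$ the diagonal entries of $h$ are pairwise distinct. If such an $h$ has distinct eigenvalues $\mu_1,\ldots,\mu_t$ with multiplicities $k_1 \geq \cdots \geq k_t$, then $\fg^h \cong S(\mathfrak{gl}_{k_1}\oplus\cdots\oplus\mathfrak{gl}_{k_t})$, so $\dim\fg^h = \sum_j k_j^2 - 1$ and $\mathrm{rank}\,[\fg^h,\fg^h] = n-t$.

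The key combinatorial device is the $t\times s$ incidence matrix $M$ with $M_{j,i}=1$ if $\mu_j$ occurs in the diagonal positions of Jordan block $i$ and $0$ otherwise. The distinctness condition in $\Omega_f$ says exactly that $M$ is a 0-1 matrix, and by construction its row sums are $(k_j)$ and its column sums are $(m_i)$. By the Gale-Ryser theorem, existence of such an $M$ is equivalent to the dominance inequality $(k_1,\ldots,k_t) \preceq (m'_1,\ldots,m'_m)$ between partitions of $n$. From this I will read off (a): since $(m'_j)$ has exactly $m$ parts, dominance forces $t \geq m$, whence $\mathrm{rank}\,[\fg^h,\fg^h] = n-t \leq n-m$; and by Schur-convexity of $x\mapsto \sum_i x_i^2$, the same dominance gives $\sum_j k_j^2 \leq \sum_j (m'_j)^2$, i.e.\ $\dim\fg^h \leq \dim\fg^f$.

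For (b), the rank equality forces $t = m$, so $(k_j)$ and $(m'_j)$ are partitions of the same length $m$ and sum $n$ with $(k_j) \preceq (m'_j)$. To conclude $(k_j) = (m'_j)$ — whence $\fg^h$ is the Levi of the sheet $\Sh_f$ by Proposition~\ref{prop:3.1}(c), so $h \in \Sh^0_f \subset \Sh_f$ — I will invoke strict Schur-convexity of $\sum_i x_i^2$ together with the dimension equality $\dim\fg^h = \dim\fg^f$. This is the step I expect to be the main obstacle: the rank equality alone does not pin down the multiplicity partition (for instance $(2,2) \prec (3,1)$ are distinct partitions of length $2$ with the same sum), so either (b) must be read as implicitly including the dimension equality from part (a), or a finer argument using the specific combinatorics of $\Omega_f$ is required. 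For the final assertion $\Sh^0_f = W(\Omega_f \cap \Sh_f)$, I will exhibit an explicit ``canonical filling'': place the eigenvalues $\mu_1, \mu_2, \ldots, \mu_{m_i}$ on the diagonal positions of the $i$-th Jordan block. This produces a diagonal $h \in \Omega_f \cap \Sh^0_f$ with any prescribed distinct values $\mu_1,\ldots,\mu_m$ of multiplicities $(m'_j)$; since $W \cong S_n$ acts transitively on diagonal matrices with a fixed multiplicity profile, the $W$-orbit of this $h$ fills out all of $\Sh^0_f$.
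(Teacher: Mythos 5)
Your reduction to fillings of the Young diagram of $(m_1,\ldots,m_s)$ --- rows indexed by Jordan blocks, with $h\in\Omega_f$ precisely when the entries in each row are distinct --- is exactly the device the paper uses, and your proof of (a) is a genuine sharpening of it. The paper disposes of (a) in one sentence (``making eigenvalues of $h$ in a column unequal \dots can only decrease the rank \dots and the dimension''), whereas you make this precise: the $0$--$1$ incidence matrix between eigenvalues and blocks has row sums $(k_j)$ and column sums $(m_i)$, so the (easy direction of the) Gale--Ryser theorem gives $(k_j)\preceq(m'_j)$, whence $t\ge m$, i.e.\ ${\rm rank}\,[\fg^h,\fg^h]=n-t\le n-m$ (alternatively, the longest row alone already carries $m$ distinct eigenvalues), and Schur-convexity of $\sum_j x_j^2$ gives $\dim\fg^h\le\dim\fg^f$. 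Your canonical filling ($\mu_1,\ldots,\mu_{m_i}$ in row $i$) is the paper's ``rows distinct, columns constant'' configuration and correctly yields the last assertion of (b), read as $\Sh^0_f\cap\fh=W(\Omega_f\cap\Sh_f)$ (the intersection with $\fh$ is tacit both in your write-up and in the paper).

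The obstacle you flag in the first assertion of (b) is real, and it is a gap in the statement itself rather than in your method. For $n=4$ and $f$ with partition $(2,1,1)$ (so $m=2$, $m'=(3,1)$), the element $h=\diag(1,-1,1,-1)$ lies in $\Omega_f$ (the only condition is $h_1\ne h_2$) and $[\fg^h,\fg^h]\cong \sl_2\oplus\sl_2$ has rank $2=n-m$; yet its eigenvalue multiplicities are $(2,2)\ne(3,1)$, so $\dim\fg^h=7<9=\dim\fg^f$ and $h\notin\Sh_f$. The paper's proof of (b) silently assumes that the rank equality forces the columns of the filling to be constant, which is precisely the dominance-versus-equality gap you identified. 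Your repair is the right one: under the additional hypothesis $\dim\fg^h=\dim\fg^f$ (equality in the second inequality of (a)), strict Schur-convexity upgrades $(k_j)\preceq(m'_j)$ to $(k_j)=(m'_j)$, hence $h\in\Sh^0_f$. It is worth recording that the unrepaired implication does hold for the nilpotents $f_m$ of Theorem~\ref{th:3.1}: for a rectangle-plus-remainder partition, every filling by exactly $m$ eigenvalues with distinct entries in each row automatically has multiplicity profile $(m'_j)$. That is the case relevant to Theorem~\ref{th:3.1}; elsewhere in Section~\ref{sec:3} only part (a), the converse direction of (b), and the identity $\Sh^0_f\cap\fh=W(\Omega_f\cap\Sh_f)$ are used, so the downstream results are unaffected.
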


\begin{proof}
We fill the boxes of the Young diagram of the partition $m_1 \geq
m_2 \geq \cdots \geq m_s$ by the eigenvalues of $h$ (in
$\CC^n$).  Then $h \in \Omega_f$ iff the eigenvalues in each row
are distinct.  Moreover, $h \in \Omega_f \cap \Sh_f$ iff, in
addition, all eigenvalues in each column are equal.  This
proves~(b), due to Proposition~\ref{prop:3.1}(c).  Now (a)
follows since making eigenvalues of $h$ in a column unequal and
keeping $h$ in $\Omega_f$ can only decrease the rank of
$[\fg^h,\fg^h]$ and the dimension of $\fg^h$.

\end{proof}

Let $\Delta \subset \fh^*$ be the set of roots of $\fg = \sl_n$,
and let $f \in \fg$ be a nilpotent element as in
Proposition~\ref{prop:3.1}.  Let $\Delta^f = \{ \alpha \in \Delta
|\,\,\,  \alpha |_{\fh^f} =0 \}$.  We call $\Phi \subset \Delta$ a root
subsystem if $\alpha \in \Phi$ implies $-\alpha \in \Phi$ and
$\alpha ,\beta \in \Phi$, $\alpha + \beta \in \Delta$ implies
$\alpha +\beta \in \Phi$.  The dimension of the $\CC$-span of
$\Phi$ in $\fh^*$ is called the rank of $\Phi$ .  Note that for
each $h \in \fh$ the set of roots of $\fg^h$ is a root subsystem,
and its rank equals rank $[\fg^h,\fg^h]$; moreover, all root
subsystems of $\fg = \sl_n$ are thus obtained.  
Hence the above propositions can
be translated in the language of root subsystems.  Given a positive
integer $m$, such that $m\leq n$, denote by $f_m$ the nilpotent
element, corresponding to the partition of $n$ of the form $m=m=
\cdots = m>s \geq 0$.

\begin{proposition}
    \label{prop:3.3}

\alphaparenlist
\begin{enumerate}{}

\item 
If $\Phi \subset \Delta \backslash \Delta^f$ is a root subsystem,
then
\begin{equation}
  \label{eq:3.1}
  {\rm rank} \, \Phi \leq n-m \hbox{  and  } |\Phi| \leq
  |\Delta^0 \cup \Delta^{1/2}|\, .
\end{equation}

\item 
There exists a root subsystem $\Phi \subset \Delta \backslash
\Delta^f$, such that in (\ref{eq:3.1}) one has equalities.  In
this case $\Phi |_{\fh^f} = (\Delta^0 \cup \Delta^{1/2}) |_{\fh^f}$.

\item 
There exists a root subsystem $\Phi \subset \Delta \backslash
\Delta^f$ such that rank $\Phi = n-m$, but $|\Phi |< |\Delta^0
\cup \Delta^{1/2}|$ iff  $f \neq f_m$.

\item 
If  $f=f_m$ and $\Phi \subset \Delta
\backslash \Delta^f$ is a root subsystem of rank~$n-m$, then
$|\Phi | = | \Delta^0 \cup \Delta^{1/2}|$.

\end{enumerate}
\end{proposition}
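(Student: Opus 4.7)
The strategy is to reduce everything to combinatorics of Young tableaux, building on two standard facts: (i) every root subsystem $\Phi \subset \Delta$ of $\fg = \sl_n$ corresponds to a set partition $\{B_1, \ldots, B_t\}$ of $\{1, \ldots, n\}$ via $\Phi = \{e_i - e_j : i \neq j \text{ and } i,j \text{ lie in the same block}\}$, with rank $\Phi = n - t$ and $|\Phi| = \sum_k |B_k|(|B_k|-1)$; (ii) any such $\Phi$ is realized as $\{\alpha \in \Delta : \alpha(h) = 0\}$ for some $h \in \fh$, obtained by assigning distinct values to distinct blocks. Label the standard basis of $\CC^n$ by the boxes $(i, k)$ of the Young diagram of $\lambda = (m_1, \ldots, m_s)$, so Jordan blocks of $f$ are rows. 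Then $\fh^f$ is identified with $\{(a_1, \ldots, a_s) : \sum_i m_i a_i = 0\}$ (one coordinate per row) and $\Delta^f$ consists of the intra-row roots $e_{(i,k)} - e_{(i,k')}$.

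Part (a) is then immediate: the $h$ produced by (ii) lies in $\Omega_f$, because $\Phi \cap \Delta^f = \emptyset$ means every $\alpha \in \Delta^f$ has $\alpha(h) \neq 0$. Proposition~\ref{prop:3.2}(a) yields rank $\Phi \leq n - m$ and $\dim \fg^h \leq \dim \fg^f$, the latter translating to $|\Phi| \leq |\Delta^0 \cup \Delta^{1/2}|$ via the identity $\dim \fg^f = r + |\Delta^0 \cup \Delta^{1/2}|$ recalled in Section~\ref{sec:2.2}. For part (b), pick $h \in \Omega_f \cap \Sh_f$, nonempty by Proposition~\ref{prop:3.2}(b); its root system $\Phi$ lies in $\Delta \setminus \Delta^f$, has rank $n - m$ by Proposition~\ref{prop:3.1}(c), and $|\Phi| = \dim \fg^f - r = |\Delta^0 \cup \Delta^{1/2}|$ because all orbits in a sheet share a common dimension. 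For $\Phi|_{\fh^f} = (\Delta^0 \cup \Delta^{1/2})|_{\fh^f}$, use the Young diagram picture from the proof of Proposition~\ref{prop:3.2}: such an $h$ has constant value on each column, so $\Phi = \{e_{(i,k)} - e_{(i',k)} : i \neq i',\ k \leq \min(m_i, m_{i'})\}$, whose restriction to $\fh^f$ contains $\min(m_i, m_{i'})$ copies of $a_i - a_{i'}$ for each pair of distinct indices $i, i'$. A direct weight computation, noting that $\ad x$ has weight $\frac{m_i - m_{i'}}{2} + (k' - k)$ on $e_{(i,k),(i',k')}$, then verifies that $(\Delta^0 \cup \Delta^{1/2})|_{\fh^f}$ yields the identical multiset.

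For parts (c) and (d), a rank-$(n-m)$ subsystem $\Phi \subset \Delta \setminus \Delta^f$ corresponds to a partition of $\{1, \ldots, n\}$ into exactly $m$ blocks with no two boxes of the same row in one block; equivalently, a choice of size-$m_i$ subset $T_i \subset \{1, \ldots, m\}$ for each row (with $T_1 = \{1, \ldots, m\}$ forced since $m_1 = m$), and $|B_j| = \#\{i : j \in T_i\}$. The maximum of $|\Phi| = \sum_j |B_j|(|B_j|-1)$ is attained at the canonical left-aligned choice $T_i = \{1, \ldots, m_i\}$, giving $|B_j| = m'_j$ and $|\Phi|_{\max} = \sum_j (m'_j)^2 - n = |\Delta^0 \cup \Delta^{1/2}|$. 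If $f = f_m$, i.e., $\lambda = (m, \ldots, m, s_0)$ with $0 \leq s_0 < m$, then $T_i = \{1, \ldots, m\}$ is forced for $i < s$, and the free choice of $T_s$ (any $s_0$-subset) always yields the block-size multiset $\{s, \ldots, s, s-1, \ldots, s-1\}$ and hence the same $|\Phi|$, proving (d). If $f \neq f_m$, let $i_0$ be the smallest index $\geq 2$ with $m_{i_0} < m$ (so $i_0 + 1 \leq s$), and modify the canonical filling by replacing element $1$ of $T_{i_0+1}$ with $m_{i_0} + 1$. A one-line calculation shows $|B_1|$ drops from $s$ to $s-1$ and $|B_{m_{i_0}+1}|$ rises from $i_0 - 1$ to $i_0$, with net change $2(i_0 - s) < 0$ in $|\Phi|$. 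This modified filling realizes a rank-$(n-m)$ subsystem with $|\Phi| < |\Delta^0 \cup \Delta^{1/2}|$, giving one direction of (c); the converse is the contrapositive of (d).

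The main technical hurdle is the multiset identity in part (b): showing that each pair of rows $i, i'$ contributes exactly $\min(m_i, m_{i'})$ copies of $a_i - a_{i'}$ to $(\Delta^0 \cup \Delta^{1/2})|_{\fh^f}$ requires a case split on the parity of $m_i - m_{i'}$ (the even case draws only from $\Delta^0$ and the odd case only from $\Delta^{1/2}$) and careful bookkeeping at both signs, since $\Delta^{1/2}$ is not closed under negation. The combinatorial arguments for (c) and (d), by contrast, become transparent once the canonical left-aligned filling is identified as a maximizer and the structural rigidity of the $f = f_m$ case is recognized.
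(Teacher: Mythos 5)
Your proof is correct, and for the substantive parts --- the restriction identity in (b) and parts (c), (d) --- it takes a genuinely different route from the paper's. Part (a) and the existence half of (b) coincide with the paper's argument: both reduce to Propositions~\ref{prop:3.1} and \ref{prop:3.2} via the observation that a root subsystem avoiding $\Delta^f$ is the root system of $\fg^h$ for some $h \in \Omega_f$. Beyond that the paper stays geometric: it gets the identity $\Phi|_{\fh^f}=(\Delta^0\cup\Delta^{1/2})|_{\fh^f}$ from the density of $\Sh^0_f$ in $\Sh_f$, the forward direction of (c) by passing to a nilpotent $f'\in N_m$ whose orbit closure contains the orbit of $f$ and taking $h\in \Sh^0_{f'}\cap\Omega_{f'}\subset\Omega_f$, and (d) from the density of the orbit of $f_m$ in $N_m$ together with (a). You instead parametrize all rank-$(n-m)$ subsystems $\Phi\subset\Delta\setminus\Delta^f$ by subsets $T_i\subset\{1,\dots,m\}$ with $|T_i|=m_i$ and maximize $|\Phi|=\sum_{i,i'}|T_i\cap T_{i'}|-n$ subject to $|T_i\cap T_{i'}|\le\min(m_i,m_{i'})$. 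This is more elementary and in fact buys more: equality forces the $T_i$ to be nested, which yields the multiset identity for \emph{every} maximizing $\Phi$ (not only the one coming from the sheet), and it makes the rigidity in (d) and the explicit sub-maximal filling in (c) transparent; your weight computation matching $\min(m_i,m_{i'})$ copies of $a_i-a_{i'}$ against $\Delta^0\cup\Delta^{1/2}$ checks out, including the parity split. The one point to make explicit is that the ``in this case'' clause of (b) concerns an arbitrary $\Phi$ achieving equality, so you should invoke the nestedness-from-equality observation there rather than computing only for the specific column-constant $h$; that observation is already implicit in your analysis of (c) and (d).
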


\begin{proof}


Recall that
\begin{displaymath}
  \dim \fg^f = \dim \fg_0 + \dim \fg_{1/2} \, , \, 
  \dim \fg_0 = |\Delta^0 |+n-1 \, , \, 
  \dim \fg_{1/2} = |\Delta^{1/2}| \, .
\end{displaymath}
Note that a root subsystem $\Phi \subset \Delta \backslash \Delta^f$ is, 
up to $W$-conjugation, the set of roots of
$\fg^h$ with $h \in \Omega_f$.  Hence, by
Proposition~\ref{prop:3.2}(a), rank~$\Phi \leq n-m$ and $|\Phi |
+n-1 \leq |\Delta^0 \cup \Delta^{1/2}| +n-1$, proving~(a).  By
Proposition~\ref{prop:3.2}(b), for $h \in \Sh^0_f$ one has
equalities in (\ref{eq:3.1}), proving the first part of ~(b).
It follows from Proposition~\ref{prop:3.2}(b) that $\Sh^0_f$ 
consists of semisimple elements $h$, for which $\fh^f$ is conjugate
to a Cartan subalgebra of $[\fg^h,\fg^h]$ and the set of roots with
respect to $\fh^{f*}$ in $[\fg^h,\fg^h]$ is the same. Since $\Sh^0_f$ is 
dense in $\Sh_f$, we obtain the second part of (b).

In order to prove (c), denote by $N_m$ the set of nilpotent
matrices~$X$, such that $X^m =0$, but $X^{m-1} \neq 0$.  It is
the same as to say that the first part of the partition,
corresponding to $X$, equals~$m$. Note that the adjoint orbit of
$f_m$ is dense in $N_m$.  If $f \neq f_m$, then there exists an
 element $f' \in N_m$, such that the closure of the adjoint
orbit of $f'$ contains $f$.  If we choose $f'$ in the canonical
Jordan form, then $\fh^f \supset \fh^{f'}$, hence $\Omega_f
\supset \Omega_{f'}$.  Taking $h \in \Sh^0_{f'} \subset
\Sh^0_f$, the root system $\Phi$ of $\fg^h$ will satisfy all
requirements of~(c).

If $f$ is of the form, described in~(c), then all orbits
from $N_m$ lie in the closure of the orbit of $f$ and all
semisimple $h$ such that rank~$[\fg^h ,\fg^h] =n-m$ lie in the
sheets of the nilpotents, contained in $N_m$.  Hence there is no
$h \in \fh$, such that rank~$[\fg^h ,\fg^h]=n-m$, but $\dim \fg^f
>\dim \fg^h$, proving~(c). (d) follows from (a) and (c).

\end{proof}

\subsection{The main theorems for $\sl_n$.}
\label{sec:3.2}

\begin{lemma}
  \label{lem:3.1}

Let $f$ be a nilpotent element of $\sl_n$, 
and let $m$ be the maximal part of the partition of~$n$,
corresponding to $f$.  Let $\Lambda \in Pr^{k,R}$ be such that
$\ch_{H^R (L(\Lambda))}$ does not vanish, where
$k=-n+\frac{p}{u}$, $p,u \in \NN$, $(p,u)=1$, $p \geq n$.  Then
$m\leq u$.

\end{lemma}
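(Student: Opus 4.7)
The plan is to apply Theorem~\ref{th:2.3}(b), which characterizes non-vanishing of $\ch_{H^R(L(\Lambda))}$ as the absence of any root $\alpha \in \Delta^R$ satisfying simultaneously $\alpha|_{\fh^f}=0$ and $(\alpha|\beta) \in u\ZZ$, where $\beta$ is the translation component in the decomposition $\Lambda = (t_\beta \bar{y}).(\Lambda^0 + (k+h^\vee-p)D^R)$ from Theorem~\ref{th:2.2}. I will apply this to the $\sl_m$-subsystem coming from the largest Jordan block of $f$ and derive a bound via the pigeonhole principle on residues modulo $u$.

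Put $f\in \sl_n$ in Jordan normal form with partition $m_1 = m \geq m_2 \geq \cdots \geq m_s$, and let $\fh$ be the diagonal Cartan. Then $\fh^f$ consists of those diagonal matrices which are constant on each Jordan block, so a root $\epsilon_i - \epsilon_j \in \Delta$ vanishes on $\fh^f$ precisely when $i,j$ lie in a common block. The corresponding element $\alpha = (\epsilon_i-\epsilon_j) - (\epsilon_i-\epsilon_j)(x)K \in \Delta^R$ has the same restriction to $\fh^f$ since $K|_{\fh^f}=0$. The eigenvalues of $x$ inside any one block of size $m_i$ are $\tfrac{m_i-1}{2},\tfrac{m_i-3}{2},\dots,-\tfrac{m_i-1}{2}$; in particular they differ pairwise by integers.

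Next I identify $\beta$. Since $Pr^{k,R}=t_x\bar{w}(Pr^k)$, if $\Lambda$ corresponds to $\Lambda'\in Pr^k$ with $\Lambda' = (t_{\beta'}\bar{y}').(\Lambda^0 + (k+h^\vee-p)D)$, then the identity $t_x\bar{w}\,t_{\beta'} = t_{x+\bar{w}(\beta')}\,\bar{w}$ in the extended affine Weyl group gives $\beta = x + \bar{w}(\beta')$. Now $\bar{w}(\beta') \in \bar{w}(Q^*) = Q^*$, and for $\sl_n$ the dual lattice $Q^*$ coincides with the weight lattice $P$; its elements, viewed as traceless diagonal matrices $(\lambda_1,\dots,\lambda_n)$, satisfy $\lambda_i-\lambda_j \in \ZZ$ for all $i,j$. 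Combined with the previous paragraph, this gives $\beta_i - \beta_j \in \ZZ$ whenever $i,j$ lie in the same block. Moreover, since $\beta$ has no $D$-component, $(\alpha|\beta) = (\epsilon_i-\epsilon_j|\beta) = \beta_i-\beta_j$ for such $\alpha$.

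Finally, the pigeonhole step. Let $B_1$ be the largest Jordan block, of size $|B_1|=m$. For every pair $i\neq j$ in $B_1$, the non-vanishing hypothesis forces $\beta_i-\beta_j \notin u\ZZ$, which, since $\beta_i-\beta_j$ is an integer, means $\beta_i \not\equiv \beta_j \pmod{u}$. Thus the $m$ residues $\{\beta_i \bmod u : i \in B_1\}$ are pairwise distinct elements of $\ZZ/u\ZZ$, forcing $m \leq u$. The main technical point is the identification $\beta = x + \bar{w}(\beta')$ and the resulting integrality of $\beta_i-\beta_j$ within a block; once these are in place, the pigeonhole concludes the proof.
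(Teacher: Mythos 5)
Your argument is correct, but it takes a different route from the paper's. The paper's proof is a rank count: writing $\hat{\prod}^R_\Lambda = \{u_iK+\gamma_i\}_{i=0}^{n-1}$ with $u_i\in\ZZ_+$, $\sum_i u_i = u$, at least $n-u$ of the marks $u_i$ vanish, so ${\rm rank}\,\Delta^R_\Lambda \geq n-u$; on the other hand non-vanishing forces $\Delta^R_\Lambda\subset\Delta^R\setminus\Delta^{R,f}$ by Theorem~\ref{th:2.3}(c), whence ${\rm rank}\,\Delta^R_\Lambda\leq n-m$ by the sheet-theoretic Proposition~\ref{prop:3.2}(a). You instead run a pigeonhole directly on the largest Jordan block via Theorem~\ref{th:2.3}(b): the $m$ values $(\epsilon_i|\beta)$, $i\in B_1$, must be pairwise incongruent mod $u$, so $m\leq u$. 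This is more self-contained — it bypasses both Proposition~\ref{prop:3.2} and the structure of $\hat{\prod}^R_\Lambda$ — at the cost of being special to the type $A$ root system, whereas the paper's rank argument is the one that generalizes (and is reused in Theorems~\ref{th:3.1} and \ref{th:3.2}). One technical remark: your identification $\beta = x+\bar{w}(\beta')$ is not quite right (depending on whether $\bar y$ is taken in $W$ or in $W^R$, the correct expression acquires an extra term $-\bar y(x)$, or the $x$ disappears altogether), and with the extra term the within-block integrality of $\beta_i-\beta_j$ would not follow from your reasoning alone. The clean way to get what you need is to observe that $\hat{\Delta}^R_\Lambda = t_\beta(\hat{\Delta}^R_{(u)})\subset\hat{\Delta}^R$ (formula (\ref{eq:2.11})) already forces $(\alpha|\beta)\in\ZZ$ for \emph{every} $\alpha\in\Delta^R$; with that substitution your pigeonhole step is airtight.
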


\begin{proof}
By definition, the root system $\hat{\Delta}^R_\Lambda$ is of
type $A^{(1)}_{n-1}$, hence
\begin{displaymath}
  \hat{\prod}^R_\Lambda = \{ u_0 K+\gamma_0 ,\ldots ,u_{n-1}
     K+\gamma_{n-1} \},
\end{displaymath}
where
\begin{equation}
  \label{eq:3.2}
  \sum^{n-1}_{i=0} u_i = u \, , \quad u_i \in \ZZ_+ \, .
\end{equation}
Note that rank $\Delta^R_{\Lambda} = | \, \{ i |\, \, 0 \leq i \leq n-1 \, ,
\, u_i =0 \}|$.  From (\ref{eq:3.2}) we obtain 
\begin{equation}
  \label{eq:3.3}
  {\rm rank  }\, \Delta^R_{\Lambda} \geq n-u \, .
\end{equation}
Since, by assumption, $\ch_{H^R (L(\Lambda))} \neq 0$, by
Theorem~\ref{th:2.3}(c), $\Delta^R_{\Lambda} \subset \Delta^R
\backslash \Delta^{R,f}$.  Hence, by
Proposition~\ref{prop:3.2}(a), rank $\Delta^R_{\Lambda} \leq n-m$.  The
lemma now follows from (\ref{eq:3.3}).
\end{proof}

\begin{theorem}
  \label{th:3.1}
  
  Given a positive integer $m \leq n$, denote, as above, by $f_m$ the
  nilpotent element of $\fg = \sl_n$, corresponding to the
  partition of the form $m=m=\cdots = m>s \geq 0$.   Let
\begin{displaymath}
  k=k_{p,m} = \frac{p}{m} -n \hbox{  where  } p \in \ZZ \, , \,
      p \geq n \hbox{  and  } (p,m)=1 \, .
\end{displaymath}
Then
\alphaparenlist
\begin{enumerate}{}{}

  \item 
    $(k,f_m)$ is an exceptional pair, that is the following two
    properties hold:   

\romanlistii
\begin{enumerate}
\item 
$\ch_{H^R (L (\Lambda))}$ either vanishes or
    is almost convergent for each $\Lambda \in Pr^{k,R}$;

\item 
there exists $\Lambda \in Pr^{k,R}$, such that $\ch_{H^R
  (L(\Lambda))}$ does not vanish.
\end{enumerate}

\item 
If $k \neq k_{p,m}$ and $m<n$, then $(k,f_m)$ is not an
exceptional pair.
  \end{enumerate}

\end{theorem}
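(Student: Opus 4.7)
\medskip
\noindent\textbf{Proof plan.} The strategy combines the sheet description of Propositions~\ref{prop:3.1}--\ref{prop:3.3} with the almost-convergence criterion (Theorem~\ref{th:2.4}) and the lifting Lemma~\ref{lem:2.17}; recall that the denominator of $k$ is $u$ when $k+n=p/u$ with $(p,u)=1$.

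For part (a)(i), let $\Lambda \in Pr^{k_{p,m},R}$ satisfy $\ch_{H^R(L(\Lambda))}\neq 0$, so that Theorem~\ref{th:2.3}(c) gives $\Delta^R_\Lambda \subset \Delta^R\setminus \Delta^{R,f_m}$. Since $u=m$, the rank count in the proof of Lemma~\ref{lem:3.1} yields $\mathrm{rank}\,\Delta^R_\Lambda \geq n-m$, while Proposition~\ref{prop:3.3}(a) forces $\mathrm{rank}\,\Delta^R_\Lambda \leq n-m$; hence equality holds. Proposition~\ref{prop:3.3}(d) then implies $|\Delta^R_\Lambda|=|\Delta^0\cup\Delta^{1/2}|$, and Proposition~\ref{prop:3.3}(b), applied to $\Delta^R_\Lambda$ (which achieves both equalities in (\ref{eq:3.1})), gives $\Delta^R_\Lambda|_{\fh^{f_m}} = (\Delta^0\cup\Delta^{1/2})|_{\fh^{f_m}}$ as multisets. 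This is exactly condition (\ref{eq:2.12}) with all $m_\alpha=1$, so Theorem~\ref{th:2.4}(a) delivers almost convergence.

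For part (a)(ii), set $\zeta=e^{2\pi i/m}$ and take $s\in G$ represented by the diagonal matrix placing $1,\zeta,\ldots,\zeta^{m-1}$ along each Jordan block of $f_m$ of size $m$ and the first $s$ powers along the block of size $s$. Then $s$ has order $m$, its restriction to each block $V_i$ of the Levi $Z_\fg(\fh^{f_m})$ has distinct eigenvalues (so $s\in S_{m,f_m}$), and $Z_\fg(s)$ is the Levi with block sizes given by the dual partition $((q+1)^s,q^{m-s})$, hence $\dim Z_\fg(s)=\dim\fg^{f_m}$. Lemma~\ref{lem:2.17}, transferred from $\hat\fg$ to $\hat\fg^R$ via the isomorphism $t_x$, produces, for some large $p$ coprime to $m$, a principal admissible $\Lambda\in Pr^{k_{p,m},R}$ with $Z_\fg(e^{2\pi i\Lambda|_\fh})=Z_\fg(s)$; then $\Delta^R_\Lambda \cap \Delta^{R,f_m}=\emptyset$, so Theorem~\ref{th:2.3}(c) gives $\ch_{H^R(L(\Lambda))}\neq 0$. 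As in the final step of the proof of Theorem~\ref{th:2.6}, one such $p$ suffices for all admissible $p\geq n$ coprime to $m$.

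For part (b) with $u<m$, Lemma~\ref{lem:3.1} gives $\ch_{H^R(L(\Lambda))}=0$ for every $\Lambda\in Pr^{k,R}$, violating condition (ii) of exceptionality. For $u>m$, let $\zeta$ be a primitive $u$-th root of $1$ and take $s\in G$ represented by the diagonal matrix with $i$-th entry $\zeta^{i-1}$ for $i=1,\ldots,n$; the restriction to each block of $Z_\fg(\fh^{f_m})$ consists of $m$ (resp.\ $s$) consecutive powers of $\zeta$, hence is regular in that block, so $s\in S_{u,f_m}$, while the global multiplicity vector of eigenvalues of $s$ is as balanced as possible among the $u$ available roots. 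By strict convexity of $x\mapsto x^2$ together with $u>m$, one obtains $\dim Z_\fg(s)<\dim\fg^{f_m}$. Lemma~\ref{lem:2.17} then yields $\Lambda\in Pr^{k,R}$ with $\Delta^R_\Lambda$ equal to the root system of $Z_\fg(s)$; thus $|\Delta^R_\Lambda|<|\Delta^0\cup\Delta^{1/2}|$ rules out almost convergence by Corollary~\ref{cor:2.2}, while $\Delta^R_\Lambda\cap\Delta^{R,f_m}=\emptyset$ gives $\ch_{H^R(L(\Lambda))}\neq 0$ by Theorem~\ref{th:2.3}(c), so $(k,f_m)$ is not exceptional.

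The main technical obstacle is the multiset restriction step in part (a)(i): Proposition~\ref{prop:3.3}(b) identifies $\Phi|_{\fh^{f_m}}$ with $(\Delta^0\cup\Delta^{1/2})|_{\fh^{f_m}}$ for a specific $\Phi$ arising from $\Sh^0_{f_m}$, and one must verify (via Proposition~\ref{prop:3.2}(b) and density of $\Sh^0_{f_m}$ in the sheet) that any rank-$(n-m)$ root subsystem of $\Delta\setminus\Delta^{f_m}$ gives the same multiset of restrictions. A secondary check is the precise bookkeeping when transferring Lemma~\ref{lem:2.17} from $\hat\fg$ to $\hat\fg^R$ and matching the translation part $\beta$ in Lemma~\ref{lem:2.12} to the prescribed centralizer $Z_\fg(s)$.
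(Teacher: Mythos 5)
Your proposal is correct, and for part (a)(i) and the $u<m$ half of part (b) it follows the paper's own argument exactly: the rank bound from Lemma~\ref{lem:3.1} together with Proposition~\ref{prop:3.2}(a) pins $\mathrm{rank}\,\Delta^R_\Lambda=n-m$, Proposition~\ref{prop:3.3}(b),(d) upgrades this to the multiset identity required by (\ref{eq:2.12}), and Lemma~\ref{lem:3.1} kills all characters when $u<m$. You also correctly isolate the real delicacy, namely that Proposition~\ref{prop:3.3}(b) must apply to \emph{every} rank-$(n-m)$ subsystem of $\Delta\setminus\Delta^{f}$, which is exactly what the density argument $\Sh^0_f\subset\Sh_f$ in the paper's proof of that proposition provides.

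Where you genuinely diverge is in the two existence statements. For (a)(ii) and for the $u>m$ half of (b), the paper constructs the required weights directly from their simple-root data: it extends a set of simple roots of the relevant root subsystem $\Phi$ to a basis $\{\gamma_i\}$ of $\Delta^R$ and prescribes labels $u_i\in\ZZ_+$ with $\sum a_iu_i=u$ (in (b) it perturbs an almost-convergent $\Lambda'$ of denominator $m$ by changing a single zero label $m'_{i_0}$ to $u-m$, which strictly shrinks $\Delta^R_{\Lambda'}$ while staying inside $\Delta^R\setminus\Delta^{R,f}$). You instead exhibit explicit torsion elements $s$ of the adjoint group (consecutive powers of roots of unity along Jordan blocks), verify $s\in S_{u,f_m}$ and compute $\dim Z_\fg(s)$ by the balanced-partition/Schur-convexity argument, and then lift via Lemma~\ref{lem:2.17}. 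Both routes are valid; yours buys a more conceptual picture consistent with Theorem~\ref{th:2.6}, while the paper's avoids two costs your route incurs: (1) Lemma~\ref{lem:2.17} only produces weights for \emph{sufficiently large} $p$ coprime to $u$, so in part (b) — not just in (a)(ii) where you say it — you must invoke the transfer principle from the end of the proof of Theorem~\ref{th:2.6} (the collection of possible $\Delta^R_\Lambda$ depends only on $u$, not on $p$) to cover all $p\geq n$; and (2) the bookkeeping of passing between $\hat\fg$ and $\hat\fg^R$ via $t_x\bar w$, which you flag but do not carry out. Neither is a gap, but both should be made explicit in a final write-up.
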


\begin{proof}

Let $\Lambda \in Pr^{k,R}$ be a principal admissible weight of
level $k=\frac{p}{m}-n$, such that $\ch_{H^R (L(\Lambda))}$ does
not vanish.  It follows from (\ref{eq:3.3}) with $u=m$, that
%
%
\begin{displaymath}
{\rm rank}\, \Delta^R_{\Lambda} \geq n-m \, .
\end{displaymath}
Since, by our assumption, $\ch_{H^R (L(\Lambda))} \neq 0$, 
we have: $\Delta^R_\Lambda \subset \Delta^R \backslash \Delta^{R,f}$, 
by Theorem~\ref{th:2.3}(c).  
Hence, by
Proposition~\ref{prop:3.2}(a), ${\rm rank }\, \Delta_{\Lambda}^R
\leq n-m$, and we conclude that rank $\Delta^R_\Lambda = n-m$.
But then by Proposition~\ref{prop:3.3}(b), we obtain that
$\Delta^R_\Lambda |_{\fh^f} = (\Delta^0 \cup
\Delta^{1/2})|_{\fh^f}$.  By Theorem~\ref{th:2.4}, we conclude
that $\ch_{H^R (L (\Lambda))}$ is almost convergent, proving~(i).

Due to Theorem~\ref{th:2.3}(c) and Propositions \ref{prop:3.2}(b)
and \ref{prop:3.3}(d), (ii) holds as well, proving~(a).

Claim (b) follows from the following two statements:

\romanparenlist
\begin{enumerate}
\item 
if $u >m$, then there exists $\Lambda \in Pr^{k,R}$, such that
$\Delta^R_\Lambda \subset \Delta^R \backslash \Delta^{R,f}$,
$|\Delta^R_\Lambda |<|\Delta^0 \cup \Delta^{1/2}|$ (hence, by
Corollary~\ref{cor:2.2}, $\ch_{H^R (L(\Lambda))}$ is not almost
convergent);

\item 
if $u<m$, then $\Delta^R_{\Lambda} \cap \Delta^{R,f} \neq
\emptyset$ for any $\Lambda \in Pr^{k,R}$ (hence by
Theorem~\ref{th:2.3}(c), $\ch_{H^R (L(\Lambda))} =0$). 

\end{enumerate}

In order to prove (i), let $k'=k_{p.m}$, so that $(k',f_m)$ is an
exceptional pair.  Then, by Theorem~\ref{th:3.1}, there exists
$\Lambda' \in Pr^{k',R}$ for which $\ch_{H^R (L (\Lambda'))}$
almost converges.  As before, we write $\Lambda'$ in the
following form:  $\Lambda' = (t_{\beta'} \bar{y}).  (\Lambda^0 -
(m-1)\frac{p}{m} D^R)$, where  $\bar{y} \in W^R$ is such that
$\bar{y} (\alpha^R_i) = \gamma_i$, $(\beta' |\gamma_i) =-m'_i$
for $i=1,\ldots ,n-1$ and $\Lambda^0 \in \hat{P}^{p-n}_+$, 
so that we have $\hat{\prod}_{\Lambda'}^R =\{ m'_i K
+ \gamma_i \}_{i=0,\ldots ,n-1}$.  Let $I_{\Lambda'} = \{ i | \, 0
\leq i \leq n-1, m'_i =0 \}$.  Then $\{ \gamma_i \}_{i \in
I_{\Lambda'}}$ is the set of simple roots of $\Delta^R_{\Lambda'}
\cap \Delta^R_+$.  Recall that the almost convergence  of $\ch_{H^R
  (L (\Lambda'))}$ implies:
\begin{displaymath}
  \Delta^R_{\Lambda'} \cap \Delta^{R,f} = \emptyset \, , \, 
|\Delta^R_{\Lambda'} | = |\Delta^0 \cup \Delta^{1/2} | \, .
\end{displaymath}

Now fix $i_0 \in I_{\Lambda'}$ and define $m_i$ for $0 \leq i
\leq n-1$ and $\beta \in \sum^{n-1}_{i=1} \RR\, \alpha^R_i$  by the
following relations:
\begin{eqnarray*}
  m_i = m'_i \hbox{  if  } i \neq i_0 \, , \, 
     m_{i_0} = u-m \, ; \\
     (\beta |\gamma_i) = -m_i \hbox{  for  } i=1,\ldots ,n-1\, .
\end{eqnarray*}
Then $\Lambda = (t_\beta \bar{y}). (\Lambda^0 -(u-1) \frac{p}{u}
D^R) \in Pr^{k,R}$ with $\hat{\prod}^R_\Lambda = \{ m_i K +
\gamma_i \}_{i=0,\ldots ,n-1}$.  Since $m_{i_0} >0$,
$\prod^R_\Lambda = \prod^R_{\Lambda'} \backslash \{
\gamma_{i_0}\}$, hence 
\begin{displaymath}
  \Delta^R_\Lambda \subset \Delta^R_{\Lambda'}\subset \Delta^R \backslash
     \Delta^{R,f} \hbox{  and  } |\Delta^R_\Lambda | <
       | \Delta^R_{\Lambda'} | = | \Delta^0 \cup \Delta^{1/2}|\, ,
\end{displaymath}
proving (i).

In order to prove (ii), let $\Lambda \in Pr^{k,R}$, where
$\hat{\prod}^R_\Lambda = \{ m_i K+\gamma_i \}_{i=0,\ldots
  ,n-1}$.  Then $u=\sum^{n-1}_{i=0} m_i <m$ (by our assumption),
hence ${\rm rank }\, \Delta^R_\Lambda = | \{ i |\,  m_i = 0 \}|
\geq n-m +1 >n-m$.  Therefore, by Proposition~\ref{prop:3.3}(a),
$\Delta^R_\Lambda \cap \Delta^{R,f} \neq \emptyset$, proving~(ii).

\end{proof}

\begin{theorem}
  \label{th:3.2}

Let $f$ be a nilpotent element of $\sl_n$, different from any of
the nilpotent elements $f_m$, $1 \leq m\leq n$.  Then $(k,f)$ is
not an exceptional pair for any~$k$.

\end{theorem}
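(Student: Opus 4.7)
The plan is to fix $k = -n + p/u$ with $(p,u)=1$ and $p \ge n$, and to show directly that $(k,f)$ violates the definition of exceptional pair. Let $m$ denote the largest part of the partition of $n$ associated to $f$; since by hypothesis $f \neq f_m$, Proposition~\ref{prop:3.3}(c) produces a root subsystem $\Phi \subset \Delta \setminus \Delta^f$ of rank $n-m$ with $|\Phi| < |\Delta^0 \cup \Delta^{1/2}|$, and this $\Phi$ drives both cases below. Note that $m \ge 2$, since $f \neq f_1 = 0$.

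If $u < m$, Lemma~\ref{lem:3.1} forces $\ch_{H^R(L(\Lambda))} = 0$ for every $\Lambda \in Pr^{k,R}$, so no $\Lambda$ has an almost convergent character and $(k,f)$ fails the first clause of Definition~\ref{def:2.3}.

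If $u \ge m$, I will produce $\Lambda \in Pr^{k,R}$ with $\ch_{H^R(L(\Lambda))}$ non-vanishing but not almost convergent. Transport $\Phi$ via $t_x \bar w$ to $\Phi^R \subset \Delta^R \setminus \Delta^{R,f}$, and decompose $\Phi^R \cong A_{k_1} \oplus \cdots \oplus A_{k_r}$ with $\sum k_j = n-m$. The element of $\Omega_f$ witnessing $\Phi$ in the proof of Proposition~\ref{prop:3.3}(c) has exactly $m$ distinct eigenvalues, with multiplicities $(k_1+1, \ldots, k_r+1, 1, \ldots, 1)$, hence $r \le m$; so one can lay the $r$ blocks of $\Phi^R$ along the affine cycle $\hat{\Delta}^R$ of type $A^{(1)}_{n-1}$ with $r$ gaps of positive sizes summing to $m$, and a suitable $\bar y \in W^R$ furnishes a cyclic labelling $\{\gamma_0, \ldots, \gamma_{n-1}\}$ of the affine simple roots in which $n-m$ specified $\gamma_i$'s form simple roots of $\Phi^R$. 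Choose $(m_0, \ldots, m_{n-1}) \in \ZZ_+^n$ vanishing on this set, strictly positive on the remaining $m$ positions, and with $\sum m_i = u$ (possible since $u \ge m$); then pick $\beta \in Q^{*,R}$ with $(\beta | \gamma_i) = -m_i$ and any $\Lambda^0 \in \hat{P}^{p-n,R}_+$, and set $\Lambda = (t_\beta \bar y).(\Lambda^0 + (k+h^\vee - p)D^R) \in Pr^{k,R}$, so that $\hat{\prod}^R_\Lambda = \{m_i K + \gamma_i\}$.

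By Lemma~\ref{lem:2.12}, $\Delta^R_\Lambda$ is the subsystem generated by $\{\gamma_i : m_i = 0\}$, which equals $\Phi^R$ by construction; Theorem~\ref{th:2.3}(c) then gives $\ch_{H^R(L(\Lambda))} \ne 0$ (since $\Phi^R \subset \Delta^R \setminus \Delta^{R,f}$), and Corollary~\ref{cor:2.2} gives non-almost-convergence (since $|\Phi^R| = |\Phi| < |\Delta^0 \cup \Delta^{1/2}|$), violating the second clause of Definition~\ref{def:2.3}. The main obstacle is the combinatorial placement step in the case $u \ge m$: one must verify that, given the component decomposition of $\Phi^R$ and its link to the eigenvalue-multiplicity data of an element of $\Omega_f$, the required cyclic pattern of $\gamma_i$'s together with prescribed integer weights $m_i$ summing to $u$ can indeed be realized inside the parametrization of $Pr^{k,R}$ via the extended affine Weyl group for $A^{(1)}_{n-1}$, which is the content (for $\hat{\fg}^R$) of Proposition~\ref{prop:1.3} and Theorem~\ref{th:1.1}.
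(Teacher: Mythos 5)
Your argument is correct and follows essentially the same route as the paper's proof: Lemma~\ref{lem:3.1} disposes of $u<m$, and for $u\ge m$ one realizes the subsystem $\Phi$ of Proposition~\ref{prop:3.3}(c) as $\Delta^R_\Lambda$ for a principal admissible $\Lambda$, then invokes Theorem~\ref{th:2.3}(c) for non-vanishing and Corollary~\ref{cor:2.2} to rule out almost convergence. The only difference is that the paper's realization step is simpler than your cyclic block-placement: it extends the simple roots of $\Phi$ to a set of simple roots $\{\gamma_1,\dots,\gamma_{n-1}\}$ of $\Delta^R$, assigns $u_i=0$ on $\prod_\Phi$ and $u_i=1$ on the remaining finite simple roots, and dumps the leftover weight $u_0=u-m+1>0$ on $\gamma_0=-\sum_{i}\gamma_i$.
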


\begin{proof}

Let $m$ be the largest part of the partition, corresponding to
$f$, and suppose that $(k,f)$ is an exceptional pair, where $k=-n
+ \frac{p}{u}$, $p,u \in \NN$, $(p,u) =1$, $p \geq n$.  Let
$\Lambda \in Pr^{k,R}$ be such that $\ch_{H^R (L(\Lambda))}$ does
not vanish.  Then, by Lemma~\ref{lem:3.1},
\begin{equation}
  \label{eq:3.4}
  m \leq u \, .
\end{equation}

By Proposition \ref{prop:3.3}(c), there exists a root subsystem
$\Phi \subset \Delta^R \backslash \Delta^{R,f}$, such that
\begin{displaymath}
  {\rm rank} \, \Phi  = n-m \, , \quad 
     |\Phi | < |\Delta^0  \cup \Delta^{1/2}| \, .
\end{displaymath}
Let $\prod_\Phi = \{ \beta_1 ,\ldots ,\beta_{n-m}\}$, be the set
of simple roots of $\Phi \cap \Delta^R_+$, and extend
$\prod_\Phi$ to a set of simple roots $\prod' = \{ \gamma_1
,\ldots ,\gamma_{n-1}\}$ of $\Delta^R$.  Let $\gamma_0 =
-\sum^{n-1}_{i=1} \gamma_i$, and define $u_i \in \ZZ_+$ as
follows (here we use (\ref{eq:3.4})): $u_0 = u-m+1$, $u_i =1$
if $\gamma_i \in  \prod^\prime \backslash \prod_\Phi$, $u_i =0$
if $\gamma_i \in \prod_\Phi$.
Let $\hat{\prod}_\Phi = \{ u_i K + \gamma_i|i=0,\ldots ,n-1 \}
\subset \hat{\Delta}^R_+$, and let $\Lambda$ be a principal
admissible weight of level~$k$, such that $\hat{\prod}^R_\Lambda
=\hat{\prod}_\Phi$.  Note that $\Delta^R_\Lambda =\Phi$.  Hence
$|\Delta^R_\Lambda | < |\Delta^0 \cup \Delta^{1/2}|$, and
therefore, by Corollary~\ref{cor:2.2}, $\ch_{H^R (L
  (\Lambda))}$ is not almost convergent.
\end{proof}

\subsection{Example of $\fg = \sl_3$, $f=$ \emph{minimal nilpotent.}}
\label{sec:3.3}~~
In this case $x=\frac12 \theta$, where $\theta = \alpha_1 +
\alpha_2$ is the highest root, $\fg_0 =\fh$, hence
$\Delta^0=\emptyset$, and $\Delta^{1/2} =\{\alpha_1,\alpha_2
\}$.  We choose $h_0 = \alpha_1 -\alpha_2$; then $\fh^f =\CC
h_0$, $\Delta^{1/2}_+ = \{ \alpha_1 \}$, $\Delta^{1/2}_- =
\{\alpha_2 \}$, $\Delta^{\rm new}_+ = \{ \alpha_1 ,-\alpha_2
,-\alpha_1-\alpha_2 \} = \bar{w} (\Delta_+)$,  where $\bar{w}
=r_2r_1$.  Therefore,
\begin{align*}
  \Delta^R_+ &= t_x (\Delta^{\rm new}_+) = \{ -\frac12 K+\alpha_1
   \, , \, \frac12 K-\alpha_2 \, , \, K-\alpha_1-\alpha_2 \} \, ,\\
   {\prod}^R &= t_x \bar{w} (\{ \alpha_1 ,\alpha_2 \}) =\{
     \alpha^R_1 :=K-\alpha_1-\alpha_2 \, , \, \alpha^R_2:=
       -\frac12 K + \alpha_1 \} \,, \\
   \hat{\prod}^R &= t_x \bar{w} (\{ \alpha_0,\alpha_1,\alpha_2\})
     = \{ \alpha^R_0 :=\frac12 K +\alpha_2 \, , \, \alpha^R_1
        \, , \, \alpha^R_2 \} \, .
\end{align*}
We also have:
\begin{displaymath}
  \fh^f = \CC (\alpha^R_1+2\alpha^R_2)\, , \, \Delta^{R,f}_+
    = \{ \alpha^R_1 \} \, , \, \Delta^{R,0}_+ =\emptyset \, , \, 
      \Delta^{R,1/2}_+ = \{ \alpha^R_2 \} \, .
\end{displaymath}

Recall that the corresponding to $f$ exceptional denominator is
$u=2$.  Then (cf.~Section~\ref{sec:1.2}):
\begin{displaymath}
  \hat{S}^R_{(2)} = \{ 2K -\alpha^R_1-\alpha^R_2
    \, , \, \alpha^R_1 \, , \, \alpha^R_2 \} \, , 
\end{displaymath}
and the corresponding set of roots is
\begin{displaymath}
  \hat{\Delta}^R_{(2)}= \{ 2nK + \alpha | \alpha \in \Delta^R \,  ,\, 
    n \in \ZZ  \} \cup \{ 2nK | n \in \ZZ\backslash \{ 0 \} \}\, .
\end{displaymath}
Consider all possible subsets $t_\beta \bar{y} (\hat{S}_{(2)})$
where $\bar{y} \in W^R = t_x Wt^{-1}_x$, $\beta \in \ZZ
\Lambda^R_1+\ZZ \Lambda^R_2$, $\Lambda^R_i \in \fh^R =t_x (\fh)
$, $(\Lambda^R_i | \alpha^R_j) = \delta_{ij}$, satisfying the
conditions
\begin{displaymath}
  t_\beta \bar{y} (\hat{S}^R_{(2)}) \subset \hat{\Delta}^R_+
  \hbox{  and   } t_\beta \bar{y} (\hat{\Delta}^R_{(u)})\cap
  (\Delta^{R,f}=\{ \alpha^R_1 \})=\emptyset \, .
\end{displaymath}
It is easy to see that there are two possibilities for such
subsets:
\begin{displaymath}
  \hat{\prod}' = t_{-\Lambda^R_1} (\hat{S}_{(2)}^R) =\{
    K-\alpha^R_1 -\alpha^R_2 \, , \, K+\alpha^R_1 \, , \, \alpha^R_2 \}
    \hbox{  and  } \hat{\prod}''=r_{\alpha^R_1}
       (\hat{\prod}')\, .
\end{displaymath}
Since $r_{\alpha^R_1} |_{\fh^f}=1$, it follows from 
formula~(\ref{eq:2.3}) that $\chi_{H^R (L(\Lambda))} = \chi_{H^R
  (L (\Lambda''))}$ if $\Lambda'$ and $\Lambda''$ are admissible
weights, satisfying $\Lambda'' +\hat{\rho}^R =r_{\alpha^R_1}
(\Lambda' + \hat{\rho}^R)$.  Thus, it suffices to consider only
the principal admissible weights $\Lambda$, such that
$\hat{\prod}_\Lambda^R = \hat{\prod}' =t_{-\Lambda^R_1}
(\hat{S}_{(2)}^R)$.

Since $u=2$, $p$ should be an odd integer $\geq h^\vee =3$, and
$k=-3+p/2$.  All principal admissible weights $\Lambda$ of
level~$k$ with $\hat{\prod}_\Lambda^R = \hat{\prod}'$ are
\begin{displaymath}
  \Lambda = t_{-\Lambda^R_1}\, . \quad (\Lambda^0 -\frac{p}{2} D^R)
    = \Lambda^0-\frac{p}{2} (\Lambda^R_1+D^R) \mod \CC K \, , \, 
    \hbox{  where  } \Lambda^0 \in \hat{P}^{p-3,R}_+ \, .
\end{displaymath}

Since $\fh^f = \CC\Lambda^R_2$, we can write an arbitrary element
of $\fh^f$ as $z\Lambda^R_2$, $z \in \CC$.
Since $\dim \fg^f =4$ and $h^\vee=3$, we obtain from
(\ref{eq:2.5}):
\begin{displaymath}
  \psi (\tau ,z\Lambda^R_2 , t) = e^{6\pi i t} \eta (\tau)    
\theta (\tau ,z)\, .
\end{displaymath}
%
%
Note that for $\Lambda \in M_p := \{ \Lambda^0-\frac{p}{2} (\Lambda^R_1 +D^R) |
\Lambda^0 \in \hat{P}^{p-3,R}_+ \}$ we have:
\begin{displaymath}
  \hat{\Delta}^R_\Lambda = \{ 2nK \pm \alpha^R_2 \, , \, 
  (2n-1)K \pm \alpha^R_1 \, , \, (2n-1)K \pm 
  (\alpha^R_1 + \alpha^R_2) | n \in \ZZ \} \cup \{
  2nK |n \in \ZZ \backslash 0 \}
\end{displaymath}
(it is a root system with basis $\hat{\prod}^R_\Lambda$), hence
\begin{displaymath}
  \hat{\Delta}^R_{\Lambda ,+}|_{\fh^f} =\{ (n-1)K+
     \alpha^R_2 |_{\fh^f} \, , \, nK-\alpha^R_2|_{\fh^f}|\,\,
         n \in \NN\} \cup \{ nK |n \in \NN \} \, .
\end{displaymath}
%
%
Since for principal admissible weights in $M_p$ we have
$\bar{y}=1$, $\beta =-\Lambda^R_1$, a straightforward calculation
gives that in (\ref{eq:2.6}) we have
\begin{displaymath}
  C (\tau ,z\Lambda^R_2 ,t) /\psi (\tau ,z\Lambda^R_2,t)
  =e^{-3\pi i t} \, .
\end{displaymath}
Hence (\ref{eq:2.6}) for $\Lambda \in M_p$ becomes:
\begin{equation}
  \label{eq:3.5}
  \chi_{H^R (L (\Lambda))} (\tau ,z\Lambda^R_2 ,t) =e^{-3\pi it}
    \chi_{L^R (\Lambda^0)} (2\tau ,z\Lambda^R_2 -\tau
    \Lambda^R_1 , \, \frac12 ( t+ \frac{\tau -z}{3})) \, .
\end{equation}

Since $\dim \fg=8$, $\dim \fg^f =4$ and $\ch_{H^R (L (\Lambda))}$
for $\Lambda \in Pr^{k,R}$ does not vanish only when $\hat{\prod}_{\Lambda}^R$ 
is $\hat{\prod}'$ or $\hat{\prod}''$, and the latter two give equal
contributions. Thus, Theorem~\ref{th:2.1} gives the following
transformation formula for  $\Lambda \in M_p$:
\begin{displaymath}
  \chi_{H^R (L (\Lambda))} \left(-\frac{1}{\tau} , \frac{z\Lambda^R_2}{\tau}\, ,\, 
     t-\frac{Q_p(z)}{2\tau} \right)= -2\sum_{\Lambda' \in M_p}
     a (\Lambda ,\Lambda') \chi_{H^R (L (\Lambda'))} 
     (\tau ,z \Lambda^R_2,t )\, , 
\end{displaymath}
where $Q_p (z) = \frac{2p-6}{3p-18}z^2$.

Finally, we compute asymptotics, using Theorem~\ref{th:2.2}.  It
is easy to see that $A_\beta (z\Lambda^R_2)=2$ for $\beta =
- \Lambda^R_1$.  Hence we have for $\Lambda \in M_p$, as $\tau
\downarrow 0$:
\begin{displaymath}
  \ch_{H^R (L (\Lambda))} (\tau ,-\tau z \Lambda^R_2,0) =
  a (\Lambda^0) e^{\frac{\pi i}{12 \tau} 4
    \left(1-\frac3p\right)}\, .
\end{displaymath}

\begin{remark}
  \label{rem:3.1}
For $\fg=s\ell_n$ and its exceptional pair $(k=-n + \frac{p}{u}\,
, \, f=f_u)$, where $u \leq n$, the ``extra factor'' in the
character formula (\ref{eq:2.6}) is independent of $z$, and is
given by the following formula:
\begin{displaymath}
  \frac{C (\tau ,z,t)}{\psi (\tau ,z,t)} = \pm a (t) q^b
    \left( \prod^\infty_{j=1}
       \frac{1-q^{uj}}{1-q^j}\right)^{s'-1} M (q) \, , 
\end{displaymath}
where $\pm = (-1)^{j_{\Lambda}}$, $a(t) = e^{2\pi i n (u^{-1}-1)t}$,
$b=(s-1)(u-s-1)(su - s^2 +u)/24u$, $s$ is the remainder of the
division of $n$ by $u$, $s'= {\rm min} \{ s,u-s \}$, and
\begin{displaymath}
  M(q) = \prod^{s'}_{i=1} \left( \prod^\infty_{j=1}
       (1-q^{(j-1)u+i}) (1-q^{ju-i})\right)^{s'-i} \, .
\end{displaymath}
In particular, the ``extra factor'' is equal to $\pm a (t)$ if
$s'=1$, and to $\pm a (t) \eta (\tau ) /\eta (u \tau)$ if $s'=0$.

\end{remark}

\subsection{Conjectures.}
\label{sec:3.4}~~Let $\fg$ be a simple Lie algebra, and denote by
$E_0$ the set of all non-principal exceptional nilpotent orbits of $\fg$.  Let
$h$ be the Coxeter number of~$\fg$, and let $I_0$ denote the set
of integers~$j$, such that $1 \leq j <h$ and $(j, \ell) =1$,
where $\ell (= 1,2 \hbox{ or }3)$ is the lacety of~$\fg$.

\vspace{2ex}

\textbf{Conjecture E.}~~There exists an order preserving
map $\varphi : E_0 \to I_0$, such that all exceptional
pairs $(k,f)$, for which $f$ is not principal, are as follows:
$f $ lies in an orbit from $E_0$, $k=-h^\vee +
\frac{p}{\varphi (f)}$, where $p \in \ZZ$, $p \geq h^\vee$ and
$(p,\varphi (f))=1$.

\vspace{2ex}

Recall that all adjoint nilpotent orbits of $\fg = so_N$, $N$ odd
(resp.~$sp_N$, $N$~even) are intersections of nilpotent orbits of
$\sl_N$ with $\fg$, if they are non-empty, and these
intersections are non-empty iff all even (resp.~odd) parts of the
corresponding partition of $N$ have even multiplicity.  In the
case $\fg = so_N$, $N$~even, the answer is the same, as for $\fg
= so_N$, $N$~odd, except that in cases when all parts of the
partition are even, the intersection consists of two orbits,
permuted by an outer automorphism of $so_N$, which we shall
identify.  Furthermore, a nilpotent orbit of $\fg = so_N$ or
$sp_N$ is of principal type if either the multiplicities of all
parts of the partition of $N$ are even, or else one has
respectively:

\begin{list}{}{}
\item $\fg = sp_N$, and exactly one even part has odd multiplicity,

\item $\fg = so_N$, $N$ odd, and exactly one odd part has odd
  multiplicity,

\item $\fg=so_N$, $N$ even, and exactly two distinct parts, one of
  which is~$1$, have odd multiplicity.

\end{list}

\textbf{Conjecture F.}~~Let $\fg$ be $so_N$ or $sp_N$.  Let $m$
be a positive integer, and denote by $N_m$ the set of matrices $\{
X \in {\rm Mat}_N \cap \fg | X^m =0 \}$, and let $\O_m$ be the
adjoint orbit, open in $N_m$ (such an orbit is unique if $\fg
\neq so_{4n}$ or $m$ is odd).  Denote by $F_0$ the set of all
non-principal adjoint orbits $\O_m$ of principal type with $m$ odd.  
Then $F_0 \subset E_0$ and $\varphi (\O_m) = m$ for $\O_m
\in F_0$.

\vspace{2ex}

\textbf{Conjecture G.}~~

\alphaparenlist
\begin{enumerate}
\item 
If $\fg = so_{2n+1}$ or $sp_{4n+2}$, then $E_0 = F_0$.

\item 
If $\fg = sp_{4n}$, then $E_0 = \{F_0 \, , \, \O_{2n,2n}\} $,
where $\O_{2n,2n}$ denotes the nilpotent orbit, corresponding to
the partition $4n=2n+2n$, and $\varphi (\O_{2n,2n}) = 2n+1$.

\item 
If $\fg = so_{2n}$, then $F_0$ consists of all exceptional
nilpotent orbits $\O$ with $\varphi (\O)$ odd.

\end{enumerate}

We checked conjectures~E, F and G for $N \leq 13$.

\begin{examples}
\alphaparenlist
\begin{enumerate}
\item 
  $\fg = so_8$.  Then $ E_0 = \{ F_0 \, , \, \O_{3,2,2,1} \}$,
  where $\O_{3,2,2,1}$ is the nilpotent orbit, corresponding to
  the partition $8=3+2+2+1$, and $\varphi (\O_{3,2,2,1}) =2$.
  (Note that $\O_{3,2,2,1}$ is not open in $N_3$.)

\item 
  $\fg = so_{10}$.  Then $ E_0 = \{ F_0\, , \, 
  \, \O_{3,2,2,1,1,1}\}$, and $\varphi
  (O_{3,2,2,1,1,1})=2$.  (Note that $\O_{3,2,2,1,1,1}$ is not open in $N_3$.)

\item 
$\fg = so_{12}$.  Then $E_0 = \{ F_0, \O_{5,3,3,1}  
,\O_{3,2,2,2,2,1},\O_{3,2,2,1,1,1,1,1}\}$,
and $ \varphi (\O_{5,3,3,1})=4$, \break $\varphi (\O_{3,2,2,2,2,1})= 
\varphi (\O_{3,2,2,1,1,1,1,1})=2$.   
(Note that $O_{5,3,3,1}$ is
not open in $N_5$ and 
$\O_{3,2,2,2,2,1}$, $\O_{3,2,2,1,1,1,1,1}$ 
are not open in $N_3$.)

\item 
$\fg = G_2$.  Then the only non-zero non-principal exceptional
nilpotent orbit is the orbit~$\O$ of a root vector $e_\alpha$,
where $\alpha$ is a short root, and $\varphi (\O) =2$.
This is the simplest case when the extra factor in (\ref{eq:2.6})
does depend on $z$: 
\begin{displaymath}
  \frac{C (\tau ,z,t)}{\psi (\tau ,z,t)} = e^{-4\pi i t}
  \frac{f(\tau,z)}{f (2\tau ,2z)}\, ,
\end{displaymath}
where $f(\tau ,z)$ is defined by (\ref{eq:2.7}).

\end{enumerate}

\end{examples}

\vspace{2ex}


\subsection{Corrections to \cite{KRW} and \cite{KW4}.}
\label{sec:3.5}

\arabiclist

\begin{enumerate}
\item 
The last sentence of Proposition~3.3 in \cite{KW4} should be
changed as follows (cf.~Theorem~\ref{th:2.3} of the present
paper):

Then $\ch_{H^{\tw}(M)} (\tau ,h)$ is not identically zero iff the
character of the $\hat{\fg}^{\tw}$-module~$M$ has a pole at all
hyperplanes $\alpha =0$, where $\alpha$ are positive even real
roots, satisfying the following three properties:

Similar change should be made in Theorem~3.2 of \cite{KRW}.
\item 
The first factor on the right of formula~(3.14) in \cite{KW4}
should be replaced by the following expression:
\begin{eqnarray*}
 e^{2\pi i\tau( 
       \frac{(\bar{\Lambda} | \bar{\Lambda} + \bar{\hat{\rho}}^{tw})}
         {2 (k+h^\vee)} + s_{\fg} + s_{\rm ch}
    + s_{\rm ne} + (\Lambda | \Lambda_0))}
   e^{\pi i( \sum_{\alpha \in S_{1/2}} (-1)^{p(\alpha)}
  s_\alpha \alpha (h) - 2 \sum_{\alpha \in S_+} (-1)^{p(\alpha)}
     s_\alpha \alpha (h))}\, .
\end{eqnarray*}

\item 
The convergence  of characters argument in the proof of
Theorem~3.1 in \cite{KRW} should be changed as in the proof of
Section~\ref{sec:2.2} of the present paper.  In particular, one
should add there the assumption that $f$ is of principal type.

\end{enumerate}

\subsection{Appendix:  on representations of $W^{\rm fin} (\fg ,f)$.}
\label{sec:3.6}

~~The associative algebra $W^{\rm fin} (\fg ,f)$ is obtained by
quantum Hamiltonian reduction as follows \cite{P}, \cite{GG}.
Let $\fg_{\geq 1} = \oplus_{j \geq 1} \fg_j$ and let $\chi :
\fg_{\geq 1} \to \CC$ be a homomorphism, defined by $\chi (a) =
(f|a)$.  Extend $\chi$ to a homomorphism $\chi : U (\fg_{\geq 1})
\to \CC$ and let $I_\chi \subset U (\fg_{\geq 1})$ be the kernel
of $\chi$.  The subspace $I_\chi$ is invariant with respect to
the adjoint action of $\fg_+$, hence the left ideal $U (\fg)
I_\chi$ of $U (\fg)$ is $\ad \fg_+$-invariant as well, and we let
\begin{displaymath}
  W^{\rm fin} (\fg ,f) = (U(\fg)/U (\fg)  I_\chi)^{\ad \fg_+}\, .
\end{displaymath}
It is easy to check that the product on $U (\fg)$ induces a
well-defined product on $W^{\rm fin} (\fg ,f)$. We thus obtain the
\emph{finite $W$-algebra}, associated to the pair $(\fg ,f)$.

For the purpose of representation theory it is more convenient to
use an equivalent definition of  $W^{\rm fin} (\fg ,f)$, similar
to that of $W^k (\fg ,f)$ (equivalence of these two definitions
was proved in the appendix to \cite{DK}).  Let ${\rm Cl} (\fg ,f)$
denote the Clifford--Weil algebra on the vector superspace
$A_{\ch} \oplus A_{\ne}$  with the bilinear form $(\,.\, |\, .\,) \oplus
\langle \, . \, | \, . \, \rangle$.  Let $\C (\fg ,f) = U (\fg )
\otimes {\rm Cl} (\fg ,f)$, and introduce the following odd
element of $\C (\fg ,f)$:
\begin{displaymath}
  d= \sum_{\alpha \in S_+} (u_\alpha + (f |u_\alpha)) 
     \varphi^\alpha + \sum_{\alpha \in S_{1/2}} \varphi^\alpha
     \phi_\alpha - \frac12 \sum_{\alpha ,\beta , \gamma \in S_+}
       c^\gamma_{\alpha ,\beta} \varphi_\gamma \varphi^\alpha
         \varphi^\beta \, .
\end{displaymath}
(This element is obtained from $d(z)$ in Section~\ref{sec:2.1} by
dropping $z$ and the signs of normally ordered product.)  Then
$d^2 = \frac12 [d,d] =0$, and we have \cite{DK}:
\begin{displaymath}
   W^{\rm fin} (\fg ,f) = H (\C (\fg ,f), \ad \, d)\, .
\end{displaymath}
As in \cite{KW3} or \cite{DK}, one proves that this homology is
concentrated in $0$\st{th} degree with respect to the charge
decomposition, defined by 
\begin{equation}
  \label{eq:3.6}
  {\rm charge}\,  u_\alpha = {\rm charge}\, \phi_\alpha = 0 \, , \, 
   {\rm charge}\,  \varphi_\alpha = -{\rm charge}\, 
        \varphi^\alpha = 1 \, . 
\end{equation}

Recall the construction of $\Delta^{\rm new}_+ \subset \Delta$
in Section~\ref{sec:2.2}, and let
\begin{displaymath}
  R^\pm_j = \{  \alpha \in \pm \Delta^{\rm new}_+ | \alpha (x)=j \}
    \, , \quad R^+_{>0} = \cup_{j>0} R^+_j \, , \quad 
       R^-_{>0} = \cup_{j>0} R^-_j \, .
\end{displaymath}
Let $S (\fg ,f)$ denote the irreducible ${\rm Cl} (\fg
,f)$-module, generated by the even vector $\vac$, subject to the
conditions:
\begin{displaymath}
  \phi_\alpha \vac = 0 \hbox{ \,\, if \,\,} \alpha \in R^+_{1/2}\,,
      \quad \varphi_\alpha \vac = 0 \hbox{\, \, if \,\,}
         \alpha \in R^+_{>0} \, \quad \varphi^\alpha \vac =0 
            \hbox{\,\, if \,\,} \alpha \in R^-_{>0}\, .
\end{displaymath}

As in Section~\ref{sec:2.2}, given a $\fg$-module $M$, we can
construct the complex
\begin{displaymath}
  (\C (M) = M \otimes S (\fg ,f),d)\, .
\end{displaymath}
It is a $\ZZ$-graded $\C (\fg ,f)$-module $\C (M) = \oplus_{j \in
\ZZ}\,  \C {}_j(M)$, where this charge decomposition extends
(\ref{eq:3.6}) by letting charge $M =0$, charge $\vac =0$.  We
thus obtain for each $j \in \ZZ$ a functor $H_j$ from the
category of $\fg$-modules to the category of $ W^{\rm fin}(\fg
,f)$-modules, given by:
\begin{displaymath}
  H_j (M) : = H_j (\C (M) , d)\, .
\end{displaymath}
As in Section~\ref{sec:2.2}, we prove the following proposition.

\begin{proposition}
  \label{prop:3.4}

Assume that $f$ is a nilpotent element of principal type.  Let
$M$ be a highest weight $\fg$-module with respect to $\Delta^{\rm
  new}_+$, so that $\ch_M (h) = \frac{n_M (h)}{\prod_{\alpha \in
    \Delta^{\rm new}_+} (1-e^{-\alpha (h)})}$, $h \in \fh$.  Then
the Euler--Poincar\'e character $\ch_{H(M)}$ of the $W^{\rm fin} (\fg
,f)$-module $H(M)=\oplus_{j \in \ZZ} H_j (M)$ is given by:
\begin{equation}
  \label{eq:3.7}
  \ch_{H(M)} (h) = \frac{n_M (h)}{\prod_{\alpha \in R^+_0 \cup
      R^+_{1/2}} (1-e^{-\alpha (h)})}\, , \quad h \in \fh^f\, .
\end{equation}

\end{proposition}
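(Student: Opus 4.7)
The plan is to mirror, in the purely Lie-algebraic setting, the supertrace computation of Section~\ref{sec:2.2} with its current-algebra ingredients stripped away. First I would apply the Euler--Poincar\'e principle to rewrite
\begin{displaymath}
\ch_{H(M)}(h) = \str_{\C(M)}\, e^h,
\end{displaymath}
the supertrace being taken along the charge grading of $\C(M) = M\otimes S(\fg,f)$; this is legitimate since $d$ commutes with the $\fh^f$-action by the finite-dimensional analogue of Lemma~\ref{lem:2.1}(b). Because $\fh^f$-weight spaces on $\C(M)$ can be infinite-dimensional, I would employ the Section~\ref{sec:2.2} device of first evaluating at $h+\epsilon x$ and then letting $\epsilon\to 0$. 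The justification proceeds exactly as in Lemmas~\ref{lem:2.4}--\ref{lem:2.5}: the ascending filtration $F_p$ on $\C(M)$ by the first bidegree has $d(F_p)\subset F_{p+1}$ with $\Gr d=d^{\rm st}$, and choosing $h'\in\fh^f$ with $\alpha(h')>0$ for all $\alpha\in\Delta^0\cap\Delta^{\rm new}_+$ (available since $f$ is of principal type) guarantees that common $(h_0,h',x)$-eigenspaces of $E_1$ are finite-dimensional, so that the spectral sequence converges and Euler--Poincar\'e applies.

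Next I would factorise $\str_{S(\fg,f)}\, e^{h+\epsilon x}$ over the generators. Each charged-fermion pair $(\varphi_\alpha,\varphi^\alpha)$ with $\alpha\in R^+_{>0}$ contributes $1-e^{-\alpha(h+\epsilon x)}$ (from the odd state $\varphi^\alpha\vac$ of weight $-\alpha$), while each with $\alpha\in R^-_{>0}$ contributes $1-e^{\alpha(h+\epsilon x)}$ (from $\varphi_\alpha\vac$ of weight $\alpha$); reindexing the second product by $\alpha\mapsto-\alpha$ and using $-R^-_{>0}=\{\alpha\in\Delta^{\rm new}_+:\alpha(x)<0\}$, the combined charged-fermion contribution becomes
\begin{displaymath}
\prod_{\alpha\in\Delta^{\rm new}_+,\,\alpha(x)\neq 0}(1-e^{-\alpha(h+\epsilon x)}) \;=\; \prod_{\alpha\in\Delta^{\rm new}_+}(1-e^{-\alpha(h+\epsilon x)}) \Big/ \prod_{\alpha\in R^+_0}(1-e^{-\alpha(h+\epsilon x)}).
\end{displaymath}
The neutral Weyl-algebra factor on $A_{\ne}$, whose vacuum is annihilated by $\{\phi_\alpha:\alpha\in R^+_{1/2}\}$, is a bosonic Fock module generated freely by $\{\phi_\alpha\vac:\alpha\in R^-_{1/2}\}$, so its trace equals $\prod_{\alpha\in R^-_{1/2}}(1-e^{\alpha(h+\epsilon x)})^{-1}$.

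Third, upon taking $\epsilon\to 0$ and restricting to $\fh^f$, the charged-fermion numerator cancels the denominator of $\ch_M(h)$ outside $R^+_0$, leaving the preliminary expression $n_M(h)/\prod_{\alpha\in R^+_0}(1-e^{-\alpha(h)})$ times the neutral-boson factor. For the latter I invoke Lemma~\ref{lem:2.10}, whose translation to $\Delta^{\rm new}_+$ yields the multiset equality $R^+_{1/2}|_{\fh^f}=R^+_{-1/2}|_{\fh^f}=-R^-_{1/2}|_{\fh^f}$; this converts $\prod_{\alpha\in R^-_{1/2}}(1-e^{\alpha(h)})^{-1}$ into $\prod_{\alpha\in R^+_{1/2}}(1-e^{-\alpha(h)})^{-1}$ on $\fh^f$, and the two remaining denominators combine to give exactly (\ref{eq:3.7}).

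The main obstacle is the convergence argument for Euler--Poincar\'e on $\fh^f$ without a conformal-weight grading. In the filtration, $d$ contains the summands $(f|u_\alpha)\varphi^\alpha$ and $\varphi^\alpha\phi_\alpha$, which raise the first bidegree by $0$ and $1/2$ respectively rather than by $1$, so one must verify, as in Lemma~\ref{lem:2.5}, that each $F_p\cap \C(M)$ still has finite-dimensional common $(h_0,h',x)$-eigenspaces so that the spectral sequence $E_1\Rightarrow H(M)$ collapses at the page dictated by $d^{\rm st}$. Since $M$ has finite-dimensional $\fh$-weight spaces and $S(\fg,f)$ is freely generated by finitely many operators of bounded bidegree, this is a direct adaptation of the arguments of Section~\ref{sec:2.2} with no new input required.
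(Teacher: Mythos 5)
Your proposal is correct and is exactly the argument the paper intends: the paper's entire proof is the sentence ``As in Section~\ref{sec:2.2}, we prove the following proposition,'' and you have faithfully transplanted that computation to the finite setting — Euler--Poincar\'e principle with the $h+\epsilon x$ regularization and the $F_p$-filtration/spectral-sequence convergence argument of Lemmas~\ref{lem:2.4}--\ref{lem:2.5}, factorization of the supertrace over the Clifford--Weil generators, cancellation against the denominator of $\ch_M$, and the Lemma~\ref{lem:2.10} identity $R^+_{1/2}|_{\fh^f}=-R^-_{1/2}|_{\fh^f}$ to convert the neutral-boson factor. No gaps beyond the level of detail the paper itself leaves implicit.
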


We rewrite formula (\ref{eq:3.7}), using the Kazhdan--Lusztig
formula for $n_{L (w.\Lambda)}$, where $w \in W$ and $\Lambda
+\rho$ is an integral regular anti-dominant weight:
\begin{equation}
  \label{eq:3.8}
  n_{L (w.\Lambda)} = \sum_{y \in W} \epsilon (y) \epsilon (w)
     P_{y,w} (1) e^{y.\Lambda}\, , 
\end{equation}
where $P_{y,w} (q)$ are the Kazhdan--Lusztig polynomials.

Since $f$ is a nilpotent element of principal type it can be
written as a sum of root vectors, attached to roots
$\gamma_1,\ldots ,\gamma_s$, where $\gamma_i-\gamma_j \notin
\Delta \cup \{ 0 \}$ for $i \neq j$, and $\gamma_i |_{\fh^f}
=0$.  Denote by $W^f$ the subgroup of $W$, generated by
reflections in the $\gamma_i$, $i=1,\ldots ,s$.

Since $e^{s.\lambda} |_{\fh^f} = e^\lambda |_{\fh^f}$ for $s \in W^f$,
we obtain from (\ref{eq:3.8}):
\begin{align*}
  n_{L (w.\Lambda)} |_{\fh^f} 
     &= \sum_{\substack{y \in W^f
           \backslash W\\ s \in W^f}} \epsilon (sy) \epsilon (w)
            P_{sy,w} (1) e^{(sy).\Lambda} |_{\fh^f}\\[1ex]
     &= \sum_{y \in W^f \backslash W} \epsilon (y) \epsilon (w)
        \sum_{s \in W^f} \epsilon (s) P_{sy,w} (1) e^{y.\Lambda}
          |_{\fh^f}\, .
\end{align*}

We thus obtain the following proposition.

\begin{proposition}
  \label{prop:3.5}

Assume that $f$ is a nilpotent element of principal type.  Let
$\Lambda + \rho$ be an integral regular anti-dominant weight and
$L (w.\Lambda)$ an irreducible highest weight $\fg$-module with
respect to $\Delta^{\rm new}_+$.  Then
\begin{equation}
  \label{eq:3.9}
  \ch_{H (L (w.\Lambda))} (h) =
      \frac{\sum_{y \in W^f \backslash W} \epsilon (y) \epsilon
        (w) \tilde{P}_{y,w} (1) e^{(y.\Lambda)(h)}}
      {\prod_{\alpha \in R^+_0 \cup R^+_{1/2}} (1-e^{-\alpha (h)})},
      \, \quad h \in \fh^f \, , 
\end{equation}
where $\tilde{P}_{y,w} (q) = \sum_{s \in W^f} \epsilon (s)
P_{sy,w} (q)$.

\end{proposition}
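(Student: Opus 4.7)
The plan is to combine Proposition 3.4 with the Kazhdan--Lusztig multiplicity formula (3.8) and exploit the fact that $W^f$ acts trivially on $\fh^f$, reducing the sum over $W$ to a sum over coset representatives $W^f\backslash W$ with a $W^f$-symmetrized polynomial replacing $P_{y,w}$.

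First I would apply Proposition 3.4 to $M=L(w.\Lambda)$. Since $L(w.\Lambda)$ is a highest weight $\fg$-module with respect to $\Delta^{\new}_+$, formula (3.7) gives
\begin{displaymath}
\ch_{H(L(w.\Lambda))}(h)=\frac{n_{L(w.\Lambda)}(h)}{\prod_{\alpha\in R^+_0\cup R^+_{1/2}}(1-e^{-\alpha(h)})},\qquad h\in \fh^f,
\end{displaymath}
so the content of the proposition is entirely about rewriting the numerator $n_{L(w.\Lambda)}\rvert_{\fh^f}$ using (3.8).

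Next, I would verify the key restriction identity: for every $s\in W^f$ and every weight $\mu\in\fh^*$ one has $(s.\mu)\rvert_{\fh^f}=\mu\rvert_{\fh^f}$. The group $W^f$ is generated by the reflections $r_{\gamma_1},\dots,r_{\gamma_s}$, and by the choice of $\gamma_i$ one has $\gamma_i\rvert_{\fh^f}=0$; since $r_{\gamma_i}(h)=h-\gamma_i(h)\gamma_i^\vee$, each generator fixes $\fh^f$ pointwise, hence so does every $s\in W^f$. Dualising, $(s\mu)(h)=\mu(s^{-1}h)=\mu(h)$ for $h\in\fh^f$, and consequently $(s.\mu)\rvert_{\fh^f}=(s(\mu+\rho)-\rho)\rvert_{\fh^f}=\mu\rvert_{\fh^f}$. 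In particular $e^{(s.\mu)}\rvert_{\fh^f}=e^{\mu}\rvert_{\fh^f}$.

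Now I would substitute (3.8) into the numerator and partition $W$ into $W^f$-cosets. Writing every $x\in W$ uniquely as $x=sy$ with $s\in W^f$ and $y$ running over a chosen set of representatives of $W^f\backslash W$, and using $\epsilon(sy)=\epsilon(s)\epsilon(y)$ together with the identity from the previous paragraph, I get
\begin{displaymath}
n_{L(w.\Lambda)}\rvert_{\fh^f}
=\epsilon(w)\sum_{y\in W^f\backslash W}\sum_{s\in W^f}\epsilon(s)\epsilon(y)P_{sy,w}(1)\,e^{(sy).\Lambda}\rvert_{\fh^f}
=\epsilon(w)\sum_{y\in W^f\backslash W}\epsilon(y)\tilde P_{y,w}(1)\,e^{y.\Lambda}\rvert_{\fh^f},
\end{displaymath}
which is (3.9) after dividing by the denominator. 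The only point requiring care is that the inner sum $\sum_{s\in W^f}\epsilon(s)P_{sy,w}(1)$ is independent of the choice of coset representative $y$ up to the factor $\epsilon(y)$, but this follows from the same $\epsilon$-multiplicativity that lets us group the sum in the first place. No genuine obstacle arises; the proof is essentially the reorganisation already displayed immediately preceding the statement of the proposition.
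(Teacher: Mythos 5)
Your proposal is correct and follows exactly the paper's own argument: apply Proposition 3.4 to $M=L(w.\Lambda)$, substitute the Kazhdan--Lusztig formula (3.8) for the numerator, and regroup the sum over $W$ into $W^f$-cosets using $e^{s.\lambda}|_{\fh^f}=e^{\lambda}|_{\fh^f}$ for $s\in W^f$. Your explicit verification that $W^f$ fixes $\fh^f$ pointwise is a detail the paper leaves implicit, but otherwise the two derivations coincide.
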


The following conjecture is a ``finite'' analogue of
Conjecture~B.

\vspace{2ex}

\textbf{Conjecture H.}

\alphaparenlist
\begin{enumerate}
\item 
If $M$ is an irreducible highest weight $\fg$-module (with
respect to $\Delta^{\rm new}_+$), then $H(M) = H_0(M)$, and this
is either an irreducible $W^{\rm fin} (\fg ,f)$-module, or zero.


\item 
Suppose that $f$ is of principal type and $H (L (\Lambda)) \neq
 0$.  Then the  
$W^{\rm fin} (\fg ,f)$-modules 
$H (L (\Lambda))$ and
$H (L (\Lambda'))$ are isomorphic iff $\Lambda' = y .\Lambda$
where $y \in W^f$.

\item 
All finite-dimensional irreducible $W^{\rm fin} (\fg ,f)$-modules are
contained among the $H_0 (L (\Lambda))$.

  \end{enumerate}

In the case when $f$ is a principal nilpotent in a Levi subalgebra, 
it was conjectured in \cite{DV} that the right-hand side of formula 
(\ref{eq:3.9}) gives the character of an irreducible highest weight
$W^{\rm fin} (\fg ,f)$-module. Due to Proposition \ref{prop:3.5},
this conjecture (in the more general case of a nilpotent element 
of principal type) follows from Conjecture H(a).


\end{document}